\newcommand{\E}{\mathbb{E}}
\newcommand{\Prob}{\mathbb{P}}
\newcommand{\Cov}{\mathrm{Cov}}
\newcommand{\indep}{\rotatebox[origin=c]{90}{$\models$}}
\newcommand{\Lopt}{L^{*}}
\DeclareMathOperator*{\argmax}{arg\,max} % thin space, limits underneath in displays
\newcommand{\IF}[1]{\mathbb{IF}[#1]}
\newcommand{\ff}[2]{\frac{#1}{#2}}
\newcommand{\I}[1]{\mathbb{1}\!\left\{ #1 \right\}}
\newcommand{\ka}{\kappa}
\newcommand{\dy}[1]{\text{d}#1}
\newcommand{\pto}{\overset{P}{\to}}
\newcommand{\dto}{\overset{\mathcal{D}}{\to}}
\newcommand{\hka}{\hat\ka}
\newcommand{\hL}{\hat{L}}
\newcommand{\GP}{\mathcal{GP}}
\newcommand{\G}{G}
\newcommand{\CIh}{\text{CI}^{\text{H}}}
\newcommand{\tCIh}{\tilde{\text{CI}}^{\text{H}}}
\newcommand{\CIb}{\text{CI}^{\text{B}}}
\newcommand{\dka}{\ka^{\dagger}}
\newcommand{\dL}{{L}^{\dagger}}
\newcommand{\hdka}{\hat\ka^{\dagger}}
\newcommand{\hnddka}{\hat{\dot{\ka}}}
\newcommand{\hdL}{\hat{L}^{\dagger}}
\newcommand{\dv}{{\varrho}}
\newcommand{\ddv}{{\dot{\varrho}}} 
\newcommand{\hdv}{\hat{\varrho}}
\newcommand{\hddv}{\hat{\dot{\varrho}}} % first d for dagger section, second d for dot.
\newcommand{\ct}{\text{ct}}
\DeclarePairedDelimiter\ceil{\lceil}{\rceil}
\DeclarePairedDelimiter\floor{\lfloor}{\rfloor}
\newtheorem{theorem}{Theorem}
\newtheorem{proposition}{Proposition}
\newtheorem{lemma}{Lemma}
\newtheorem{assumption}{Assumption}
\newtheorem{remark}{Remark}
\newenvironment{assumbis}[1]
  {%
   \addtocounter{assumption}{-1}%
   \begin{assumption}}
  {\end{assumption}}
\begin{document}

%% Here are the title, author names and addresses
\title{Beyond Fixed Restriction Time: Adaptive Restricted Mean Survival Time Methods in Clinical Trials}

\author[a]{Jinghao Sun\thanks{\texttt{jinghao.sun@pennmedicine.upenn.edu}}}
\author[a]{Douglas E. Schaubel\thanks{\texttt{douglas.schaubel@pennmedicine.upenn.edu}}}
\author[a,b]{Eric J. Tchetgen Tchetgen\thanks{\texttt{ett@wharton.upenn.edu}}}

\affil[a]{Department of Biostatistics, Epidemiology, \& Informatics, University of Pennsylvania}
\affil[b]{Department of Statistics and Data Science, University of Pennsylvania}

\maketitle

\begin{abstract}
Restricted mean survival time (RMST) offers a compelling nonparametric alternative to hazard ratios for right-censored time-to-event data, particularly when the proportional hazards assumption is violated. By capturing the total event-free time over a specified horizon, RMST provides an intuitive and clinically meaningful measure of absolute treatment benefit. Nonetheless, selecting the restriction time $L$ poses challenges: choosing a small $L$ can overlook late-emerging benefits, whereas a large $L$, often underestimated in its impact, may inflate variance and undermine power. We propose a novel data-driven, adaptive procedure that identifies the optimal restriction time $L^*$ from a continuous range by maximizing a criterion balancing effect size and estimation precision. Consequently, our procedure is particularly useful when the pattern of the treatment effect is unknown at the design stage. We provide a rigorous theoretical foundation that accounts for variability introduced by adaptively choosing $L^*$. To address nonregular estimation under the null, we develop two complementary strategies: a convex-hull-based estimator, and a penalized approach that further enhances power. Additionally, when restriction time candidates are defined on a discrete grid, we propose a procedure that surprisingly incurs no asymptotic penalty for selection, thus achieving oracle performance. Extensive simulations across realistic survival scenarios demonstrate that our method outperforms traditional RMST analyses and the log-rank test, achieving superior power while maintaining nominal Type I error rates. In a phase III pancreatic cancer trial with transient treatment effects, our procedure uncovers clinically meaningful benefits that standard methods overlook. Our methods are implemented in the \texttt{R} package \texttt{AdaRMST}.\\

\noindent \textbf{Keywords}: Non-Proportional Hazards, Oncological Clinical Trials, Penalized Estimation, Post-Selection Inference, Time-to-Event Analysis.

\end{abstract}

\section{Introduction}

In randomized controlled trials evaluating new medications or procedures, right-censored time-to-event outcomes are crucial for assessing the efficacy of interventions over time. These outcomes, such as mortality, relapse, and disease progression, provide dynamic insights into treatment effects across different time points \citep{wilson2015outcomes}. Various statistical metrics are employed to analyze such data, including survival probabilities, hazard rates, median survival times, and Restricted Mean Survival Time (RMST) \citep{andersen1993statistical}.

Traditionally, the hazard ratio (HR) derived from the Cox proportional hazards model \citep{cox1972regression} has been the predominant measure for summarizing treatment effects in time-to-event analyses. The HR relies on the proportional hazards (PH) assumption, which posits that the hazard functions of the treatment groups remain proportional over time.
However, violations of this assumption can arise due to various mechanisms. For example, non-proportional hazards (non-PH) occur when the treatment effect changes over time, such as when transient benefits diminish as the study progresses, or when a delayed therapeutic effect is observed. For instance, in the context of COVID-19, vaccines often lead to an initial strong immune response that rapidly reduces the risk of infection or severe disease shortly after vaccination. However, over time, this protective effect can wane, especially against new variants of the virus \citep{lin2024durability}.  
In oncology, certain immunotherapies may take time to reach their full efficacy, leading to a delayed treatment effect \citep{ferris2016nivolumab, hoos2010improved}. When the proportional hazards assumption is violated—such as in the cases described above—the HR can become difficult to interpret, potentially misleading or uninformative \citep{hernan2010hazards, uno2014moving, stensrud2020test, vansteelandt2024assumption}.

RMST, by contrast, has emerged as a robust alternative that does not require the PH assumption \citep{royston2011use, zhang2012double, tian2018efficiency}. The RMST represents the mean event-free time up to a specified restriction time $L$, offering an intuitive and clinically meaningful measure of treatment efficacy. Its straightforward interpretation makes it appealing to both clinicians and patients \citep{royston2013restricted, pak2017interpretability}. 

A critical aspect of RMST analysis is the choice of the restriction time, $L$, as the RMST difference between treatment groups depends heavily on this parameter. Selecting an appropriate $L$ is essential for reliable inference and meaningful clinical interpretation. Domain knowledge, clinical relevance, and regulatory guidelines sometimes guide the choice of $L$ \citep{eaton2020designing}. However, in practice, such information may often be limited at the design stage, especially when the pattern of the treatment effect is uncertain. In such cases, \citet{hasegawa2020restricted} suggest selecting $L$ based on the maximum follow-up time ``so that the RMST approximates the mean survival time'', while \citet{tian2020empirical} discuss a data-dependent approach to selecting $L$ based on sample quantiles. e.g. 95th percentile of observed follow-up times.

In this paper, we propose a novel, data-driven methodology that builds on the idea that selecting $L$ involves a trade-off between capturing the treatment effect and maintaining statistical precision. On one hand, choosing a small $L$ may result in insufficient events or miss detecting a late-emerging treatment effect, leading to an underestimation of the treatment's efficacy. 
On the other hand, because the variance (and the commonly used variance estimators) of the RMST estimate are monotone increasing functions of $L$ as accumulating error compounds over time, choosing a large $L$ inevitably inflates the variance. The resulting wider confidence intervals may diminish statistical power and hinder detection of significant treatment differences. 

Our methodology seeks to balance this trade-off by optimally selecting the restriction time, $\Lopt$, from a set of candidate values. Specifically, we maximize a criterion that accounts for both the magnitude of the RMST difference and its variability. This is achieved by maximizing the ratio of the absolute RMST difference to its standard deviation, thereby improving the ability to detect meaningful treatment effects under various survival difference patterns. Consequently, our procedure is particularly useful when the pattern of the treatment effect is unknown at the design stage.
Moreover, our method provides a single-number summary of the treatment effect, which may be preferable for researchers seeking a concise summary rather than an entire survival curve. Specifically, it ensures coherence between hypothesis testing and estimation regardless of whether PH holds: the null hypothesis is rejected if and only if the confidence interval for the treatment effect excludes zero. 

Our methodology contributes to the literature in three significant ways: %
(1) We introduce a novel adaptive RMST estimation method that dynamically selects the optimal restriction time, $\Lopt$, from a \emph{continuous-time} range by maximizing a criterion that balances treatment effect estimation and statistical precision. This approach is underpinned by rigorous theoretical justification, addressing both the uncertainty in $L$ selection and its impact on treatment effect estimation. Recognizing the nonregularity of the estimation problem under the null hypothesis, we propose two robust solutions: (a) a convex-hull-based method ({AdaRMST.hulc}) inspired by recent developments \citep{kuchibhotla2024hulc}, and (b) a penalized method ({AdaRMST.ct}) that regularizes estimation under both null and alternative hypotheses, achieving higher power.
(2) For cases where restriction time candidates form a \emph{discrete-time} grid, informed by domain knowledge, we develop the {AdaRMST.dt} estimator. Notably, we prove that pre-specifying a fixed grid yields an adaptive estimator with optimal power, equivalent in large samples to an oracle with prior knowledge of $L^*$. This surprising result underscores the practical advantages of the discrete-time framework.
(3) We bridge the continuous- and discrete-time frameworks via an infill asymptotic regime (``growing grid''), offering practical guidance on grid design and candidate selection for $L$.

We conducted extensive simulation studies across various sample sizes and survival scenarios potentially of practical interest to compare our proposed methods against standard approaches, including canonical RMST analysis, the log-rank test, weighted log-rank tests such as MaxCombo, and recent RMST-based omnibus tests. Results demonstrate that:
{AdaRMST.ct} achieves a strong balance of power (comparable to MaxCombo) and coverage while maintaining well-controlled type I error rates.
{AdaRMST.dt} consistently delivers the best power across scenarios due to the cost-free selection of $L$.

To illustrate its practical utility, we applied our method to a recent oncology clinical trial by \citet{tempero2023adjuvant} comparing the efficacy of adjuvant nab-paclitaxel plus gemcitabine versus gemcitabine alone in 866 patients with surgically resected pancreatic ductal adenocarcinoma. Despite a clear separation of survival curves in the first 18 months (indicating a transient treatment effect), the trial was deemed negative based on the log-rank test, failing to demonstrate sufficient evidence to change the standard of care. Given the difficulty of anticipating treatment effect patterns in advance, adaptive methods like {AdaRMST.ct}, which do not rely on the proportional hazards assumption, offer a powerful and more robust alternative. Reconstructing and reanalyzing the trial data with {AdaRMST.ct} revealed significant clinical implications, showcasing the potential of our approach to uncover meaningful treatment effects that conventional methods might miss.

\subsection{Related Literature}

The challenge of selecting the restriction time $L$ in RMST analysis has been addressed in several recent studies through approaches that evaluate RMST differences at multiple time points or across the entire study duration. For instance, \citet{zhao2016restricted} proposed constructing simultaneous confidence bands for the RMST curve over time, providing a comprehensive view of the treatment effect across different restriction times. \citet{gu2023omnibus} introduced an RMST-based omnibus Wald test that evaluates differences at multiple pre-specified time points, offering a flexible alternative. Other researchers have proposed methods for selecting an optimal time point based on statistical criteria. For example, \citet{horiguchi2018flexible} developed statistical tests that compare RMST at finite pre-specified time points, and then select the time point that maximizes the test statistic. However, this approach lacks rigorous theoretical justification.
Our work builds on these foundations by introducing a set of theoretically justified, data-driven approaches for adaptively selecting $L$ from either a continuous range or a discrete grid of candidate values. These methods ensure robust treatment effect estimation and hypothesis testing while explicitly addressing the uncertainty associated with $L$-selection and the nonregularity of estimation under the null hypothesis.

\section{Method}
\label{sec:method}

This section outlines the methodology underlying our adaptive RMST framework. We begin by introducing notation, setting, and estimands. Then, we introduce the Kaplan-Meier plug-in estimator for the RMST difference and a criterion function for adaptively selecting the optimal restriction time $\Lopt$. Next, we explore the implications of the null hypothesis $\mathsf{H_0}$ highlighting the challenges posed by the non-uniqueness of $\Lopt$ and demonstrating that the treatment effect estimator $\hka^*$ remains robust. Finally, we describe a convex-hull-based approach for constructing asymptotically valid confidence intervals and hypothesis tests.

\subsection{Notation and setting}

The potential outcomes framework is employed in our study \citep{imbens2015causal, hernan2010causal}. We consider two treatment arms, denoted by $A = 0$ (control) or $A = 1$ (treatment), which are randomized at the baseline. The potential outcome under investigation is a time-to-event variable, $T^a$ ($a = 0,1$), representing the time to the event if an individual were assigned to the arm $a$. Let $S^a$ be the survival function of $T^a$.  Because we cannot observe both $T^0$ and $T^1$ for the same individual, one of these potential outcomes is always counterfactual. 

Given a specified restriction time $L$, the Restricted Mean Survival Time (RMST) for arm $a$ is defined as
\[\theta^a(L) = \E[T^a \wedge L],\]
which represents the expected survival time up to $L$ for arm $a$, where $x\wedge y \equiv \min(x,y)$.
The causal estimand of interest is the difference in RMST between the treatment and control arms, defined as
\[\ka(L) = \theta^1(L) - \theta^0(L).\]
In the presence of right-censoring, where the event time is censored by a random time $C^a$, the potential outcome one would then observe under treatment regime $a$ is thus $X^a \equiv T^a \wedge C^a, \Delta^a \equiv \I{T^a \le C^a},$
where $X^a$ is the observed time and $\Delta^a$ is the censoring indicator. 
The observed data for each participant is a triplet $Z \equiv (A, X, \Delta),$ 
where $A$ is the treatment arm, $X$ is the observed time, and $\Delta$ is the censoring indicator. The dataset consists of $n$ i.i.d subjects, with $n_1$ assigned to the treatment arm and $n_0 = n - n_1$ assigned to the control arm. We assume the allocation ratio satisfies $n_1/n \to \beta$ for some $\beta \in (0,1)$ as $n \to \infty$.

The set of candidate restriction times is denoted by the interval 
$\mathcal{L} = [L_{\min}, L_{\max}]$, which may be determined by clinical relevance and regulatory guidelines. When domain experts agree on a reasonable range for $L$, the interval may typically be narrow. However, in cases of uncertainty, $\mathcal{L}$ may be broader. In such scenarios, $L_{\min}$ is chosen as a small but clinically meaningful time, such as $L_{\min} = 3$ months, while $L_{\max}$ is customarily set to exceed the expected follow-up time for most participants, e.g. $L_{\max} = 5$ years. We make the following assumptions throughout the paper:
\begin{assumption}[Setting]
  We assume
  \begin{enumerate}
    \item  Causal Consistency: $(X,\Delta) = (X^A, \Delta^A)$.
    \item  Randomization: $A \indep (T^1, T^0), A \indep (C^1, C^0)$.
    \item Independent Censoring: $T^a\indep C^a\mid A$ for $a =0,1$.
    \item Positivity: $\Prob(T^a \ge L_{\max}) > 0$ and $\Prob(C^a \ge L_{\max}) > 0$ for $a = 0, 1$.
  \end{enumerate}
  \label{assum:setting}
\end{assumption}
Define a population criterion function $\mathbb{M}(L)$ of restriction time $L$:
\[ \mathbb{M}(L) \equiv \ff{\ka^2(L)}{\sigma^2(\hat\ka(L))},\]
where $\hat\ka(L)$ is a chosen estimator of $\ka(L)$ and $\sigma^2(\hat\ka(L))$ (or $\sigma^{\ka,2}_L$ for short) its asymptotic variance. Note that $\sigma^2(\hat\ka(L))$, defined in Equation \eqref{eq:var_ka} below, is a functional of the underlying data generating process and not a function of sample size $n$. Our estimand pair is $(L^*, \ka^*)$, with the primary focus on the treatment effect $\ka^*$, while $L^*$ aids in its interpretation:
\begin{equation}
  \begin{aligned}
    \ka^* \equiv \ka(L^*), \quad L^* &\equiv \argmax_{L \in \mathcal{L}}  \mathbb{M}(L).
  \end{aligned}
  \label{eq:estimand}
\end{equation}
Our estimands have a straightforward and clinically meaningful interpretation: on average, patients in the treatment arm are expected to live longer by $\ka^*$ unit of time (e.g., months or years) compared to those in the control arm, over the period up to the optimal restriction time $L^*$.

Why choose this criterion function? Let $\mathcal{N}(\mu, \sigma^2)$ denote Gaussian distribution and let $\dto$ denote convergence in distribution. Under standard asymptotic theory, we have
\begin{equation}
  \sqrt{n}(\hka_L - \ka_L) \dto \mathcal{N}(0, \sigma^2(\hat\ka(L))).
  \label{eq:var_ka}
\end{equation}
Consequently, the corresponding test statistic is
\[ \ff{\sqrt{n}\hka_L}{\sigma(\hat\ka(L))} \sim \ff{\sqrt{n}\ka_L}{\sigma(\hat\ka(L))} + \mathcal{N}(0, 1).\]
Then, we have the power of the test directly related to the signal-to-noise ratio, also sometimes called ``non-centrality parameter" of the Wald test statistic of no treatment effect, given by:
\[\ff{|\ka_L|}{\sigma(\hat\ka(L))} = \sqrt{\mathbb{M}(L)}.\]
Therefore, maximizing $\mathbb{M}(L)$ is equivalent to maximizing the power of the test under the alternative hypothesis. This provides a theoretically grounded justification for our choice of criterion function.

\subsection{Estimation}
Let $Z_i = (A_i, X_i, \Delta_i)$ be the data triplet for subject $i$. In arm $A = a$, let $T_{a(1)}, T_{a(2)}, \ldots$ represent the ordered distinct event times, with $d_{a(1)}, d_{(a)2}, \ldots$ being the number of events at each event time, allowing for ties. Define the at-risk process $Y^a(t) = \sum_{i = 1}^n \I{A_i = a, X_i \ge t}$, which counts the number of participants at risk just before time $t$ in arm $a$. 
Using these quantities, the Kaplan-Meier estimator \citep{kaplan1958nonparametric} of the survival function 
$S^a(t)$ is%
\[\hat{S}^a(t) \equiv \prod_{j:T_{a(j)} \le t} \left\{1 - \ff{d_{a(j)}}{Y^a(T_{a(j)})}\right\}.\] 
The Kaplan-Meier estimator is a consistent and asymptotically normal estimator of the survival function well-known to attain the semi-parametric efficiency bound for the nonparametric model under standard assumptions \citep[Chapter 4]{andersen1993statistical}, which we adopt throughout this work.

The definition of the optimal restriction time $\Lopt$ depends on the choice of the treatment effect estimator $\hat\ka(L)$. In this work, we use the Kaplan-Meier plug-in estimator:
\begin{equation*}
  \begin{aligned}
    \hat\ka(L) &= \hat\theta^1(L) - \hat\theta^0(L) = \int_0^L \hat S^1(t) \dy{t} -\int_0^L  \hat S^0(t) \dy{t}.
  \end{aligned}
\end{equation*}
Define the sample criterion function as
\[\mathbb{M}_n (L) \equiv \ff{\hat\ka^2(L)}{\hat\sigma^2(\hat\ka(L))}, \]
where the denominator is an estimator of $\sigma^2(\hat\ka(L))$, defined by
\begin{equation}
  \hat\sigma^{\ka,2}_L \equiv \hat\sigma^2(\hat\ka(L)) \equiv n\sum_{a =0}^1 \sum_{j:T_{a(j)} \le L} \ff{d_{a(j)}}{\left[Y^a(T_{a(j)})\right]\left[Y^a(T_{a(j)} )- d_{a(j)}\right]} \left[\hat{\theta}^a(L) - \hat{\theta}^a\left(T_{a\left(j\right)}\right)\right]^2.
  \label{eq:sdkappa}
\end{equation} 
For a given $L$, we set $\mathbb{M}_n (L) = -\infty$ whenever either its numerator or denominator is not estimable from the data, for example, when $L$ is larger than the maximum follow-up time in the sample data.
Then, our estimators for $(L^*, \ka^*)$ are defined as
\begin{equation}
  \begin{aligned}
    \hat\ka^* \equiv \hat\ka(\hat L^*), \quad \hat L^* \equiv \argmax_{L \in {\mathcal{L}}} \mathbb{M}_n (L).
  \end{aligned}
  \label{eq:estimator}
\end{equation}
We now state the asymptotic properties of $\hat L^*$ and $\hat\ka^*$, under the following assumptions.
\begin{assumption}[Smoothness]
  Both $S^1, S^0$ are continuously differentiable. %\footnote{This ensures $\mathbb{M}$ is continuously twice differentiable at $L^*$.}\\
  \label{assum:smooth}
\end{assumption}%
This ensures $S^1, S^0$ are well-behaved, and $\mathbb{M}$ is continuously twice differentiable at $L^*$, which is critical for the derivation of asymptotic properties.
\begin{assumption}[Uniqueness]
  There exists a unique element $L^*$ in the interior of $\mathcal{L}$ such that $\mathbb{M}(L^*) > \sup_{L \notin G} \mathbb{M}(L)$ for every open set $G \subseteq \mathcal{L}$ that contains $ L^*$. 
  \label{assum:unique}
\end{assumption}%
This ensures that $L^*$ is identifiable.
\begin{assumption}[Nonsingularity]
  The second derivative of $\mathbb{M}(L)$ at $L^*$, $\dy^2_L\mathbb{M}(L^*)$, is negative.
  \label{assum:nonsin}
\end{assumption}
This ensures sufficient curvature at $L^*$. 

The following theorem establishes the consistency and asymptotic normality of $\hat L^*$ and $\hat\ka^*$.
\begin{proposition}[Consistency and Asymptotic Normality]
  Under Assumptions \ref{assum:setting}, \ref{assum:smooth}, \ref{assum:unique}, and \ref{assum:nonsin}, the estimators $\hat L^*$ and $\hat\ka^*$ are $\sqrt{n}$-consistent and asymptotically normal. Specifically, as $n \to \infty$,
  \[\sqrt{n}(\hL^* - L^*) \dto \mathcal{N}(0, \sigma^{L^*, 2}), \sqrt{n}(\hka^* - \ka^*) \dto \mathcal{N}(0, \sigma^{\ka^*, 2}),\]
  where $\sigma^{L^*, 2}, \sigma^{\ka^*, 2}$ are the respective asymptotic variances.
  \label{thm:can}
\end{proposition}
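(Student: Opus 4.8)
The plan is to derive the joint asymptotic behavior of $(\hL^*, \hka^*)$ from the functional limit theory for the Kaplan-Meier process together with an argmax/delta-method argument, handled through an M-estimation lens since $\hL^*$ maximizes the random criterion $\mathbb{M}_n$. First I would establish the uniform consistency of the ingredients: by the functional CLT for the Kaplan-Meier estimator (Andersen et al., Chapter 4), $\sqrt{n}(\hat S^a - S^a)$ converges weakly in $D[0, L_{\max}]$ to a Gaussian process, hence $\hka(\cdot) \to \ka(\cdot)$ and $\hsigma^{\ka,2}_\cdot \to \sigma^{\ka,2}_\cdot$ uniformly on $\mathcal{L}$ (Assumption \ref{assum:setting}(4) guarantees the at-risk processes are bounded away from zero on $\mathcal{L}$, so the variance estimator \eqref{eq:sdkappa} and its limit are continuous and bounded below). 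This gives $\mathbb{M}_n \to \mathbb{M}$ uniformly on $\mathcal{L}$, and combined with the well-separated-maximum condition (Assumption \ref{assum:unique}) and a standard argmax theorem, $\hL^* \pto L^*$.

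Next I would upgrade this to a $\sqrt{n}$-rate and a limit law for $\hL^*$. Since $L^*$ lies in the interior of $\mathcal{L}$ (Assumption \ref{assum:unique}) and $\mathbb{M}$ is twice continuously differentiable there with $\dy^2_L\mathbb{M}(L^*) < 0$ (Assumptions \ref{assum:smooth}, \ref{assum:nonsin}), I would work with the first-order condition $\dy_L\mathbb{M}_n(\hL^*) = 0$ (valid with probability tending to one once $\hL^*$ is interior) and Taylor-expand around $L^*$:
\[
  0 = \dy_L\mathbb{M}_n(L^*) + \dy^2_L\mathbb{M}_n(\tilde L)(\hL^* - L^*),
\]
for $\tilde L$ between $\hL^*$ and $L^*$. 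The consistency of $\hL^*$ plus uniform convergence of $\dy^2_L\mathbb{M}_n$ to the continuous $\dy^2_L\mathbb{M}$ yields $\dy^2_L\mathbb{M}_n(\tilde L) \pto \dy^2_L\mathbb{M}(L^*)$. For the numerator, $\dy_L\mathbb{M}(L^*) = 0$, so $\sqrt{n}\,\dy_L\mathbb{M}_n(L^*) = \sqrt{n}\bigl(\dy_L\mathbb{M}_n(L^*) - \dy_L\mathbb{M}(L^*)\bigr)$, which I would express as a smooth (Hadamard-differentiable) functional of the Kaplan-Meier process $\sqrt{n}(\hat S^a - S^a)$ and its antiderivative — differentiation in $L$ commutes with the plug-in since $\dy_L\hka(L) = \hat S^1(L) - \hat S^0(L)$ — so the functional delta method delivers $\sqrt{n}\,\dy_L\mathbb{M}_n(L^*) \dto \mathcal{N}(0, v)$ for a computable $v > 0$. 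Slutsky then gives $\sqrt{n}(\hL^* - L^*) \dto \mathcal{N}(0, \sigma^{L^*,2})$ with $\sigma^{L^*,2} = v / (\dy^2_L\mathbb{M}(L^*))^2$.

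Finally, for $\hka^*$ I would decompose
\[
  \sqrt{n}(\hka^* - \ka^*) = \sqrt{n}\bigl(\hka(\hL^*) - \hka(L^*)\bigr) + \sqrt{n}\bigl(\hka(L^*) - \ka(L^*)\bigr),
\]
and apply a stochastic Taylor expansion to the first term: $\hka(\hL^*) - \hka(L^*) = \dy_L\hka(L^*)(\hL^* - L^*) + o_P(n^{-1/2})$, where $\dy_L\hka(L^*) = \hat S^1(L^*) - \hat S^0(L^*) \pto S^1(L^*) - S^0(L^*) =: c$. Thus $\sqrt{n}(\hka^* - \ka^*) = c\sqrt{n}(\hL^* - L^*) + \sqrt{n}(\hka(L^*) - \ka(L^*)) + o_P(1)$; both terms on the right are asymptotically linear in the same Kaplan-Meier process, so their sum is asymptotically normal with variance $\sigma^{\ka^*,2}$ obtainable from the joint covariance (and one could note $\hL^* - L^* = O_P(n^{-1/2})$ contributes a genuine correction only when $c \neq 0$, i.e.\ when $S^1$ and $S^0$ differ at $L^*$; under the null $c$ may vanish, but that nonregular regime is exactly what later sections address and is excluded here by Assumption \ref{assum:nonsin} forcing $\ka^* \neq 0$). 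I expect the main obstacle to be the rigorous functional-delta-method step for $\sqrt{n}\,\dy_L\mathbb{M}_n(L^*)$: one must verify Hadamard differentiability of the map from $(\hat S^0, \hat S^1)$ through $\hka$, $\dy_L\hka$, the variance functional \eqref{eq:sdkappa} and its $L$-derivative, to the ratio defining $\dy_L\mathbb{M}_n$ — in particular checking that the denominator $\sigma^{\ka,2}_{L^*}$ is bounded away from zero (which follows from Assumption \ref{assum:setting}(4) and the fact that $\ka^* \neq 0$ forces a nondegenerate integrand) and carefully tracking the discrete sum in \eqref{eq:sdkappa} as a Riemann-Stieltjes approximation to a smooth limiting functional.
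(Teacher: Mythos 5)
Your overall architecture matches the paper's: uniform consistency of $\mathbb{M}_n$ plus an argmax theorem for $\hL^* \pto L^*$, a first-order-condition argument for the limit law of $\hL^*$, and the expansion $\sqrt{n}(\hka^*-\ka^*)=\sqrt{n}\bigl(\hka(L^*)-\ka(L^*)\bigr)+\bigl(\hat S^1(L^*)-\hat S^0(L^*)\bigr)\sqrt{n}(\hL^*-L^*)+o_p(1)$ for $\hka^*$. However, the middle step as you wrote it would fail. You Taylor-expand the \emph{sample} criterion, $0=\dy_L\mathbb{M}_n(L^*)+\dy^2_L\mathbb{M}_n(\tilde L)(\hL^*-L^*)$, and invoke ``uniform convergence of $\dy^2_L\mathbb{M}_n$ to $\dy^2_L\mathbb{M}$.'' But $\mathbb{M}_n$ is built from Kaplan--Meier step functions: $\dy_L\hat\ka(L)=\hat S^1(L)-\hat S^0(L)$ and the $L$-derivative of \eqref{eq:sdkappa} are only piecewise smooth with jumps at event times, so $\mathbb{M}_n$ has no classical second derivative, and where an a.e.\ second derivative exists it cannot be consistent for $\dy^2_L\mathbb{M}(L^*)$, because the population second derivative involves the densities $f^a$, which the unsmoothed plug-in does not estimate. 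The paper avoids exactly this by invoking Theorem 3.2.16 of van der Vaart and Wellner: one needs only (a) twice continuous differentiability of the \emph{population} $\mathbb{M}$ at $L^*$ (verified from Assumption~\ref{assum:smooth} by differentiating $\theta^a$ and $V^a_L$), and (b) asymptotic normality of the first-derivative process $\sqrt{n}(\dot{\mathbb{M}}_{n,t}-\dot{\mathbb{M}}_t)$ in a neighborhood of $L^*$; the expansion is performed on $\dot{\mathbb{M}}$, never on $\dot{\mathbb{M}}_n$. Your argument should be repaired along these lines rather than via a mean-value form requiring $\dy^2_L\mathbb{M}_n$.

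Second, the item you defer as ``the main obstacle''---a CLT for the variance estimator $\hat\sigma^{\ka,2}_L$ and, crucially, for its $L$-derivative $\hat{\dot\sigma}^{\ka,2}_L$---is the substantive content of the paper's proof (Lemma~\ref{lem:var}), established there by decomposing the plug-in into counting-process integrals, applying Rebolledo's martingale CLT with an explicit domination argument for the predictable variation, and handling the remaining terms by the functional delta method. Without this, the limit law for $\sqrt{n}\,\dot{\mathbb{M}}_n(L^*)$ (your quantity $v$) is asserted rather than proved, so the core central-limit step for $\hL^*$ is missing. A minor point: $\ka^*\neq 0$ under the stated assumptions follows from the well-separated-maximum Assumption~\ref{assum:unique} (a unique strict maximizer of $\mathbb{M}\ge 0$ forces $\mathbb{M}(L^*)>0$), not from Assumption~\ref{assum:nonsin} as you state; also, your final step should note joint asymptotic normality of $\sqrt{n}(\hka(L^*)-\ka(L^*))$ and $\sqrt{n}(\hL^*-L^*)$ (both asymptotically linear in the same processes), since marginal normality of two summands does not by itself give a normal sum.
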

The proof is provided in the Supplementary Material, where details of the asymptotic variances emerge in the derivation. 

\subsection{Null Hypothesis}

We consider the null hypothesis and the alternative hypothesis as follows:
\[\mathsf{H_0}: S^1_t = S^0_t \text{ for all } t \quad \text{v.s.} \quad \mathsf{H_1}: S^1_t \ne S^0_t \text{ for some } t.\]
Note that under $\mathsf{H_0}$, the survival functions for the treatment and control arms are identical across the interval $\mathcal{L}$. This implies for all $L$,
\[\ka(L^*) = \ka(L) = 0 \text{ and } \mathbb{M}_{L^*} = \mathbb{M}_L = 0.\] 
As a result, the optimal restriction time $\Lopt$ is not uniquely defined under $\mathsf{H_0}$, which violates Assumption~\ref{assum:unique}. Consequently, the asymptotic properties derived in Proposition~\ref{thm:can} no longer hold, as they rely on the uniqueness of $\Lopt$. 
To address this limitation, we investigate the asymptotic behavior of $\hka^*$ under $\mathsf{H_0}$. Despite the challenges posed by the lack of a unique $\Lopt$, we show that $\hka^*$ retains desirable statistical properties under the null hypothesis. Let $O_p(1)$ denote a sequence that is bounded in probability, and $o_p(1)$ denote a sequence that converges to $0$ in probability. The results are summarized in the following propositions.
\begin{proposition}[$\sqrt{n}$-Consistency of $\hka^*$ under $\mathsf{H_0}$]
  Under Assumptions \ref{assum:setting} and \ref{assum:smooth}, $\hka^*$ remains $\sqrt{n}$-consistent for $\ka^* = 0$ under the null hypothesis $\mathsf{H_0}$, i.e., $\sqrt{n}(\hka^* - \ka^*) = O_p(1).$%
  \label{prop:consistency}
\end{proposition}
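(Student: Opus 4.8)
The plan is to exploit the fact that, although $\hat L^*$ has \emph{no} limit under $\mathsf{H_0}$ (so the Taylor–expansion argument behind Proposition~\ref{thm:can} is unavailable), the quantity $\hka^*$ is nothing more than a random evaluation, at the data-dependent point $\hat L^*\in\mathcal{L}$, of a stochastic process $L\mapsto\hka(L)$ that is uniformly $\sqrt n$-tight over the compact set $\mathcal{L}$. Since under $\mathsf{H_0}$ we have $S^1_t=S^0_t$ for all $t$ and hence $\ka(L)=\int_0^L(S^1_t-S^0_t)\dx{t}=0$ for every $L\in\mathcal{L}$, it suffices to show
\[
\sup_{L\in\mathcal{L}}\sqrt n\,\bigl|\hka(L)\bigr| = O_p(1),
\]
because then $\sqrt n\,|\hka^*-\ka^*| = \sqrt n\,|\hka(\hat L^*)| \le \sup_{L\in\mathcal{L}}\sqrt n\,|\hka(L)| = O_p(1)$, regardless of which restriction time $\hat L^*$ happens to be.

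First I would reduce the supremum to the sup-norm deviations of the two Kaplan–Meier curves. Using $S^1=S^0$,
\[
\hka(L) = \int_0^L\bigl(\hat S^1_t-S^1_t\bigr)\dx{t} - \int_0^L\bigl(\hat S^0_t-S^0_t\bigr)\dx{t},
\]
so that $\sup_{L\in\mathcal{L}}|\hka(L)| \le L_{\max}\sum_{a=0}^1 \sup_{t\in[0,L_{\max}]}|\hat S^a_t-S^a_t|$, a bound that no longer involves $L$. It therefore remains to argue $\sup_{t\in[0,L_{\max}]}|\hat S^a_t-S^a_t| = O_p(n^{-1/2})$ for $a=0,1$. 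This is the classical uniform weak convergence of the Kaplan–Meier estimator on a compact interval: the positivity part of Assumption~\ref{assum:setting} gives $Y^a(L_{\max})/n \pto \beta_a\,\Prob(T^a\ge L_{\max})\Prob(C^a\ge L_{\max})>0$ (with $\beta_1=\beta$, $\beta_0=1-\beta$), so the at-risk fraction is bounded away from $0$ on all of $[0,L_{\max}]$ with probability tending to one, making the asymptotic variance of $\hat S^a$ finite there; combined with the continuity guaranteed by Assumption~\ref{assum:smooth}, $\sqrt n(\hat S^a_\bcdot - S^a_\bcdot)$ converges weakly in $D[0,L_{\max}]$ to a tight Gaussian process with continuous paths \citep[Chapter~4]{andersen1993statistical}, whence $\sup_{t\in[0,L_{\max}]}\sqrt n|\hat S^a_t - S^a_t| = O_p(1)$. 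Equivalently, one may compress the last two displays into a single application of the functional delta method: the map $(f,g)\mapsto\bigl(L\mapsto\int_0^Lf_t\dx{t}-\int_0^Lg_t\dx{t}\bigr)$ is bounded linear from $D[0,L_{\max}]^2$ into $C(\mathcal{L})$, hence Hadamard differentiable, so $\sqrt n\,\hka(\bcdot)$ converges weakly in $C(\mathcal{L})$ to a tight Gaussian process, and the continuous mapping theorem applied to $\|\cdot\|_\infty$ delivers the claim.

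The only loose end is the convention $\mathbb{M}_n(L)=-\infty$ when the criterion is not estimable, which could in principle leave $\hat L^*$ ill-defined. Positivity again does the work: $\Prob\bigl(\max_{i:A_i=a}X_i \ge L_{\max}\bigr)\to1$, so on an event $\mathcal{E}_n$ with $\Prob(\mathcal{E}_n)\to1$ every $L\in\mathcal{L}$ is estimable, $\mathbb{M}_n$ is a finite (piecewise-continuous) function on the compact set $\mathcal{L}$ whose supremum is attained, and the bound above applies to $\hka^*=\hka(\hat L^*)$; the complement $\mathcal{E}_n^c$ carries vanishing probability and does not affect the $O_p(1)$ conclusion. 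I expect the main conceptual obstacle to be precisely the articulation of this uniform-in-$L$ argument — recognizing that one should \emph{not} try to pin down $\hat L^*$, but instead bound $\hka^*$ by a quantity that is $\sqrt n$-tight uniformly over $\mathcal{L}$ — together with the careful invocation of the uniform (rather than merely pointwise) Kaplan–Meier central limit theorem that legitimizes the final step.
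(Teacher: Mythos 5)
Your proposal is correct and follows essentially the same route as the paper's proof: under $\mathsf{H_0}$ one writes $\hka(\hat L^*)$ as the integral of the Kaplan–Meier deviations, bounds it by $L_{\max}$ times the sup-norm deviations of $\hat S^1$ and $\hat S^0$ over $[0,L_{\max}]$, and invokes the uniform ($\sqrt{n}$-tight Gaussian-process) convergence of the Kaplan–Meier estimator. The additional remark about the $\mathbb{M}_n(L)=-\infty$ convention is a harmless extra that the paper does not dwell on.
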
%
This result ensures that $\hka^*$ converges in probability to the true value $\ka^* = 0$ at the standard parametric rate, even when $\mathsf{H_0}$ holds.
\begin{proposition}[Asymptotic Median-Unbiasedness under $\mathsf{H_0}$]
  Suppose Assumptions \ref{assum:setting} and \ref{assum:smooth} hold. Under the null hypothesis $\mathsf{H_0}$, the estimator $\hka^*$ is asymptotically median-unbiased. Specifically,
  \begin{equation*}
    \lim_{n \to \infty} \Pr\left[ \max\left\{0, \ff{1}{2} - \min\{\Pr(\hka^* \ge \ka^*), \Pr(\hka^* \le \ka^*)\} \right\}\right] = 0.
    \label{eq:amu}
  \end{equation*}
  \label{prop:median}
\end{proposition}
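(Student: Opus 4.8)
The plan is to leverage Proposition~\ref{prop:consistency} together with a careful description of the limiting distribution of $\sqrt{n}\,\hka^*$ under $\mathsf{H_0}$, and then show that this limit is symmetric about zero, which immediately yields asymptotic median-unbiasedness. First I would set up the empirical-process picture: by the functional delta method applied to the Kaplan--Meier estimators in each arm, the process $L \mapsto \sqrt{n}\,\hka(L)$ converges weakly on $\mathcal{L}$ to a mean-zero Gaussian process, say $\mathbb{G}(L)$, with a covariance kernel obtained from the standard martingale representation of the KM estimator; under $\mathsf{H_0}$ the two arms are identically distributed, so the pooled data-generating law is fully symmetric in how the treatment label is assigned. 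Similarly $\hat\sigma^2(\hka(L))$ converges uniformly in probability to a deterministic, strictly positive limit $\sigma^2(\hka(L))$ (positivity from Assumption~\ref{assum:setting}(4) and continuity from Assumption~\ref{assum:smooth}). Hence the sample criterion satisfies $n\,\mathbb{M}_n(L) = \sqrt{n}\,\hka(L)^2 / \hat\sigma^2(\hka(L)) \rightsquigarrow \mathbb{G}(L)^2/\sigma^2(\hka(L)) =: \mathbb{W}(L)$ jointly with $\sqrt{n}\,\hka(\cdot)$, as processes indexed by $L \in \mathcal{L}$.

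Next I would use the argmax continuous-mapping / argmax theorem: since $\hat L^* = \argmax_L \mathbb{M}_n(L) = \argmax_L n\,\mathbb{M}_n(L)$, and the limiting process $\mathbb{W}$ has (almost surely) a well-separated argmax (a standard property of squared Gaussian processes with a non-degenerate covariance, so that ties occur with probability zero), we obtain the joint convergence $\big(\hat L^*,\ \sqrt{n}\,\hka^*\big) = \big(\hat L^*,\ \sqrt{n}\,\hka(\hat L^*)\big) \dto \big(\Lambda^*,\ \mathbb{G}(\Lambda^*)\big)$, where $\Lambda^* = \argmax_L \mathbb{W}(L)$. In particular $\sqrt{n}\,\hka^*$ has a nondegenerate limiting law $\mathcal{K} := \mathcal{L}\big(\mathbb{G}(\Lambda^*)\big)$. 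The tightness and weak-convergence ingredients here are exactly those underlying Proposition~\ref{prop:consistency}, so I would cite that proof rather than redo it.

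The crux is then a \emph{symmetry} argument. Consider the map that swaps the two arms: replace every observation $(A_i, X_i, \Delta_i)$ by $(1-A_i, X_i, \Delta_i)$. Under $\mathsf{H_0}$ the arms have identical survival and censoring distributions, and randomization makes the arm label independent of everything; consequently the joint law of the data is \emph{not} exactly invariant under this swap unless $n_1 = n_0$, but the relevant \emph{limit} objects are: $\hka(L)$ changes sign, $\hat\sigma^2(\hka(L))$ is unchanged (it is symmetric in the two arms' contributions), hence $\mathbb{M}_n(L)$ is unchanged, so $\hat L^*$ is unchanged while $\hka^* \mapsto -\hka^*$. Passing to the limit, $\mathbb{G}$ has the same covariance kernel as $-\mathbb{G}$ and, crucially, the pair $(\mathbb{W}, \mathbb{G})$ has the same law as $(\mathbb{W}, -\mathbb{G})$ because $\mathbb{W} = \mathbb{G}^2/\sigma^2$ is a function of $|\mathbb{G}|$ only; therefore $(\Lambda^*, \mathbb{G}(\Lambda^*)) \overset{d}{=} (\Lambda^*, -\mathbb{G}(\Lambda^*))$, so the limiting law $\mathcal{K}$ of $\sqrt{n}\,\hka^*$ is symmetric about $0$. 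A symmetric-about-zero limit gives $\Pr(\hka^* \ge 0) \to \mathcal{K}[0,\infty) \ge 1/2$ and $\Pr(\hka^* \le 0) \to \mathcal{K}(-\infty,0] \ge 1/2$ (with the boundary atom, if any, only helping), so $\min\{\Pr(\hka^* \ge \ka^*), \Pr(\hka^* \le \ka^*)\} \to$ a value $\ge 1/2$, hence $\max\{0,\ \tfrac12 - \min\{\cdots\}\} \to 0$, which is the claimed display.

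The main obstacle I anticipate is making the argmax step fully rigorous when the limiting criterion $\mathbb{W}(L) = \mathbb{G}(L)^2/\sigma^2(\hka(L))$ is a squared Gaussian process whose argmax over the compact interval $\mathcal{L}$ must be shown to be a.s.\ unique (no ties) and to depend continuously enough on the sample paths for the argmax theorem of \citet[][e.g.\ Thm.~3.2.2]{vanderVaart1996} to apply; uniqueness can be argued from the absolute continuity of finite-dimensional marginals of a nondegenerate Gaussian process, but one must also rule out the argmax being attained at the boundary of $\mathcal{L}$ with positive probability in a way that breaks the continuous-mapping argument — this is handled by noting the swap map fixes $\hat L^*$ regardless of where it lands, so the symmetry conclusion is insensitive to boundary behavior. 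A secondary, more technical point is uniform consistency of $\hat\sigma^2(\hka(L))$ together with a uniform-in-$L$ lower bound bounded away from zero, which follows from Assumption~\ref{assum:setting}(4) and the explicit form in Equation~\eqref{eq:sdkappa}.
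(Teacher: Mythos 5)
Your proposal is correct and rests on the same key idea as the paper's proof: under $\mathsf{H_0}$ the standardized effect process has a mean-zero Gaussian limit, the criterion depends on it only through its square, so a sign flip leaves the selected restriction time unchanged while negating $\hka^*$, forcing the asymptotic median to be zero. The difference is in the technical packaging. The paper works with $H_{nt}=\hka_t/\hat\sigma_t$, notes $\hL^*=\argmax_t|H_{nt}|$, and argues that $H_{n\hL^*}$ is (asymptotically) symmetric about zero \emph{conditionally on} $\hL^*$, then writes $\sqrt{n}\hka^*=\sqrt{n}H_{n\hL^*}\hat\sigma_{\hL^*}=\sqrt{n}H_{nt}\sigma_t+o_p(1)$ given $\hL^*=t$; this sidesteps any argmax continuous-mapping step and, in particular, never needs almost-sure uniqueness of the argmax of the limiting squared-Gaussian process. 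Your route instead establishes joint weak convergence $(\hL^*,\sqrt{n}\hka^*)\dto(\Lambda^*,\mathbb{G}(\Lambda^*))$ and applies the distributional identity $(\mathbb{W},\mathbb{G})\overset{d}{=}(\mathbb{W},-\mathbb{G})$; this is cleaner as a limit statement (and arguably more rigorous than the paper's somewhat heuristic conditioning), but it buys you the extra obligation you correctly flag — a.s.\ uniqueness of $\argmax_L \mathbb{W}(L)$, which needs a Kim--Pollard-type no-ties argument for the absolute value of a standardized Gaussian process. Two small corrections: your remark that an atom of the limit law at zero would ``only help'' is backwards — an atom at $0$ would weaken the portmanteau lower bound $\liminf_n\Pr(\sqrt{n}\hka^*\ge 0)\ge \mathcal{K}\bigl((0,\infty)\bigr)$ — but no atom exists here, since $\max_L\mathbb{W}(L)\ge \mathbb{W}(L_{\min})>0$ a.s., so $\mathbb{G}(\Lambda^*)\neq 0$ a.s.; and the arm-swap discussion at the data level is dispensable, since the symmetry you actually use is just $\mathbb{G}\overset{d}{=}-\mathbb{G}$ for the mean-zero Gaussian limit, which is how the paper argues as well.
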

This property indicates that the median of the sampling distribution of $\hka^*$ aligns asymptotically with the estimand $\ka^* = 0$, which is a desirable property for constructing convex-hull based confidence intervals under $\mathsf{H_0}$. Together, these properties support the reliability of 
$\hka^*$ as an estimator under the null hypothesis, despite the theoretical complications arising from the non-uniqueness of $\Lopt$.

\subsection{Confidence Interval and Hypothesis Testing}
We describe below a unified algorithm to construct an asymptotically valid $(1-\alpha)$-confidence interval for the true treatment effect $\ka^*$ under both the null and alternative hypotheses, despite their differing asymptotic behaviors. This confidence interval, $\CIh_{n,\alpha}$, is based on the \emph{HulC} (Hull-based Confidence interval) approach introduced by \citet{kuchibhotla2024hulc}, which requires only knowledge of the asymptotic median bias of the estimators. For the hypothesis testing, the null hypothesis $\mathsf{H_0}$ is rejected if the constructed confidence interval does not contain zero. 

In addition, we propose a slightly anti-conservative variant, $\tCIh_{n,\alpha}$, which exhibits improved power and practical coverage performance. Let $\floor{x}$ denote the floor function of a real number $x$, which is the largest integer less than or equal to $x$. Let $\ceil{x}$ denote the ceiling function, which is the smallest integer greater than or equal to $x$. The procedure for constructing these confidence intervals and performing the dual hypothesis test is summarized in Algorithm~\ref{alg:CIh}. 
\begin{algorithm}[ht]
    \caption{Convex-Hull Based Confidence Intervals and the Dual Test for \(\kappa^*\)}
    \label{alg:CIh}
    
    \KwIn{Significance level \(\alpha \in (0,1)\), sample data \(\{Z_i = (A_i, X_i, \Delta_i)\}_{i=1}^n\), and confidence interval type (\(\CIh_{n, \alpha}\) or \(\tCIh_{n, \alpha}\)).}
    \KwOut{The confidence interval, \(\CIh_{n, \alpha}\) or \(\tCIh_{n, \alpha}\), and the dual hypothesis test result.}
    \textbf{Step 1:} Compute the integer \(B\) as follows:\;
    \Indp
    - \( B = \ceil{1 - \ln(\alpha) / \ln(2)}\) for \(\CIh_{n, \alpha}\).\;
    - \( B = \floor{1 - \ln(\alpha) / \ln(2)}\) for \(\tCIh_{n, \alpha}\).\;
    \Indm
    
    \textbf{Step 2:} Randomly partition the sample data into \( B \) folds.  
    For each \(j\) (\(1 \le j \le B\)), compute the fold-specific estimator \(\hat{\kappa}^*_j\) using only the \(j\)-th fold data (see Equation~\eqref{eq:estimator}).\;
    
    \textbf{Step 3:} Construct the confidence interval as
    $\bigl[\min_{1 \leq j \leq B} \hat{\kappa}^*_j,\;\max_{1 \leq j \leq B} \hat{\kappa}^*_j\bigr].$\;
    
    \textbf{Step 4:} Conduct the hypothesis test. Reject \(\mathsf{H_0}\) if the above interval does not contain zero.\;
    
\end{algorithm}

The theoretical properties of $\CIh_{n, \alpha}$ and $\tCIh_{n, \alpha}$ under both $\mathsf{H_0}$ and $\mathsf{H_1}$
are formalized in the following proposition.
\begin{proposition}[Asymptotic properties of $\CIh_{n,\alpha}$ and $\tCIh_{n,\alpha}$ under both $\mathsf{H_0}$ and $\mathsf{H_1}$]
  Under Assumptions \ref{assum:setting}, \ref{assum:smooth} for both $\mathsf{H_0}$ and $\mathsf{H_1}$, and Assumptions \ref{assum:unique} and \ref{assum:nonsin} for $\mathsf{H_1}$, the confidence intervals constructed using Algorithm~\ref{alg:CIh} have the following properties:
  \begin{enumerate}
    \item $\lim_{n\to \infty} \Pr\left(\ka^* \in \CIh_{n,\alpha} \right) \ge 1 - \alpha$.
    \item $\lim_{n\to \infty} \Pr\left(\ka^* \in \tCIh_{n,\alpha}\right) \ge 1 - 2\alpha$.
  \end{enumerate}
  \label{prop:hulc}
\end{proposition}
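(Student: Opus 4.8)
The plan is to verify that Algorithm~\ref{alg:CIh} satisfies the hypotheses of the HulC theorem of \citet{kuchibhotla2024hulc} in both regimes, and then simply quote it. The HulC guarantee is: if $\hka^*_1,\ldots,\hka^*_B$ are i.i.d.\ copies of an estimator $\hka^*$ of a scalar parameter $\ka^*$ whose (finite-sample) median bias is bounded by $\Delta_n$, with $\limsup_n \Delta_n \le \Delta < 1/2$, then $\Pr(\ka^* \in [\min_j \hka^*_j, \max_j \hka^*_j]) \ge 1 - (1/2 + \Delta)^B - (1/2 + \Delta)^B$ (more precisely $1 - \alpha^{(B)}_{\Delta}$ in their notation), and the choice $B = \lceil 1 - \ln(\alpha)/\ln 2\rceil$ with $\Delta = 0$ delivers coverage at least $1-\alpha$. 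So the three things I must establish are: (i) the fold-specific estimators are i.i.d.; (ii) the asymptotic median bias of $\hka^*$ (on a single fold) is zero in both $\mathsf{H_0}$ and $\mathsf{H_1}$; (iii) the bookkeeping connecting $B$, $\alpha$, and the two interval variants goes through.

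First, (i) is immediate: Step~2 randomly partitions the $n$ i.i.d.\ subjects into $B$ folds, so the $B$ subsamples are mutually independent and each $\hka^*_j$ is a measurable function of its own fold only; since the folds have (asymptotically) equal sizes $n/B \to \infty$, each $\hka^*_j$ has the same limiting distribution and a median-bias bound that can be taken uniform over $j$. Next, (ii): under $\mathsf{H_1}$, Proposition~\ref{thm:can} gives $\sqrt{n}(\hka^* - \ka^*) \dto \mathcal N(0,\sigma^{\ka^*,2})$, a symmetric limit, so the asymptotic median bias is $0$; under $\mathsf{H_0}$, Proposition~\ref{prop:median} states exactly that $\hka^*$ is asymptotically median-unbiased, i.e.\ $\min\{\Pr(\hka^*\ge\ka^*),\Pr(\hka^*\le\ka^*)\} \to 1/2$, which is the $\Delta_n \to 0$ condition HulC needs. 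In either case we may take $\Delta = 0$, so the HulC coverage lower bound collapses to $1 - 2\cdot(1/2)^B = 1 - 2^{1-B}$.

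For (iii): for $\CIh_{n,\alpha}$ we set $B = \lceil 1 - \ln(\alpha)/\ln 2\rceil$, so $B \ge 1 - \log_2 \alpha$, i.e.\ $2^{1-B} \le \alpha$, hence $1 - 2^{1-B} \ge 1-\alpha$, giving claim~1. For the anti-conservative variant $\tCIh_{n,\alpha}$ we set $B = \lfloor 1 - \ln(\alpha)/\ln 2\rfloor$, so $B > -\log_2\alpha$, i.e.\ $2^{-B} < \alpha$, hence $2^{1-B} < 2\alpha$ and $1 - 2^{1-B} > 1 - 2\alpha$, giving claim~2. (When $1 - \ln(\alpha)/\ln 2$ is an integer the two variants coincide and both bounds hold with $\CIh$.) The dual test controls type~I error at level $\alpha$ (resp.\ $2\alpha$) since rejecting $\mathsf{H_0}$ when $0 \notin \CIh_{n,\alpha}$ fails with probability at most $\Pr(\ka^* = 0 \notin \CIh_{n,\alpha}) \le \alpha$ under $\mathsf{H_0}$.

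The main obstacle is not any of the above—it is making sure the median-bias hypothesis of HulC is met in the \emph{finite-sample, uniform-over-folds} sense their theorem actually requires, rather than merely pointwise in the limit. Concretely, one needs $\sup_j |{\textstyle\frac12} - \min\{\Pr(\hka^*_j \ge \ka^*),\Pr(\hka^*_j \le \ka^*)\}| =: \Delta_n \to 0$; this follows from Propositions~\ref{thm:can} and~\ref{prop:median} together with the fact that all folds have size $\asymp n/B$ and the convergence there is driven only by the fold sample size, but the argument must be spelled out, and under $\mathsf{H_0}$ one must also confirm that the non-uniqueness of $L^*$ does not prevent the median-bias convergence from being locally uniform in the underlying distribution (needed if one wants the coverage statement to hold uniformly over a null neighborhood, though for the pointwise statement as written, Proposition~\ref{prop:median} suffices). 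A secondary subtlety is the degenerate case $B = 1$ (which occurs for $\alpha \ge 1/2$ in the $\tCIh$ variant): there the ``interval'' is the single point $\hka^*_1$, the coverage bound $1 - 2\alpha$ is vacuous, and one should simply note the algorithm is only intended for the usual small-$\alpha$ regime.
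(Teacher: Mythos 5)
Your proposal is correct and follows essentially the same route as the paper: both establish asymptotic median-unbiasedness of $\hka^*$ via Proposition~\ref{thm:can} (asymptotic normality) under $\mathsf{H_1}$ and Proposition~\ref{prop:median} under $\mathsf{H_0}$, then invoke the HulC coverage guarantee $1-2^{1-B}$ of \citet{kuchibhotla2024hulc} and carry out the same ceiling/floor arithmetic on $B$ to obtain the $1-\alpha$ and $1-2\alpha$ bounds. Your additional remarks on fold-wise uniformity of the median bias and the degenerate $B=1$ case are sensible refinements but go beyond what the paper's proof records.
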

While $\CIh_{n,\alpha}$ guarantees asymptotic validity, we recommend using 
$\tCIh_{n,\alpha}$ in practice due to its improved power and practical coverage properties. Notably, the coverage probability lower bound of $1-2\alpha$ is not sharp, as the HulC method is generally conservative in finite samples. For instance, when $\alpha = 0.05$, $\tCIh_{n,0.05}$ with $B = 5$ provides a theoretical coverage of at least $1 - 2^{1-B} = 93.75\%$, which exceeds $1 - 2\times 0.05 = 90\%$. The practical coverage is further closer to 95\% due to its conservativeness.

To summarize, point estimate $\hka^*$ of $\ka^*$ is obtained by Equation \eqref{eq:estimator}, while its HulC confidence interval is obtained by Algorithm~\ref{alg:CIh}. The corresponding selected restriction time $\hL^*$ is the maximizer in Equation \eqref{eq:estimator}, assisting in interpreting $\hka^*$. 

\section{Penalization}
\label{sec:pen}

In this section, we introduce a penalization-based method for enhancing the criterion function, offering both theoretical and practical advantages. This approach modifies the estimand slightly while preserving its interpretability, and ensures that the corresponding treatment effect estimator maintains asymptotic normality under both null and alternative hypotheses.

We define the penalized criterion function adding a concave quadratic penalty term:
\[ \mathbb{M}^{\dagger}(L) \equiv \mathbb{M}(L) - c\left(L - \tilde{L}\right)^2= \ff{\ka^2_L}{\sigma^{\ka,2}_L} - c\left(L - \tilde{L}\right)^2,\]
where $c > 0$ is a small, pre-specified constant, and $\tilde{L} \in ( L_{\min}, L_{\max})$ is an initial guess for the optimal restriction time.

The modified estimands $(L^\dagger, \ka^\dagger )$ are defined as 
\begin{equation}
  \begin{aligned}
    \ka^\dagger \equiv \ka(L^\dagger),\quad L^\dagger \equiv \argmax_{L \in \mathcal{L}} \mathbb{M}^{\dagger}(L).
  \end{aligned}
  \label{eq:M_pen}
\end{equation}
When there is no penalization ($c = 0$), $(L^\dagger, \ka^\dagger )$ reduces to $(L^*, \ka^* )$. However, when $c>0$,  the penalty term favors values of $L$ closer to $\tilde{L}$.
Despite the modification, the interpretation remains intuitive: on average, patients in the treatment arm are expected to live $\ka^\dagger$ units of time (e.g., months or years) longer than those in the control arm over the period up to $L^\dagger$.

\begin{remark}[Benefits of Penalization]
\textit{Uniqueness Under $\mathsf{H_0}$}: A key theoretical advantage of penalization is that under the null hypothesis $\mathsf{H_0}$, the penalized criterion function $\mathbb{M}^\dagger$ has a unique maximizer, $L^\dagger = \tilde{L}$ (see Figure~\ref{fig:scenario_null} in the Supplementary Material). This resolves the non-uniqueness issue present in the unpenalized method and allows for a unified and more efficient inferential procedure under both null and alternative hypotheses. 
\end{remark}

The penalized estimators are defined as
\begin{equation}
  \begin{aligned}
    \hat\ka^\dagger \equiv \hat\ka(\hat L^\dagger), \quad \hat L^\dagger \equiv \argmax_{L \in {\mathcal{L}}} \mathbb{M}_n (L) - c\left(L - \tilde{L}\right)^2.
  \end{aligned}
  \label{eq:estimator_pen}
\end{equation}
Under the following assumptions, the penalized estimators retain desirable asymptotic properties:
\begin{assumbis}{assum:unique}[Uniqueness]
  There exists a unique element $L^\dagger$ in the interior of $\mathcal{L}$ such that $\mathbb{M}^\dagger(L^\dagger) > \sup_{L \notin G} \mathbb{M}^\dagger(L)$ for every open set $G \subseteq \mathcal{L}$ that contains $ L^\dagger$. 
  \label{assum:unique_pen}
\end{assumbis}

\begin{assumbis}{assum:nonsin}
  The second derivative of $\mathbb{M}^\dagger(L)$ at $L^\dagger$, $\dy^2_L\mathbb{M}^\dagger(L^\dagger)$, is negative.
  \label{assum:nonsin_pen}
\end{assumbis}

\begin{theorem}[Asymptotic Properties Under Penalization]
  Under Assumptions \ref{assum:setting}, \ref{assum:smooth}, \ref{assum:unique_pen}, and \ref{assum:nonsin_pen}, the penalized estimators have the following properties:
  \begin{enumerate}
    \item  Under $\mathsf{H}_1$, $\hka^\dagger$ and $\hL^\dagger$ are $\sqrt{n}$-consistent and asymptotically normal estimators of $\ka^\dagger, L^\dagger$, respectively.
    \item Under $\mathsf{H}_0$, $|\hL^\dagger -  L^\dagger| = O_p(n^{-1})$, and $\sqrt{n}(\hka^\dagger - \ka^\dagger) = \sqrt{n}(\hka(\tilde{L}) - \ka(\tilde{L})) + o_p(1) \dto \mathcal{N}(0, \sigma^{\ka,2}_{\tilde{L}})$.
  \end{enumerate}
  \label{thm:asymp_pen}
\end{theorem}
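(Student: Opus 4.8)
\noindent\emph{Proof proposal.} The plan is to handle the two regimes separately, using throughout that the penalty $-c(L-\tilde L)^2$ is deterministic and smooth, so it shifts only the \emph{drift} of the criterion, never its stochastic part. For Part~1 I would argue that the statement is a direct corollary of Proposition~\ref{thm:can}. Writing $\mathbb M^\dagger_n(L)=\mathbb M_n(L)-c(L-\tilde L)^2$, we have $\sqrt n(\mathbb M^\dagger_n-\mathbb M^\dagger)=\sqrt n(\mathbb M_n-\mathbb M)$, so the empirical fluctuation process — and its local quadratic-in-$h$ expansion near the maximizer — is unchanged; Assumption~\ref{assum:smooth} together with smoothness of the penalty makes $\mathbb M^\dagger$ twice continuously differentiable at $L^\dagger$, and Assumptions~\ref{assum:unique_pen}--\ref{assum:nonsin_pen} supply a unique interior maximizer with $\dy^2_L\mathbb{M}^\dagger(L^\dagger)\,(=\dy^2_L\mathbb{M}(L^\dagger)-2c)<0$. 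Hence the argmax/M-estimation argument of Proposition~\ref{thm:can} applies verbatim with $(\mathbb M,L^*,\ka^*)$ replaced by $(\mathbb M^\dagger,L^\dagger,\ka^\dagger)$, giving consistency and asymptotic normality of $\hat L^\dagger$ and then of $\hka^\dagger=\hka(\hat L^\dagger)$ via the same delta-method step that fuses the Kaplan--Meier process fluctuation with that of $\hat L^\dagger$.

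For Part~2, note first that under $\mathsf H_0$ we have $\ka(\cdot)\equiv 0$ on $\mathcal L$, hence $\mathbb M\equiv 0$, $\mathbb M^\dagger(L)=-c(L-\tilde L)^2$, so $L^\dagger=\tilde L$ and $\ka^\dagger=0$. The crux is the fast rate $|\hat L^\dagger-\tilde L|=O_p(n^{-1})$, which I would obtain from three uniform estimates on $\mathcal L$ — all valid up to $L_{\max}$ by the positivity part of Assumption~\ref{assum:setting}, which keeps the Kaplan--Meier processes and the Greenwood-type variance estimator well behaved: (i)~$\sup_L|\hka(L)|=O_p(n^{-1/2})$; (ii)~$\sup_{L\neq\tilde L}|\hka(L)-\hka(\tilde L)|/|L-\tilde L|=O_p(n^{-1/2})$, which is where $\mathsf H_0$ is essential, since $\hka(L)-\hka(\tilde L)=\int_{\tilde L}^L\{\hat S^1(t)-\hat S^0(t)\}\,dt$ and $\sup_t|\hat S^1(t)-\hat S^0(t)|=O_p(n^{-1/2})$ precisely because the population difference $S^1-S^0$ vanishes; and (iii)~$\hsigma^{\ka,2}_L$ is bounded away from $0$ and $\infty$ uniformly and is Lipschitz in $L$ with an $O_p(1)$ constant — here one observes that $\hsigma^{\ka,2}_L$ is in fact continuous in $L$ (the term added at each event time vanishes at that time) and piecewise $C^1$ with $O_p(1)$ derivative. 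Expanding the difference quotient of $\mathbb M_n=\hka^2/\hsigma^{\ka,2}$, estimates (i)--(iii) give $\sup_{L\neq\tilde L}|\mathbb M_n(L)-\mathbb M_n(\tilde L)|/|L-\tilde L|=O_p(n^{-1})$. Combining this with the inequality built into the quadratic penalty, namely $c(\hat L^\dagger-\tilde L)^2\le \mathbb M_n(\hat L^\dagger)-\mathbb M_n(\tilde L)$ (valid because $\hat L^\dagger$ maximizes $\mathbb M_n-c(\cdot-\tilde L)^2$ and $\tilde L$ is a feasible competitor), we get $c\,|\hat L^\dagger-\tilde L|=O_p(n^{-1})$.

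The limit law then follows quickly: $\sqrt n(\hka^\dagger-\ka^\dagger)=\sqrt n\,\hka(\hat L^\dagger)=\sqrt n\,\hka(\tilde L)+\sqrt n\bigl(\hka(\hat L^\dagger)-\hka(\tilde L)\bigr)$, and by (ii) the second term is $O_p(\sqrt n\cdot n^{-1}\cdot n^{-1/2})=o_p(1)$, so $\sqrt n(\hka^\dagger-\ka^\dagger)=\sqrt n(\hka(\tilde L)-\ka(\tilde L))+o_p(1)\dto\mathcal N(0,\sigma^{\ka,2}_{\tilde L})$ by the fixed-$L$ Kaplan--Meier plug-in CLT~\eqref{eq:var_ka}. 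I expect the main obstacle to be the rate improvement in Part~2 from the naive $O_p(n^{-1/2})$ (immediate from $\sup_L\mathbb M_n(L)=O_p(n^{-1})$) to $O_p(n^{-1})$: this needs the increment bound (ii) together with the careful bookkeeping for $\hsigma^{\ka,2}_L$ as a function of $L$ — its continuity at event times and its $O_p(1)$-Lipschitz modulus — so that $\mathbb M_n(L)-\mathbb M_n(\tilde L)$ is controlled not merely by a constant but by a constant times $|L-\tilde L|$. Everything else reduces either to Proposition~\ref{thm:can} (Part~1) or to the classical functional CLT for Kaplan--Meier on $[0,L_{\max}]$ (Part~2).
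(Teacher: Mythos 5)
Your proposal is correct. Part 1 coincides with the paper, which simply declares the $\mathsf{H}_1$ case analogous to Proposition~\ref{thm:can}; your observation that the deterministic penalty leaves the fluctuation process unchanged while Assumptions~\ref{assum:unique_pen}--\ref{assum:nonsin_pen} restore uniqueness and curvature is exactly the reduction intended there. For Part 2 you take a genuinely different route to the $O_p(n^{-1})$ rate. The paper argues through the first-order condition: it sets $Z_n \equiv n\dot{\mathbb{M}}_n^\dagger(L^\dagger)=n\dot{\mathbb{M}}_n(\tilde L)$, notes that under $\mathsf{H}_0$ both $\sqrt{n}\,\hka_{\tilde L}$ and $\sqrt{n}\,(\hat S^1_{\tilde L}-\hat S^0_{\tilde L})$ are $O_p(1)$ so $Z_n=O_p(1)$, uses that the curvature of $\mathbb{M}^\dagger$ at $\tilde L$ is $-2c$, and invokes the argmax/rate theorem (Theorem 3.2.16 of van der Vaart and Wellner) to obtain $n(\hL^\dagger-\tilde L)=Z_n/(2c)+o_p(1)$, after which the CLT for $\hka^\dagger$ follows from a Taylor expansion of $\hka$ at $L^\dagger$. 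You instead combine the basic inequality $c(\hL^\dagger-\tilde L)^2\le \mathbb{M}_n(\hL^\dagger)-\mathbb{M}_n(\tilde L)$ with the uniform bound $|\mathbb{M}_n(L)-\mathbb{M}_n(\tilde L)|\le O_p(n^{-1})\,|L-\tilde L|$, assembled from $\sup_L|\hka_L|=O_p(n^{-1/2})$, the exact identity $\hka_L-\hka_{\tilde L}=\int_{\tilde L}^L(\hat S^1_t-\hat S^0_t)\dy{t}$ with $\sup_t|\hat S^1_t-\hat S^0_t|=O_p(n^{-1/2})$ under $\mathsf{H}_0$, and boundedness plus $O_p(1)$-Lipschitz control of $\hsigma^{\ka,2}_L$; all three ingredients are substantiated by tools the paper already deploys (the Gaussian limits of $\sqrt{n}(\hat S^a-S^a)$ on $[0,L_{\max}]$ under positivity, and the monotonicity/equicontinuity estimates for $\hsigma^{\ka,2}_L$ in Lemma~\ref{lem:1}, including the same implicit use of $\sigma^{\ka,2}_{L_{\min}}>0$). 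Your approach is more elementary and self-contained: it bypasses the verification of the empirical-process rate theorem and needs no preliminary consistency step, since the increment bound is global on $\mathcal{L}$; what the paper's route buys in exchange is a sharper local description, the explicit representation $n(\hL^\dagger-\tilde L)\approx Z_n/(2c)$, which goes beyond the stated rate. Your final step, replacing the Taylor expansion by the exact integral representation so that $\sqrt{n}\,|\hka(\hL^\dagger)-\hka(\tilde L)|=O_p(n^{-1})=o_p(1)$, reaches the same conclusion $\sqrt{n}(\hka^\dagger-\ka^\dagger)=\sqrt{n}(\hka(\tilde L)-\ka(\tilde L))+o_p(1)\dto\mathcal{N}(0,\sigma^{\ka,2}_{\tilde L})$ and is, if anything, cleaner.
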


\textbf{Bootstrap Confidence Intervals and Testing} \quad To construct confidence intervals for $\dL$ and $\dka$, as well as perform hypothesis testing, we employ a bootstrap procedure formalized in Algorithm~\ref{alg:CIh_pen}. Bootstrap methods often outperform Wald-type intervals in finite samples, particularly in scenarios where asymptotic approximations may struggle due to small sample sizes. Its validity is proved in Proposition~\ref{prop:bootstrap}. 

\begin{algorithm}[ht]
    \caption{Bootstrap Confidence Intervals for $\kappa^\dagger, L^\dagger$ and the Dual Test for $\kappa^\dagger$}
    \label{alg:CIh_pen}
    
    \KwIn{Significance level $\alpha \in (0,1)$, sample data $O_0 = \{ Z_i = (A_i, X_i, \Delta_i) \}_{i=1}^n$, number of bootstrap resamples $B$.}
    \KwOut{The confidence intervals $\CIb_{\kappa^\dagger, n, \alpha}, \CIb_{L^\dagger, n, \alpha}$, and the hypothesis test result.}
    
    \textbf{Step 1:} Generate bootstrap samples and estimates:\\
    \For{$b \gets 1$ \KwTo $B$}{
      Draw a bootstrap sample $O^*_b = \{ Z_i^*\}_{i=1}^n$ by sampling with replacement from \( O_0 \).\\
      Calculate the estimates $(\hat{\kappa}^\dagger_b, \hat{L}^\dagger_b)$ based on Equation \eqref{eq:estimator_pen} with the bootstrap sample \( O^*_b \).
    }
    
    \textbf{Step 2:} Construct the percentile bootstrap confidence intervals:
    \[
      \CIb_{\kappa^\dagger, n, \alpha} 
      = 
      \bigl[\, q_{\alpha/2}(\{\hat{\kappa}^\dagger_b\}_{b = 1}^B), \; q_{1-\alpha/2}(\{\hat{\kappa}^\dagger_b\}_{b = 1}^B)\bigr], \quad \CIb_{L^\dagger, n, \alpha} 
      = 
      \bigl[\, q_{\alpha/2}(\{\hat{L}^\dagger_b\}_{b = 1}^B), \; q_{1-\alpha/2}(\{\hat{L}^\dagger_b\}_{b = 1}^B)\bigr],
    \]
    where $q_{\alpha}(\cdot)$ is the $\alpha$-th quantile of a list of statistics.
    
    \textbf{Step 3:} Conduct the hypothesis testing:\\
    Reject the null hypothesis \(\mathsf{H_0}\) if 
    \(
      \CIb_{\kappa^\dagger, n, \alpha} 
    \)
    does not contain zero.
    
\end{algorithm}

\begin{proposition}[Bootstrap Consistency]
  Under Assumptions \ref{assum:setting}, \ref{assum:smooth}, \ref{assum:unique_pen}, and \ref{assum:nonsin_pen}, the bootstrap confidence intervals $\CIb_{\ka^\dagger, n, \alpha}, \CIb_{L^\dagger, n, \alpha}$ by Algorithm \ref{alg:CIh_pen} have the following properties:
  \begin{enumerate}
    \item $\lim_{n\to \infty} \Pr\left(\ka^\dagger \in \CIb_{\ka^\dagger, n, \alpha}\right) \ge 1 - \alpha$ under both $\mathsf{H}_0, \mathsf{H}_1$.
    \item $\lim_{n\to \infty} \Pr\left(L^\dagger \in \CIb_{L^\dagger, n, \alpha} \right) \ge 1 - \alpha$ under $\mathsf{H}_1$.
  \end{enumerate}
  \label{prop:bootstrap}
\end{proposition}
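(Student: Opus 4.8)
The plan is to show that, in each regime, the percentile bootstrap law of the estimators from Equation~\eqref{eq:estimator_pen} reproduces --- conditionally on the data and in probability --- the limiting distribution identified in Theorem~\ref{thm:asymp_pen}; validity of the percentile intervals in Algorithm~\ref{alg:CIh_pen} then follows from the standard quantile-convergence argument, and since each limit law is continuous the coverage limits equal $1-\alpha$, hence are $\ge 1-\alpha$. The common ingredient is the joint weak convergence on $\mathcal{L}$ of the processes $\big(\sqrt{n}(\hka(\cdot)-\ka(\cdot)),\,\hat\sigma^{\ka,2}_{(\cdot)}\big)$ to $(\mathbb{G}(\cdot),\sigma^{\ka,2}_{(\cdot)})$ for a tight mean-zero Gaussian process $\mathbb{G}$, together with its bootstrap analogue: conditionally on the data, $\sqrt{n}(\hka_b(\cdot)-\hka(\cdot))$ converges weakly to the same $\mathbb{G}$ in probability and $\hat\sigma^{\ka,2}_{(\cdot),b}\to\sigma^{\ka,2}_{(\cdot)}$ uniformly. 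Both follow from Hadamard differentiability of the Kaplan--Meier map on $\mathcal{L}$ (Assumptions~\ref{assum:setting} and \ref{assum:smooth}, with the positivity clause of Assumption~\ref{assum:setting} bounding the at-risk denominators away from $0$), the functional delta method, and the corresponding delta method for Efron's nonparametric bootstrap.

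\emph{Under $\mathsf{H}_1$ (both claims).} By Assumptions~\ref{assum:unique_pen} and \ref{assum:nonsin_pen}, $L^\dagger$ is a well-separated interior maximizer of $\mathbb{M}^\dagger$ with strictly negative second derivative; the proof of Theorem~\ref{thm:asymp_pen}(1) then realizes $(\hL^\dagger,\hka^\dagger)$ as a Hadamard-differentiable functional of the Kaplan--Meier estimators (via the usual argmax/quadratic-expansion argument), giving an asymptotically linear expansion with Gaussian limits $\mathcal{N}(0,\sigma^{L^\dagger,2})$ and $\mathcal{N}(0,\sigma^{\ka^\dagger,2})$. Applying the bootstrap delta method to this expansion shows that, conditionally on the data, $\sqrt{n}(\hL^\dagger_b-\hL^\dagger)$ and $\sqrt{n}(\hka^\dagger_b-\hka^\dagger)$ converge in probability to these same laws. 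Since both limits are continuous at the $\alpha/2$ and $1-\alpha/2$ quantiles, $\Pr(\ka^\dagger\in\CIb_{\ka^\dagger,n,\alpha})\to 1-\alpha$ and $\Pr(L^\dagger\in\CIb_{L^\dagger,n,\alpha})\to 1-\alpha$, proving both parts under $\mathsf{H}_1$.

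\emph{Under $\mathsf{H}_0$ (first claim).} Here $\ka(\cdot)\equiv 0$, so $\mathbb{M}_n(L)=\hka^2(L)/\hat\sigma^{\ka,2}_L=O_p(n^{-1})$ uniformly on $\mathcal{L}$. Comparing the penalized sample criterion at $\hL^\dagger$ and at $\tilde L$ and using the strong concavity of $-c(L-\tilde L)^2$ yields $c(\hL^\dagger-\tilde L)^2\le O_p(n^{-1})$, i.e.\ $|\hL^\dagger-\tilde L|=O_p(n^{-1/2})$ (Theorem~\ref{thm:asymp_pen}(2) refines this to $O_p(n^{-1})$, which we do not need here). Since $\hka(L_2)-\hka(L_1)=\int_{L_1}^{L_2}\big(\hat S^1(t)-\hat S^0(t)\big)\,dt$ and $\hat S^1(\cdot)-\hat S^0(\cdot)$ is itself uniformly $O_p(n^{-1/2})$ under $\mathsf{H}_0$, we obtain $\hka^\dagger=\hka(\hL^\dagger)=\hka(\tilde L)+O_p(n^{-1})=\hka(\tilde L)+o_p(n^{-1/2})$. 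The identical chain of inequalities runs in the bootstrap world: conditionally on the data, $\hka_b(\cdot)$ is uniformly $O_p(n^{-1/2})$ in probability (because $\hka(\cdot)=O_p(n^{-1/2})$ under $\mathsf{H}_0$ and $\hka_b(\cdot)-\hka(\cdot)=O_p(n^{-1/2})$ by the bootstrap CLT), so $|\hL^\dagger_b-\tilde L|=O_p(n^{-1/2})$ and, since $\hat S^1_b(\cdot)-\hat S^0_b(\cdot)$ is likewise uniformly $O_p(n^{-1/2})$, $\hka^\dagger_b=\hka_b(\tilde L)+o_p(n^{-1/2})$, all conditionally. Because $\tilde L$ is a fixed interior point, the ordinary bootstrap for the Kaplan--Meier plug-in at $\tilde L$ is consistent, so $\sqrt{n}(\hka^\dagger_b-\hka^\dagger)=\sqrt{n}(\hka_b(\tilde L)-\hka(\tilde L))+o_p(1)$ converges conditionally, in probability, to $\mathcal{N}(0,\sigma^{\ka,2}_{\tilde L})$ --- precisely the limit of $\sqrt{n}(\hka^\dagger-\ka^\dagger)$ in Theorem~\ref{thm:asymp_pen}(2) --- and quantile convergence gives coverage $\to 1-\alpha$.

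\emph{Main obstacle.} The delicate point is the null case: a priori one might worry that, because the resampling (empirical) distribution generically carries a nonzero RMST difference of order $n^{-1/2}$, the bootstrapped criterion behaves like the alternative and pushes $\hL^\dagger_b$ away from $\tilde L$. The resolution is the scale separation used above --- this spurious effect enters the bootstrap criterion only at order $n^{-1}$, the very order at which the penalty curvature acts, so it cannot move the argmax on the relevant scale --- but turning this into a rigorous statement requires a genuinely uniform (over $\mathcal{L}$) bootstrap argmax argument and careful bookkeeping of the nested ``conditional-on-data, in probability'' stochastic-order statements; that is where essentially all the technical effort lies. The same $n^{-1}$ (super-efficient) rate of $\hL^\dagger$ under $\mathsf{H}_0$ is also the reason no coverage claim is made for $\CIb_{L^\dagger,n,\alpha}$ in that regime: the limit of $n(\hL^\dagger-L^\dagger)$ is the argmax of a Gaussian-process-perturbed parabola, whose bootstrap consistency is both subtle and irrelevant to the dual test.
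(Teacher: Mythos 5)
Your proposal is correct in substance, but it follows a noticeably different route from the paper, so a comparison is worth recording. The paper's proof first establishes asymptotic linearity of $\hka^\dagger$ (and, under $\mathsf{H}_1$, of $\hL^\dagger$) in a separate proposition via influence functions obtained by linearizing the first-order condition $g_n(\hL^\dagger)=0$; it then invokes Mammen's characterization (bootstrap consistency for means of the estimated influence function) to get conditional convergence of the bootstrap law, and spends most of the written argument on the careful $\epsilon$-bookkeeping that converts uniform convergence of the bootstrap distribution functions into coverage of the \emph{percentile} interval, explicitly using the asymptotic symmetry of the bootstrap distribution of $\hka^\dagger-\ka^\dagger$; the null case and the $L^\dagger$ interval are declared ``analogous'' and omitted. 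You instead route everything through weak convergence of the Kaplan--Meier processes plus the functional delta method for Efron's bootstrap, and you make explicit precisely what the paper omits: under $\mathsf{H}_0$ you localize the bootstrap argmax, showing $c(\hL^\dagger_b-\tilde L)^2\le O_p(n^{-1})$ so that $\hka^\dagger_b=\hka_b(\tilde L)+o_p(n^{-1/2})$ conditionally, reducing the problem to bootstrapping the fixed-$\tilde L$ RMST difference --- this matches the spirit of Theorem~\ref{thm:asymp_pen}(2) and is arguably a cleaner and more self-contained treatment of the null than the paper's. What the paper's route buys is an explicit handling of the percentile-interval mechanics; what yours buys is an explicit null-case bootstrap argument and a unified treatment of $\hka^\dagger_b$ and $\hL^\dagger_b$ under $\mathsf{H}_1$.

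Two caveats, both at the same level of rigor the paper itself tolerates rather than outright gaps. First, your statement that coverage of the percentile interval follows from quantile convergence ``since each limit law is continuous'' is incomplete as written: the percentile construction also needs the limit to be symmetric (or reflected quantiles); here the limits are mean-zero Gaussian, so the argument goes through, but you should say so, as the paper does when it invokes ``asymptotic symmetry of the bootstrap distribution.'' Second, under $\mathsf{H}_1$ the map data $\mapsto(\hL^\dagger,\hka^\dagger)$ is an argmax/Z-estimation functional, not a textbook Hadamard-differentiable map, so ``the bootstrap delta method'' should be read as a bootstrap M-/Z-estimator master theorem applied after the quadratic expansion; the paper's appeal to Mammen's condition without verifying that the bootstrapped estimator admits the same linear expansion conditionally has exactly the same character, so this does not put you below the paper's standard, but it is where the remaining technical work sits --- as you correctly flag in your ``main obstacle'' discussion.
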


The penalization approach proposed in this section improves the estimation efficiency for treatment effect and guarantees its asymptotic normality under both the null and alternative hypotheses. It further enables robust bootstrap-based inferential procedures without requiring prior knowledge of which hypothesis holds. Furthermore, it facilitates the estimation of the optimal restriction time under the null hypothesis at a fast convergence rate. As a result, hypothesis testing becomes $L$-selection-variability-free under $\mathsf{H}_0$, mirroring the scenario in which the restriction time is fixed at $\tilde{L}$ from the outset. This stands in contrast to the difficulties encountered when non-uniqueness arises in the absence of penalization.

In practice, the choice of $\tilde{L}$ should be guided by domain knowledge; if no prior information is available, the midpoint of the candidate interval $(L_{\min} + L_{\max})/2$ serves as a reasonable default. The penalty parameter $c$, which controls the strength of the penalization, balances the reliance on prior knowledge versus the adaptivity to data. For instance, smaller values of $c$ allow the data to drive the selection of $L$, whereas larger values enforce a closer alignment with $\tilde{L}$ and reduce adaptability. Based on simulation studies, we recommend setting $c$ to
\[c = 0.002\times\ff{16}{(L_{\max} - L_{\min})^2}\times \left(\ff{1 \text{ unit}}{1 \text{ year}}\right)^2,\]
which scales appropriately with the length of $L$ and the time units used in the analysis.
Figures \ref{fig:pen_300}-\ref{fig:pen_1000} in the Supplementary Material further illustrate how the statistical power changes with the penalization parameter $c$ across scenarios in simulation.

\section{Discrete-time Grid of Candidate Restriction Times}
\label{sec:dt}

In this section, we introduce a method for selecting the optimal restriction time $L^*$ from a pre-specified discrete-time grid of candidate restriction times. This approach is particularly advantageous when researchers have prior knowledge of a few plausible candidate restriction times. Notably, the fixed grid provides valuable information and significantly reduces the search space, creating a ``nearly-free lunch" in terms of selection cost. Consequently, we can construct a confidence interval for $\ka^*$ at the selected $L$. In contrast, \citet{horiguchi2018flexible} focus on discrete-time grids but rely on simultaneous confidence bands, which may be more conservative and less powerful.

We define the discrete-time grid as $\mathcal{L} = \{L_1, L_2, \ldots, L_m\}$, where $L_1 = L_{\min}$ and $L_m = L_{\max}$. To address the issue of non-uniqueness under the null hypothesis, we adopt a penalized approach similar to the method described in the previous section. The corresponding criterion function is given by
\[\mathbb{M}_j^{\dagger} \equiv \mathbb{M}(L_j) - c (L - \tilde{L})^2, j = 1, 2, \ldots, m,\] 
where $\tilde{L} = L_{\tilde{j}}$ represents a pre-specified initial choice of restriction time. By default, we set $\tilde{j} = \floor{(m+1)/2}$. Without prior knowledge, we suggest setting $c = 0.005\times\ff{16}{(L_{\max} - L_{\min})^2}\times \left(\ff{1 \text{ unit}}{1 \text{ year}}\right)^2$ as the default value according to our simulation.
The optimal restriction time is then defined as $L^\dagger \equiv L_{j^\dagger}$, where $j^\dagger = \argmax_{j\in \mathcal{L}} \mathbb{M}^\dagger_j$, and $\kappa^\dagger \equiv \ka(L^\dagger)$. The estimators for $(L^\dagger, \ka^\dagger)$ are given by
\begin{equation}
  \begin{aligned}
    \hat\ka^\dagger &\equiv \hat\ka(\hat L^\dagger), \quad \hat L^\dagger \equiv L_{\hat j^\dagger},\\
    \hat j^\dagger &= \argmax_{j\in \{1,\ldots,m\}} \mathbb{M}_n^\dagger(L_j) \equiv \argmax_{j\in j\in \{1,\ldots,m\}} \mathbb{M}_n(L_j) -  c (L_j - \tilde{L})^2. 
  \end{aligned}
  \label{eq:estimator_grid}
\end{equation}
We make the following assumptions:
\begin{assumbis}{assum:unique_pen}
  There exists a unique element $L_{j^\dagger}$ in $\mathcal{L}$ such that $\mathbb{M}^\dagger(L_{j^\dagger}) > \mathbb{M}^\dagger(L_j)$ for all $j \ne j^\dagger$. 
  \label{assum:unique_dt}
\end{assumbis}
This guarantees the existence of a unique solution, preventing ambiguities in the selection process.

The asymptotic properties of the estimators $\hat L^\dagger$ and $\hat\ka^\dagger$  under both the null hypothesis  ($\mathsf{H}_0: S^1_{L_j} = S^0_{L_j}$ for all $j = 1, \ldots, m$) and the alternative hypothesis are established in the following theorem.
\begin{theorem}[Discrete-time Estimator]
  Under Assumptions \ref{assum:setting}, \ref{assum:smooth}, and \ref{assum:unique_dt}, the estimators defined in Equation \eqref{eq:estimator_grid} have the following properties:
  \begin{enumerate}
    \item $\hL^\dagger$ is a consistent estimator of $L^\dagger$. In particular, $n^{\upsilon }|\hL^\dagger - L^\dagger| = o_p(1)$ for arbitrary $\upsilon \ge 0$.
    \item $\hka^\dagger$ is a $\sqrt{n}$-consistent and asymptotically normal estimator of $\ka^\dagger$, with 
  \[\sqrt{n}(\hka^\dagger - \ka^\dagger) = \sqrt{n}(\hka_{L^\dagger} - \ka_{L^\dagger}) + o_p(1) \dto \mathcal{N}(0, \sigma_{L^\dagger}^{\ka, 2}),\]
  and $\hat\sigma_{\hat{L}^\dagger}^{\ka, 2}$ is a consistent estimator for $\sigma_{L^\dagger}^{\ka, 2}$.
  \end{enumerate}
  \label{thm:dt}
\end{theorem}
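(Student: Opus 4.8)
The plan is to exploit the finiteness of the candidate grid $\mathcal{L}=\{L_1,\ldots,L_m\}$: I would show that the data-driven index $\hat j^\dagger$ equals the population-level index $j^\dagger$ with probability tending to one, and then obtain both conclusions by restricting to that event. The argument does not depend on whether $\mathsf{H}_0$ or $\mathsf{H}_1$ holds; it uses only Assumption~\ref{assum:unique_dt}, which is automatic under the null (there $\ka_{L_j}=0$ for all $j$, so $\mathbb{M}^\dagger(L_j)=-c(L_j-\tilde L)^2$ is uniquely maximized at $L_{\tilde j}$, giving $j^\dagger=\tilde j$). The first step is pointwise consistency of the criterion at each grid point: for fixed $L_j\le L_{\max}$, the standard Kaplan--Meier asymptotics valid under Assumptions~\ref{assum:setting}--\ref{assum:smooth}---uniform consistency on $[0,L_{\max}]$, which rests on Positivity---give $\hka_{L_j}\pto\ka_{L_j}$, and the usual analysis of the Greenwood-type variance estimator in Equation~\eqref{eq:sdkappa} gives $\hat\sigma^{\ka,2}_{L_j}\pto\sigma^{\ka,2}_{L_j}$, with $\sigma^{\ka,2}_{L_j}$ bounded away from zero by Positivity. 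Hence $\mathbb{M}_n^\dagger(L_j)=\hka_{L_j}^2/\hat\sigma^{\ka,2}_{L_j}-c(L_j-\tilde L)^2\pto\mathbb{M}^\dagger(L_j)$ for each $j$.

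Next I would show the selection is asymptotically locked at the oracle index. Set $\delta:=\mathbb{M}^\dagger(L_{j^\dagger})-\max_{j\ne j^\dagger}\mathbb{M}^\dagger(L_j)$, which is strictly positive by Assumption~\ref{assum:unique_dt}. Combining the first step with a union bound over the $m$ points gives $\Pr\!\big(\max_{1\le j\le m}|\mathbb{M}_n^\dagger(L_j)-\mathbb{M}^\dagger(L_j)|<\delta/2\big)\to 1$, and on that event the sample maximizer of $\mathbb{M}_n^\dagger$ over $\mathcal{L}$ equals $L_{j^\dagger}$ exactly. Therefore $\Pr(\hL^\dagger=L^\dagger)\to 1$, which at once yields $n^{\upsilon}|\hL^\dagger-L^\dagger|=o_p(1)$ for every $\upsilon\ge 0$, establishing part~1.

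For part~2, write $\hka^\dagger=\sum_{j=1}^{m}\hka_{L_j}\,\I{\hat j^\dagger=j}$, so that on $\{\hat j^\dagger=j^\dagger\}$ one has $\hka^\dagger=\hka_{L^\dagger}$, and decompose
\[
\sqrt{n}(\hka^\dagger-\ka^\dagger)=\sqrt{n}(\hka_{L^\dagger}-\ka_{L^\dagger})\,\I{\hat j^\dagger=j^\dagger}+\sqrt{n}(\hka^\dagger-\ka^\dagger)\,\I{\hat j^\dagger\ne j^\dagger}.
\]
The first summand equals $\sqrt{n}(\hka_{L^\dagger}-\ka_{L^\dagger})$ up to $\sqrt{n}(\hka_{L^\dagger}-\ka_{L^\dagger})(\I{\hat j^\dagger=j^\dagger}-1)=O_p(1)\cdot o_p(1)=o_p(1)$, using Equation~\eqref{eq:var_ka} for the $O_p(1)$ factor; the second summand is $o_p(1)$ because, for any $\varepsilon>0$, $\{\sqrt{n}|\hka^\dagger-\ka^\dagger|\,\I{\hat j^\dagger\ne j^\dagger}>\varepsilon\}\subseteq\{\hat j^\dagger\ne j^\dagger\}$, whose probability vanishes by the preceding paragraph. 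Hence $\sqrt{n}(\hka^\dagger-\ka^\dagger)=\sqrt{n}(\hka_{L^\dagger}-\ka_{L^\dagger})+o_p(1)$, and Equation~\eqref{eq:var_ka} at the fixed point $L^\dagger$ together with Slutsky's theorem gives the $\mathcal{N}(0,\sigma^{\ka,2}_{L^\dagger})$ limit. The same event-restriction argument applied to $\hat\sigma^{\ka,2}_{\hL^\dagger}=\sum_{j=1}^{m}\hat\sigma^{\ka,2}_{L_j}\I{\hat j^\dagger=j}$, together with the pointwise variance consistency from the first step, gives $\hat\sigma^{\ka,2}_{\hL^\dagger}\pto\sigma^{\ka,2}_{L^\dagger}$.

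The routine ingredients are the Kaplan--Meier consistency facts of the first step. The one point that needs genuine care is the ``wrong-selection'' term in part~2: although $\sqrt{n}(\hka^\dagger-\ka^\dagger)$ can be of order $\sqrt{n}$ on $\{\hat j^\dagger\ne j^\dagger\}$ (since $\ka_{L_j}\ne\ka^\dagger$ for $j\ne j^\dagger$), multiplying by the indicator of an event of vanishing probability still renders the contribution $o_p(1)$. This is precisely where the discreteness of the grid is decisive: unlike the continuous-time setting, there is no continuum of near-maximizers that could perturb the selection, so $\hL^\dagger$ locks onto the population value with overwhelming probability and the resulting inference on $\ka^\dagger$ incurs no asymptotic selection penalty---the oracle behavior asserted.
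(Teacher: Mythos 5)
Your proposal is correct, and its crux is the same as the paper's: the discrete grid lets the data-driven index coincide with the oracle index with probability tending to one, after which the oracle limit law transfers to the adaptive estimator. The differences are in execution. For part~1, you get exact selection via pointwise consistency of $\mathbb{M}_n^\dagger(L_j)$ at the $m$ grid points plus a union bound against the fixed gap $\delta$; the paper instead bounds $\Pr(\hat j^\dagger = j)$ for $j\ne j^\dagger$ by noting that $\sqrt{n}\bigl(\mathbb{M}_{nj}^\dagger - \mathbb{M}_{nj^\dagger}^\dagger - (\mathbb{M}_{j}^\dagger - \mathbb{M}_{j^\dagger}^\dagger)\bigr)$ is $O_p(1)$ while $-\sqrt{n}(\mathbb{M}_{j}^\dagger - \mathbb{M}_{j^\dagger}^\dagger)\to\infty$ under Assumption~\ref{assum:unique_dt} — same conclusion, your version needing only consistency rather than $\sqrt{n}$-scale control of the criterion. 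For parts~2 and the variance claim, the routes genuinely diverge: the paper Taylor-expands $\hka(\cdot)$ and $\hat\sigma^{\ka,2}_{(\cdot)}$ around $L^\dagger$ and uses the $o_p(n^{-1/2})$ rate of $\hL^\dagger - L^\dagger$ to kill the correction terms, whereas you restrict to the event $\{\hat j^\dagger = j^\dagger\}$ on which $\hka^\dagger = \hka_{L^\dagger}$ and $\hat\sigma^{\ka,2}_{\hL^\dagger} = \hat\sigma^{\ka,2}_{L^\dagger}$ exactly, handling the wrong-selection remainder by the vanishing-probability argument. Your route is more elementary (no differentiability of the sample paths $\hka(\cdot)$, $\hat\sigma^{\ka,2}_{(\cdot)}$ is invoked) and makes the ``no selection cost'' mechanism transparent; the paper's expansion-based route is the one that extends to regimes where exact selection is not available and only a fast rate holds, as exploited later in the infill asymptotics of Proposition~\ref{prop:grid}. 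Two minor notes: your claim that $\sigma^{\ka,2}_{L_j}$ is bounded away from zero is the same implicit positivity the paper itself uses in its Lemma~\ref{lem:1}, so it is not an additional gap; and your side remark that Assumption~\ref{assum:unique_dt} holds automatically under $\mathsf{H}_0$ tacitly uses $S^1_t=S^0_t$ on the whole interval (so that $\ka_{L_j}=0$), not merely equality at the grid points as in the theorem's stated grid null — immaterial to the proof, but worth phrasing carefully.
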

The rate of convergence in part (1), $o_p(n^{-1/2})$, is faster than the typical parametric rate of $O_p(n^{-1/2})$, as observed in the continuous-time setting. This faster rate underscores the efficiency of the selection process on the grid and directly leads to the no-selection-cost property in terms of the asymptotic variance of $\hka^\dagger$.
Using Theorem~\ref{thm:dt}, we can construct an asymptotically valid $(1-\alpha)$-confidence interval for $\ka^\dagger$: \[[\hka^\dagger - z_{1-\alpha/2}\hat\sigma_{\hat{L}^\dagger}^{\ka}/\sqrt{n}, \hka^\dagger + z_{1-\alpha/2}\hat\sigma_{\hat{L}^\dagger}^{\ka}/\sqrt{n}],\] where $z_{1-\alpha/2}$ is the $1-\alpha/2$ quantile of the standard normal distribution. Hypothesis testing can then be performed by checking whether the confidence interval contains zero. 
The point estimate of the optimal restriction time, $\hat{L}^\dagger$, aids in the interpretation of $\hat{\kappa}^\dagger$. Due to the super-efficiency of $\hat{L}^\dagger$, its $\sqrt{n}$-asymptotic variance is zero, i.e. $\sqrt{n}(\hat{L}^\dagger - {L}^\dagger) \pto 0$, leading to a confidence interval that is a singleton: $\{\hat{L}^\dagger\}$.

\subsection{Balancing Grid Density: Insights and Guidelines}

The primary advantage of using a discrete-time grid is that it avoids any asymptotic cost for selection. However, there are potential drawbacks to consider. A grid that is too sparse might exclude restriction times with a much larger value of $\mathbb{M}$. Conversely, an overly dense grid given a small sample size could create challenges for correctly selecting the true optimal restriction time. Consequently, choosing an appropriate grid density is beneficial, and researchers need practical guidelines for this task.

To address this, we propose a rule-of-thumb for selecting the grid by investigating estimator properties under an \emph{infill} asymptotic regime. Specifically, we consider the case where the number of grid points, $m_n$, increases with the sample size $n$. We show that as long as $m_n$ grows at a controlled rate, the discrete-time grid remains cost-free for selection, preserving the method's efficiency.

We formalize these insights in the following proposition. To distinguish between the discrete-time and continuous-time settings, let $L^\dagger_{\ct}$ denote the optimal restriction time in the continuous-time setting (as defined in Equation \eqref{eq:M_pen}), and let $\ka^\dagger_{\ct} \equiv \ka(L^\dagger_{\ct})$. Note that the estimands for the discrete settings change as the grid becomes denser.

\begin{proposition}[Asymptotic Results Under Infill Regime]
  Suppose the grid has $m_n$ points, with $m_n$ increasing with the sample size $n$, and assume there exist constants $\bar{k} > 0$ and $\underline{k} > 0$ such that for all $j = 1, \ldots, m_n-1$, $\underline{k}/(m_n-1) \le L_{j+1} - L_j \le \bar{k}/(m_n-1).$ Let $m_n = kn^\gamma$, where $k,\gamma$ are positive numbers. Suppose Assumptions \ref{assum:setting}, \ref{assum:smooth}, and \ref{assum:unique_pen} hold, and Assumption \ref{assum:unique_dt} holds for each $n$. Then, as $n \to \infty$,
  \begin{enumerate}
    \item (Estimand Consistency) $L^\dagger \to L^\dagger_{\ct}, \ka^\dagger \to \ka^\dagger_{\ct}$.
    \item  (Fast Rate for $L^\dagger$) If $\gamma < 1/4$ (i.e., $m_n = o_p(n^{1/4})$), then $n^\upsilon|\hat{L}^\dagger - L^\dagger| = o_p(1)$ for arbitrary $\upsilon \ge 0$, and 
    \[\sqrt{n}(\hka^\dagger - \ka^\dagger) = \sqrt{n}(\hka_{L^\dagger} - \ka_{L^\dagger}) + o_p(1) \dto \mathcal{N}(0, \sigma_{L^\dagger}^{\ka, 2}).\]
    \item (Slow Rate for $L^\dagger_{\ct}$) When $\gamma < 1/4$, $|\hat{L}^\dagger - L^\dagger_{\ct}| = O_p(n^{-\gamma}) \pto 0$.
  \end{enumerate}
  \label{prop:grid}
\end{proposition}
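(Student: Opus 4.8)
The plan is to transfer everything to the continuous-time penalized problem of Section~\ref{sec:pen}: the grid is dense at rate $m_n^{-1}\asymp n^{-\gamma}$, the sampling fluctuation of the criterion is $O_p(n^{-1/2})$, and $\gamma<1/4$ is exactly the regime in which $m_n^{-2}\asymp n^{-2\gamma}$ dominates that fluctuation. Throughout, $\mathbb{M}^\dagger$ is continuous on the compact set $\mathcal L$ and $C^2$ near $L^\dagger_{\ct}$ (Assumption~\ref{assum:smooth} plus positivity from Assumption~\ref{assum:setting}, as noted after Assumption~\ref{assum:smooth}), and $\mathbb{M}_n^\dagger-\mathbb{M}^\dagger=\mathbb{M}_n-\mathbb{M}$ because the penalty cancels.

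\textbf{Part 1 (estimand consistency).} Let $\ell_n$ denote the grid point nearest $L^\dagger_{\ct}$; the spacing bound gives $|\ell_n-L^\dagger_{\ct}|\le\bar k/(m_n-1)\to0$, hence $\mathbb{M}^\dagger(\ell_n)\to\mathbb{M}^\dagger(L^\dagger_{\ct})$, and since $\mathbb{M}^\dagger(\ell_n)\le\mathbb{M}^\dagger(L^\dagger)\le\mathbb{M}^\dagger(L^\dagger_{\ct})$ we obtain $\mathbb{M}^\dagger(L^\dagger)\to\mathbb{M}^\dagger(L^\dagger_{\ct})$. The well-separation in Assumption~\ref{assum:unique_pen} upgrades this to $L^\dagger\to L^\dagger_{\ct}$ (any subsequence staying outside an open neighborhood $G$ of $L^\dagger_{\ct}$ would obey $\mathbb{M}^\dagger(L^\dagger)\le\sup_{L\notin G}\mathbb{M}^\dagger<\mathbb{M}^\dagger(L^\dagger_{\ct})$, a contradiction), and continuity of $\ka(\cdot)=\int_0^{(\cdot)}(S^1-S^0)$ gives $\ka^\dagger\to\ka^\dagger_{\ct}$. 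A Taylor expansion of $\mathbb{M}^\dagger$ at $L^\dagger_{\ct}$, using $\dy^2_L\mathbb{M}^\dagger(L^\dagger_{\ct})<0$, further yields $|L^\dagger-L^\dagger_{\ct}|=O(n^{-\gamma})$: indeed $\mathbb{M}^\dagger(L^\dagger_{\ct})-\mathbb{M}^\dagger(L^\dagger)$ is at most $\mathbb{M}^\dagger(L^\dagger_{\ct})-\mathbb{M}^\dagger(\ell_n)=O(m_n^{-2})$ and at least a constant multiple of $|L^\dagger-L^\dagger_{\ct}|^2$ for $n$ large.

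\textbf{Parts 2 and 3.} The core claim is $\Pr(\hL^\dagger=L^\dagger)\to1$, which I would derive from two ingredients: (i) the uniform rate $\sup_{L\in\mathcal L}|\mathbb{M}_n(L)-\mathbb{M}(L)|=O_p(n^{-1/2})$ (and $O_p(n^{-1})$ under $\mathsf{H_0}$), a consequence of weak convergence of the Kaplan--Meier processes $\sqrt n(\hat S^a-S^a)$ on $[0,L_{\max}]$ together with the Hadamard-differentiable dependence of $\hka(L)$ and $\hsigma^{\ka,2}_L$ on them and $\sigma^{\ka,2}_L$ bounded away from $0$ --- the machinery already underlying Proposition~\ref{thm:can} and Theorem~\ref{thm:dt}; and (ii) the margin bound $\delta_n:=\mathbb{M}^\dagger(L^\dagger)-\max_{j:\,L_j\ne L^\dagger}\mathbb{M}^\dagger(L_j)\gtrsim m_n^{-2}\asymp n^{-2\gamma}$. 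Granting (i) and (ii), for $\gamma<1/4$ we have $n^{-1/2}=o(n^{-2\gamma})$, so with probability tending to $1$, $2\sup_L|\mathbb{M}_n^\dagger(L)-\mathbb{M}^\dagger(L)|<\delta_n$, whence $\mathbb{M}_n^\dagger(L_j)\le\mathbb{M}^\dagger(L_j)+\delta_n/2<\mathbb{M}^\dagger(L^\dagger)-\delta_n/2\le\mathbb{M}_n^\dagger(L^\dagger)$ for every $L_j\ne L^\dagger$, i.e.\ $\hL^\dagger=L^\dagger$. Since $\Pr(\hL^\dagger=L^\dagger)\to1$ and $\hL^\dagger-L^\dagger=0$ on that event, $n^{\upsilon}|\hL^\dagger-L^\dagger|=o_p(1)$ for every fixed $\upsilon\ge0$; combined with $|L^\dagger-L^\dagger_{\ct}|=O(n^{-\gamma})$ from Part~1, this gives Part~3, $|\hL^\dagger-L^\dagger_{\ct}|=O_p(n^{-\gamma})\pto0$. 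For the expansion of $\hka^\dagger$: on $\{\hL^\dagger=L^\dagger\}$ one has $\sqrt n(\hka^\dagger-\ka^\dagger)=\sqrt n(\hka_{L^\dagger}-\ka_{L^\dagger})$ identically, and since $L^\dagger=L^\dagger_n\to L^\dagger_{\ct}$ while $\sqrt n(\hka(\cdot)-\ka(\cdot))$ converges weakly to a Gaussian process with continuous paths (the functional central limit theorem for $\hka$, as in the proof of Proposition~\ref{thm:can}), the remainder $\sqrt n(\hka_{L^\dagger_n}-\ka_{L^\dagger_n})-\sqrt n(\hka_{L^\dagger_{\ct}}-\ka_{L^\dagger_{\ct}})$ is bounded by $\sqrt n\,|L^\dagger_n-L^\dagger_{\ct}|\cdot\sup_a\sup_{t\le L_{\max}}|\hat S^a(t)-S^a(t)|=O_p(n^{-\gamma})=o_p(1)$; therefore $\sqrt n(\hka^\dagger-\ka^\dagger)=\sqrt n(\hka_{L^\dagger}-\ka_{L^\dagger})+o_p(1)\dto\N(0,\sigma^{\ka,2}_{L^\dagger})$ with limiting variance $\sigma^{\ka,2}_{L^\dagger_{\ct}}=\lim_n\sigma^{\ka,2}_{L^\dagger_n}$.

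\textbf{Main obstacle.} Everything hinges on the margin bound (ii). Fix a neighborhood of $L^\dagger_{\ct}$ on which $c_1(L-L^\dagger_{\ct})^2\le\mathbb{M}^\dagger(L^\dagger_{\ct})-\mathbb{M}^\dagger(L)\le c_2(L-L^\dagger_{\ct})^2$ --- local strict concavity, which requires $\dy^2_L\mathbb{M}^\dagger(L^\dagger_{\ct})<0$ (the continuous-time form of Assumption~\ref{assum:nonsin_pen}; under $\mathsf{H_0}$ one may take $c_1=c_2=c$, since then $L^\dagger_{\ct}=\tilde{L}$). Split the competitors $L_j\ne L^\dagger$ into: (a) those bounded away from $L^\dagger_{\ct}$, where Assumption~\ref{assum:unique_pen} supplies a \emph{fixed} positive gap that dwarfs $n^{-2\gamma}$; (b) those inside the neighborhood but at least a suitable fixed number $r_1$ of grid steps from $L^\dagger$, where $|L_j-L^\dagger_{\ct}|\gtrsim r_1 m_n^{-1}$ while $|L^\dagger-L^\dagger_{\ct}|\lesssim m_n^{-1}$, so the quadratic bounds yield a gap $\gtrsim m_n^{-2}$ once $r_1$ is chosen large enough; and (c) the at most $2r_1$ grid points within $r_1$ steps of $L^\dagger$. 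Family (c) is the genuinely delicate case: Assumption~\ref{assum:unique_dt} (imposed for each $n$) gives a strictly positive gap, but one needs it to be quantitatively non-degenerate, of order $m_n^{-2}$, along the sequence --- this is automatic unless $L^\dagger_{\ct}$ approaches a grid midpoint, and holds at the exact rate $\Theta(m_n^{-2})$ under the recommended $\tilde{L}=L_{\tilde{j}}$ (where under $\mathsf{H_0}$, $\delta_n=c\min_{j\ne\tilde{j}}(L_j-\tilde{L})^2$). Taking the minimum over (a)--(c) delivers $\delta_n\gtrsim m_n^{-2}$; the remaining ingredient, the $\sqrt n$-uniform rate (i), is routine given the paper's existing empirical-process tools.
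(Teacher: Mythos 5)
Your proposal is essentially correct at the level of rigor of the paper's own argument, but it reaches the key conclusion $\Pr(\hL^\dagger=L^\dagger)\to 1$ by a different mechanism. You use a \emph{uniform} sup-norm rate $\sup_{L\in\mathcal L}|\mathbb{M}_n(L)-\mathbb{M}(L)|=O_p(n^{-1/2})$ (and $O_p(n^{-1})$ under $\mathsf{H_0}$) combined with a deterministic margin $\delta_n\gtrsim m_n^{-2}\asymp n^{-2\gamma}$, so that for $\gamma<1/4$ the noise is dominated by the margin and selection is exact with probability tending to one. The paper instead runs a pairwise union bound over the $m_n\asymp n^\gamma$ grid points, controlling each comparison with a pointwise CLT plus Mill's inequality and a Berry--Esseen correction under $\mathsf{H_1}$, and with a $\chi^2_1$ tail bound under $\mathsf{H_0}$ (where it obtains the weaker requirement $\gamma<1/2$; your parenthetical $O_p(n^{-1})$ rate under $\mathsf{H_0}$ recovers the same threshold, though you do not state it). Your route trades the paper's explicit tail calculations for a uniform empirical-process bound, which is not proved in the paper (Lemma S2 is pointwise) but is, as you say, obtainable from the same Kaplan--Meier/martingale machinery; conversely the paper's route implicitly needs uniformity of the Berry--Esseen constants and variances over $j$. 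Two genuine improvements on the paper's write-up: your Part 1 actually derives the rate $|L^\dagger-L^\dagger_{\ct}|=O(n^{-\gamma})$ from two-sided quadratic bounds (the paper simply asserts the discrete optimum lies within one grid spacing of $L^\dagger_{\ct}$), and your treatment of $\sqrt n(\hka^\dagger-\ka^\dagger)$ correctly handles the fact that $L^\dagger=L^\dagger_n$ moves with $n$, via the modulus-of-continuity bound $\sqrt n\,|L^\dagger_n-L^\dagger_{\ct}|\sup_t|\hat S^a_t-S^a_t|=O_p(n^{-\gamma})$, where the paper only says ``similar to Theorem 2.'' Note, however, that your quadratic expansions invoke curvature at $L^\dagger_{\ct}$ (Assumption 4$'$), which is not among the proposition's stated hypotheses; the paper's Taylor-expansion step uses the same ingredient tacitly.

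The ``main obstacle'' you flag is real, and it is equally present (and unaddressed) in the paper's proof: the nearest-neighbor margin $\mathbb{M}^\dagger(L_{j^\dagger})-\max(\mathbb{M}^\dagger(L_{j^\dagger\pm1}))$ is of order $h_n|h_n-2\delta_n|$ when $L^\dagger_{\ct}$ sits at offset $\delta_n$ within a cell of width $h_n\asymp m_n^{-1}$, and Assumption 3$''$ only makes it strictly positive for each $n$, not bounded below by a multiple of $m_n^{-2}$ along the sequence. The paper's step (c) asserts $-\sqrt n(\mathbb{M}^\dagger_{j^\dagger+1}-\mathbb{M}^\dagger_{j^\dagger})=O(n^{1/2-2\gamma})\to\infty$, inferring divergence from an upper bound, which silently excludes exactly the degenerate configuration ($L^\dagger_{\ct}$ drifting toward a grid midpoint) that you isolate in your case (c). So neither argument establishes the fast rate for $\hL^\dagger$ without a quantitative non-degeneracy of the margin; your proposal is not weaker than the paper here, only more candid, and your observation that the margin is $\Theta(m_n^{-2})$ under $\mathsf{H_0}$ with $\tilde L=L_{\tilde j}$ is the correct partial repair.
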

Part (1) of Proposition~\ref{prop:grid} demonstrates that as the grid becomes denser, the estimands in the discrete-time and continuous-time settings converge.
Part (2) establishes the importance of controlling the growth rate of $m_n$. A rule-of-thumb emerges: \emph{Selecting a grid with approximately $n^{1/4}$ to $2n^{1/4}$ restriction time points.} The upper bound $1/4$ for $\gamma$ might be improvable. This remains an avenue for future work.
Part (3) indicates that with a controlled growth rate for $m_n$, the rate of convergence to the continuous-time estimand is slower. This tradeoff highlights a subtle difference between the discrete and continuous settings: while the discrete-time grid ensures a fast rate for $L^\dagger$, there is a rate-penalty for the estimation of $L^\dagger_{\ct}$, which will generally be slower than the parametric rate observed in the continuous-time settings.

\section{Simulation}
\label{sec:simulation}

We conducted an extensive simulation study to evaluate the performance of the proposed methods in comparison to existing methods from the literature. The evaluation focuses on type I error, power, and coverage probability, with the aim of providing recommendations for practical application. The methods and their configurations are detailed below.

\textbf{Proposed Methods:}\\
(1) \texttt{AdaRMST.ct}: Penalization method with the continuous-time restriction candidate set $\mathcal{L} = [0.2, 4.2]$ years and bootstrap confidence intervals $\CIb_{\ka^\dagger,n,\alpha}$. We apply the default setting with $c = 0.002$, $\tilde{L}$ equal to midpoint $2.2$, and $B = 1000$ bootstrap resamples.\\ 
(2) \texttt{AdaRMST.dt}: Penalization method with a discrete-time grid $\mathcal{L}$ consisting of 10 equally spaced points between $0.2$ and $4.2$. Wald confidence intervals are used with default settings $c = 0.005$ and $\tilde{L} = 1.98$, the midpoint of the grid.\\ 
(3) \texttt{hulc}: We use the anti-conservative version of convex-hull based confidence interval, $\tCIh_{n,\alpha}$.

\textbf{Comparison Methods:}\\
(1) \texttt{rmst}: Traditional RMST estimator with the restriction time set to the maximum follow-up time, as implemented in the \texttt{R} package \texttt{survRM2} \citep{r2021, survRM2}.\\ 
(2) \texttt{logrank}: Log-rank test \citep{schoenfeld1981asymptotic}, a standard and widely used method for comparing survival curves.\\ 
(3) \texttt{maxcombo}: A popular omnibus test that combines multiple Fleming-Harrington (FH) weighted log-rank tests with different weight functions, including FH(0,0), FH(0,1), FH(1,0), FH(1,1) \citep{nph}. This approach is tailored to detect treatment effects under a variety of hazard ratio patterns (e.g. early, middle, delayed), making it a valuable comparator for the proposed methods that aim to accommodate non-proportional hazards adaptively.\\ 
(4) \texttt{hctu}: Omnibus RMST test based on multiple restriction times using 10 equally spaced points \citep{survRM2adapt,horiguchi2018flexible}. This recent method addresses the sensitivity of RMST estimates to the choice of restriction time by considering a range of values, aligning closely with the objectives of our proposed methods. We use the \texttt{R} package \texttt{survRM2adapt} developed by the authors.\\
(5) \texttt{oracle}: The Oracle RMST estimator uses the true optimal restriction time $L^*$, as defined in Equation \eqref{eq:estimand}, which is determined directly from the true population criterion function rather than being estimated from the data. It serves as an ideal benchmark with exactly zero selection cost and maximum power. Although not accessible in practical settings, it provides a theoretical upper bound for performance and highlights the potential benefits of accurately identifying the optimal restriction time.

\textbf{Simulation Design} \quad We examined the performance under sample sizes of $n = 300$, $600$, and $1000$, with an allocation ratio of 1:1. For two-sided tests, a significance level of $\alpha = 0.05$ was used, corresponding to a confidence level of $1-\alpha = 0.95$ for confidence intervals for the treatment effect $\kappa^*$ or $\kappa^\dagger$.
The censoring time distribution was exponential with a rate of $1/2$ in both arms, yielding approximately 33\% censoring across scenarios. The control arm's event times followed an exponential distribution with a rate of $1$, consistent across all scenarios. Administrative censoring was set at 5 years for both arms. 

For the treatment arm, we considered nine scenarios designed to capture common patterns observed in randomized trials: (1) \texttt{null}: No treatment effect (null hypothesis). (2) \texttt{ph}: Proportional hazards. (3) \texttt{early}: Early treatment effects, where the benefit diminishes over time. (4) \texttt{tran}: Transient treatment effects, where the benefit appears temporarily and then disappears. (5) \texttt{cs}: Crossing survival curves. (6) \texttt{msep}: Middle separation of survival curves. (7) \texttt{delay\_1}: Moderately delayed treatment effects, with a gradual onset of benefit. (8) \texttt{delay\_2}: Heavily delayed treatment effects, where the benefit takes a longer time to emerge. (9) \texttt{delaycon}: Delayed treatment effects with converging survival curves, modeling a scenario where the benefit emerges late and then diminishes. These scenarios were generated using piecewise exponential distributions to allow flexible hazard patterns. Their corresponding survival curves are illustrated in Figure~\ref{fig:surv}. Detailed parameters for each scenario, and visualizations of hazard functions, criterion functions, optimal restriction times, and treatment effect function $\ka_t$ are provided in the Supplementary Material.
\begin{figure}
  \centering
  \includegraphics[width=0.8\linewidth]{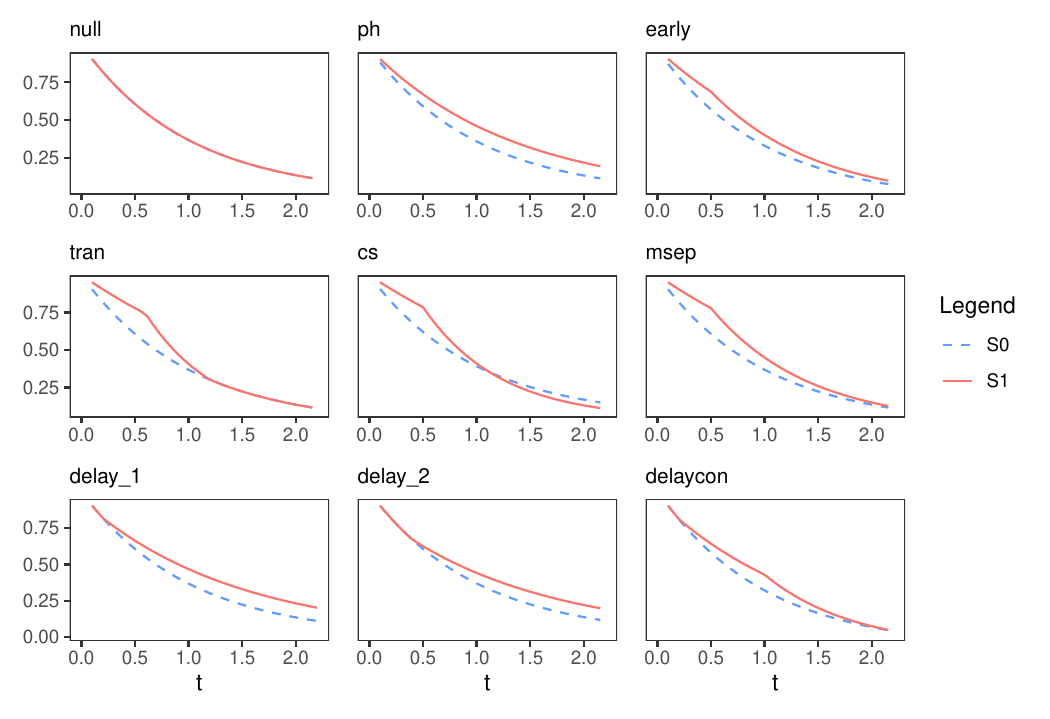}
  \caption{Survival curves for the nine scenarios. S0: control arm, S1: treatment arm. \texttt{null}: No treatment effect (null hypothesis). \texttt{ph}: Proportional hazards. \texttt{early}: Early treatment effects. \texttt{tran}: Transient treatment effects. \texttt{cs}: Crossing survival curves. \texttt{msep}: Middle separation. \texttt{delay\_1}: Moderately delayed treatment effects. \texttt{delay\_2}: Heavily delayed treatment effects. \texttt{delaycon}: Delayed treatment effects with converging survival curves. }
  \label{fig:surv}
\end{figure}

For each scenario and sample size, we generated $N = 2000$ datasets. The empirical type I error, power, and coverage probability (for RMST-based methods) were calculated for each method using an \texttt{R} implementation. 

\textbf{Simulation Results} 
\quad A subset of the results is visualized in Figure~\ref{fig:simu_res},  while additional findings are provided in Supplementary Material. The computation time for \texttt{AdaRMST} is approximately 2 minutes per dataset on a single CPU core of a 2023 MacBook Pro, with substantial reductions achievable through paralleling the bootstrap procedure.

The left panel of Figure~\ref{fig:simu_res} shows the \emph{power} of the proposed methods and comparison methods under different scenarios and sample sizes. The type I error is shown in the \texttt{null} scenario. \texttt{AdaRMST.ct} demonstrates power comparable to \texttt{maxcombo} and performs robustly across scenarios. \texttt{AdaRMST.dt}, however, achieves the highest power overall, closely approaching the performance of the \texttt{oracle} method. This finding aligns with Theorem~\ref{thm:dt}, which suggests that a fixed grid incurs minimal selection cost. In contrast, \texttt{hulc} is generally conservative compared to \texttt{AdaRMST}, though it outperforms \texttt{logrank} and \texttt{rmst} in settings with early treatment effects or crossing survival functions. \texttt{hctu} exhibits strong power but is generally less powerful than \texttt{AdaRMST.dt} across most scenarios. Notably, all methods demonstrate good type I error control, though \texttt{AdaRMST.dt} shows slight inflation when $n = 300$.
\begin{figure}
  \centering
  \includegraphics[width=1\linewidth]{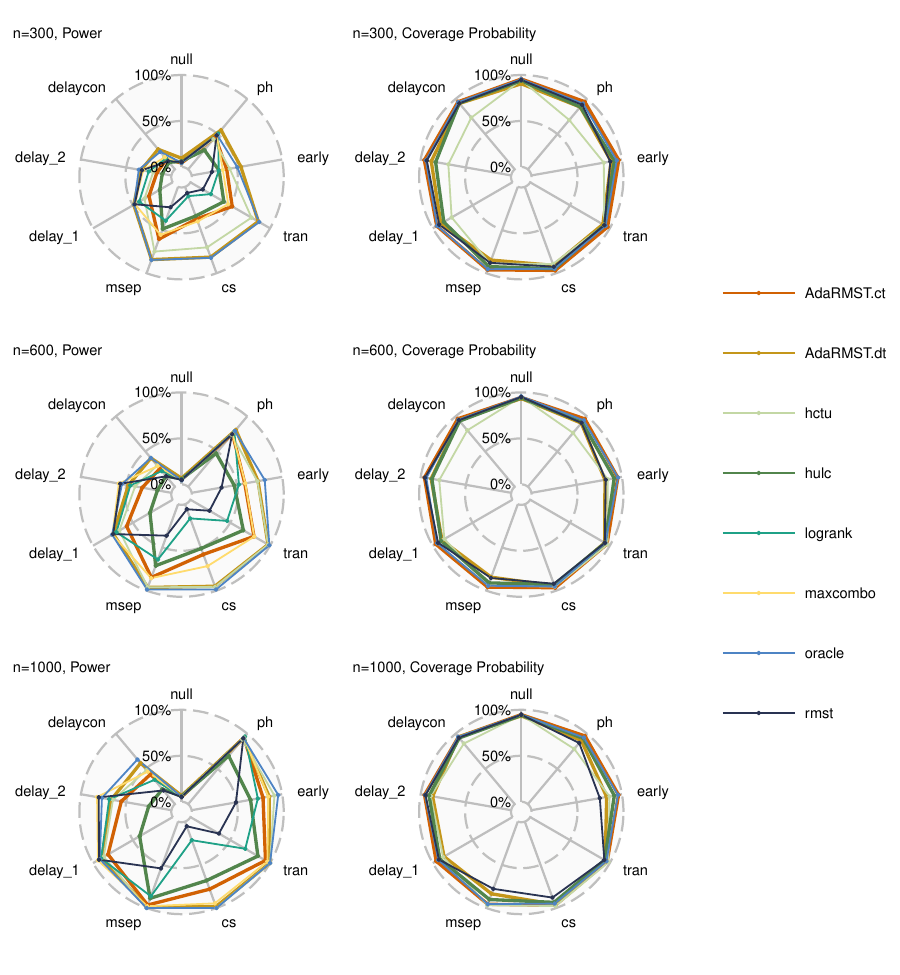}
  \caption{Power and Coverage Probability of the proposed methods and comparison methods under different scenarios and sample sizes. The significance level is $\alpha = 0.05$, corresponding to a confidence level of $1-\alpha = 95\%$ for confidence intervals for the treatment effect $\kappa^*$ or $\kappa^\dagger$. Type I error of each method is shown in the Power plots in the \texttt{null} scenario axis. (\texttt{null}: No treatment effect (null hypothesis). \texttt{ph}: Proportional hazards. \texttt{early}: Early treatment effects. \texttt{tran}: Transient treatment effects. \texttt{cs}: Crossing survival curves. \texttt{msep}: Middle separation. \texttt{delay\_1}: Moderately delayed treatment effects. \texttt{delay\_2}: Heavily delayed treatment effects. \texttt{delaycon}: Delayed treatment effects with converging survival curves. )}
  \label{fig:simu_res}
\end{figure}

The right panel of Figure~\ref{fig:simu_res} illustrates the \emph{coverage probability} of the confidence intervals for the treatment effect estimands, $\kappa^*$ or $\kappa^\dagger$, at the nominal $95\%$ level. Among the methods, \texttt{AdaRMST.ct} achieves the most accurate coverage, closely approximating the nominal $95\%$ level. In contrast, \texttt{maxcombo}, despite its strong statistical power, lacks a corresponding coherent effect size estimate and is therefore excluded from this evaluation. Both \texttt{AdaRMST.dt} and \texttt{hulc} exhibit moderate under-coverage, whereas \texttt{hctu} shows substantial under-coverage, particularly in scenarios with small sample sizes.

Tables \ref{tab:L_n300}, \ref{tab:L_n600}, and \ref{tab:L_n1000} in the Supplementary Material summarize the finite-sample performance of the proposed optimal restriction time estimators, $\hL^*$ and $\hdL$, including \texttt{AdaRMST.ct}, \texttt{AdaRMST.dt}, and \texttt{hulc}, across various scenarios. The summaries include the Bias, Monte Carlo Standard Deviation, and Root Mean Square Error (RMSE) relative to their respective true values.
In terms of RMSE, \texttt{AdaRMST.dt} consistently outperforms both \texttt{AdaRMST.ct} and \texttt{hulc} across sample sizes in scenarios such as \texttt{null}, \texttt{ph}, \texttt{early}, \texttt{delay\_1}, \texttt{delay\_2}, and \texttt{delaycon}. Conversely, \texttt{AdaRMST.ct} and \texttt{hulc} exhibit superior performance in the \texttt{tran}, \texttt{cs}, and \texttt{msep} scenarios.

\textbf{Recommendations} \quad Based on the above findings, we recommend the use of \texttt{AdaRMST.ct} for its balance of power and coverage, strong performance in both early and delayed treatment effect scenarios, coherence between estimation and testing, independence from grid selection, and well-controlled type I error.

\section{Application}
\label{sec:app}
We applied our method to a recent oncology clinical trial by \citet{tempero2023adjuvant}. This randomized, open-label phase III trial (ClinicalTrials.gov identifier: NCT01964430) evaluated the efficacy and safety of adjuvant nab-paclitaxel plus gemcitabine (nab-P + Gem) versus gemcitabine (Gem) alone in 866 patients with surgically resected pancreatic ductal adenocarcinoma. The primary endpoint, independently assessed disease-free survival (DFS), was not met, with a median DFS of 19.4 months for nab-P + Gem versus 18.8 months for Gem (hazard ratio [HR] = 0.88, 95\% CI: $[0.729, 1.063]$, $p$-value $=0.18$).

\begin{figure} 
  \centering \includegraphics[width=0.9\linewidth]{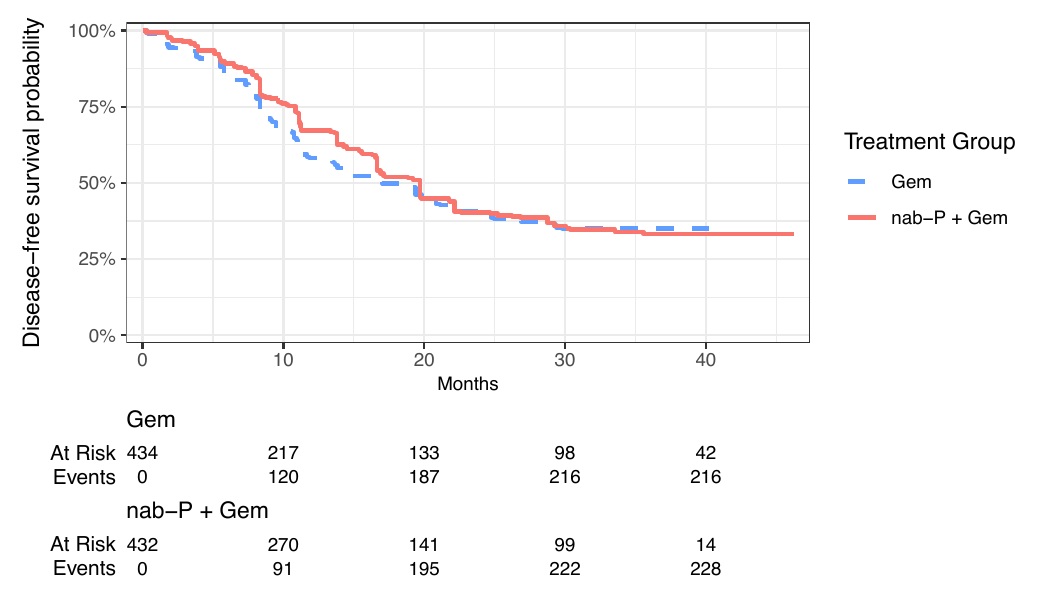} 
  \caption{Reconstructed Kaplan-Meier plot of disease-free survival from the phase III randomized trial on adjuvant therapy for pancreatic cancer by \citet{tempero2023adjuvant}.} 
  \label{fig:app} 
\end{figure}

\textbf{Motivation for Adaptive Methods} \quad The DFS curves from the trial (Figure~\ref{fig:app}, reconstructed) show a transient treatment effect, with the curves coinciding at approximately 18 months. This pattern aligns with the \texttt{tran} scenario in our simulation study, where a treatment effect is temporary and fades over time. Such a pattern violates the proportional hazards assumption, rendering the log-rank test underpowered. Despite a clear separation of survival curves in the first 18 months, the trial was deemed negative, failing to demonstrate sufficient evidence to change the standard of care. This highlights a limitation of relying on the log-rank test in settings with non-proportional hazards.

Given the difficulty in anticipating the treatment effect pattern in advance, an adaptive method that does not rely on the proportional hazards assumption, such as our proposed approach, may be more suitable. In the following, we reanalyze this trial using \texttt{AdaRMST.ct} to evaluate its performance and clinical implications.

\textbf{Data Reconstruction and Method Configuration} \quad We reconstructed the trial data from the Kaplan-Meier plot in Figure~2(A) of the original publication using the \texttt{IPDfromKM} \texttt{R} package \citep{liu2021ipdfromkm}, a method widely validated for secondary analysis of survival data. Although the reconstructed data may not perfectly match the original, the resulting Kaplan-Meier curves closely resemble the published plot, ensuring the validity of our analysis.

For \texttt{AdaRMST.ct}, our primary analysis method, without expert knowledge, we set the candidate restriction time range to $\mathcal{L}=[3,53]$ months, based on the trial's maximum follow-up time. We use $B = 6000$ bootstrap resamples. The initial guess is set as the default midpoint $\tilde{L} = 28$ months. The penalty parameter $c$ was determined using the default formula:
\[c = 0.002\times\ff{16}{(L_{\max} - L_{\min})^2}\times \left(\ff{1 \text{ month}}{1 \text{ year}}\right)^2 = 0.002\times\ff{16}{(53 - 3)^2}\times \left(\ff{1}{12}\right)^2 =  8.89\times 10^{-8}.\]

\textbf{Results} \quad We began by estimating the hazard ratio and performing the log-rank test on the reconstructed data. The log-rank test yielded a $p$-value of 0.40, and the estimated hazard ratio was $\text{HR} = 0.92$ (95\% CI: $[0.77, 1.11]$). These results align with the original publication but suggest a slightly smaller effect size.

The estimated optimal restriction time using \texttt{AdaRMST.ct} was $\hat{L}^\dagger = 15.5$ months (95\% CI: $[3.0, 19.7]$), effectively capturing the transient effect window. The estimated treatment effect was $\hat{\ka}^\dagger = 0.845$ months (95\% CI:  $[0.0345, 1.5256]$ months, $p$-value $=0.0392$), indicating a statistically significant benefit of nab-P + Gem over Gem at a significance level of $\alpha = 0.05$. This suggests that, on average, patients in the treatment group lived 0.845 months longer than those in the control group within the first 15.5 months of follow-up, representing a clinically meaningful outcome.

To provide a broader comparison, we also analyzed the data using \texttt{AdaRMST.dt}, \texttt{hulc}, \texttt{rmst}, \texttt{maxcombo}, and \texttt{hctu}. Detailed descriptions of these methods are provided in Section~\ref{sec:simulation}. The results are summarized as follows: 
(1) \texttt{AdaRMST.dt}: Using 10 equally spaced grid points in $[3, 53]$ months and the discrete-time default parameter $c = 0.005 \times \frac{16}{(L_{\max} - L_{\min})^2} \times \left(\frac{1 \text{ month}}{1 \text{ year}}\right)^2 = 2.22 \times 10^{-7}$, the estimates were $\hat{L}^\dagger = 14.1$ months and $\hat{\ka}^\dagger = 0.733$ months (95\% CI: $[0.195, 1.27]$, $p$-value $=7.56 \times 10^{-3}$). (2) \texttt{hulc}: The estimates were $\hat{L}^* = 15.4$ months and $\hat{\ka}^* = 0.843$ months (95\% CI: $[-2.30, 1.56]$). (3) \texttt{rmst}: At the default maximum follow-up restriction time of 41.3 months, the treatment effect was $\hat{\ka} = 0.964$ months (95\% CI: $[-1.25, 3.18]$, $p$-value: 0.393). (4) \texttt{maxcombo}: The method yielded a $p$-value of 0.196. (5) \texttt{hctu}: The estimates were $\hat{L}^* = 15.7$ months and $\hat{\ka}^* = 0.861$ months (95\% CI: $[0.096, 1.627]$, $p$-value $=0.02$).

\textbf{Conclusion} \quad Our reanalysis demonstrates that \texttt{AdaRMST.ct} effectively identifies a transient treatment effect that the log-rank test failed to detect and that may also have been overlooked by canonical RMST or \texttt{maxcombo} methods. Had adaptive RMST methods been employed, the trial could have been interpreted as positive, highlighting the potential early-phase benefit of nab-P + Gem. This underscores the value of adaptive methods in survival analysis, particularly for trials where the proportional hazards assumption is unlikely to hold.

\section{Discussion}
\label{sec:discussion}

Given the prevalence of non-PH scenarios in both clinical trials and real-world applications, it is imperative to develop methods that are broadly applicable and capable of adapting to diverse treatment effect patterns without requiring prior knowledge. 
To address this need, the proposed adaptive RMST-based methods offer a balance of flexibility, robustness, interpretability, statistical power, and coherence. Their adoption in clinical trials and real-world studies has the potential to significantly improve the detection and interpretation of treatment effects, ultimately benefiting both research and patient outcomes. Among these methods, the continuous-time penalized adaptive RMST approach (\texttt{AdaRMST.ct}) stands out as our recommended choice due to its strong and consistent performance across a variety of scenarios. 

While \texttt{AdaRMST.ct} requires specifying two parameters — $\tilde{L}$ (the initial guess for the restriction time) and $c$ (the penalty parameter) — our simulations and oncological trial example suggest that users can rely on default settings for routine applications. Specifically, a small penalty parameter, combined with a reasonable default choice for $\tilde{L}$, such as the midpoint of the candidate range, consistently yields reliable results, as illustrated by Figures \ref{fig:pen_300} - \ref{fig:pen_1000} in the Supplementary Material. This parameter configuration ensures ease of use while maintaining adaptability for a range of scenarios.

This work also opens several directions for future research: (1) Sample Size Calculation: Developing reliable and practical sample size formulas tailored to the proposed RMST-based methods would be invaluable for trial planning and ensuring adequately powered studies. 
(2) Incorporating Covariates: Extending the methods to account for baseline covariates would increase their applicability to more complex trial designs, allowing for greater precision and interpretability in stratified or adjusted analyses.

\section*{Acknowledgement}

The authors would like to acknowledge the National Institutes of Health (NIH) for their generous funding and support.

\section*{Supplementary Material}
\label{SM}

The Supplementary Material provides all technical proofs, supplementary results, and detailed descriptions of the simulations, including extended findings. The \texttt{R} package \texttt{AdaRMST} and reconstructed trial data referenced in Section~\ref{sec:app} are available online at \url{https://github.com/Jinghao-Sun/AdaRMST}.

\bibliographystyle{plainnat}
\bibliography{main}

\appendix

\renewcommand{\thefigure}{S\arabic{figure}}
\renewcommand{\thetable}{S\arabic{table}}
\renewcommand{\thetheorem}{S\arabic{theorem}}
\renewcommand{\thelemma}{S\arabic{lemma}}
\renewcommand{\theproposition}{S\arabic{proposition}}
\renewcommand{\theequation}{S\arabic{equation}}
\renewcommand{\thesection}{S\arabic{section}}

% Reset counters
\setcounter{figure}{0}
\setcounter{table}{0}
\setcounter{theorem}{0}
\setcounter{lemma}{0}
\setcounter{equation}{0}
\setcounter{section}{0}
\setcounter{proposition}{0}

\section{Proofs}

\subsection{Lemma \ref{lem:1}}

\begin{lemma}
  \label{lem:1}
  Under Assumptions \ref{assum:setting} and \ref{assum:smooth}, we have
  \[\mathbb{M}_n(L) \pto \mathbb{M}(L) \text{ uniformly on } \mathcal{L}, \text{ as } n \to \infty.\]
\end{lemma}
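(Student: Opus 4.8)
The plan is to decompose $\mathbb{M}_n(L) = \hat\kappa^2(L)/\hat\sigma^2(\hat\kappa(L))$ and prove uniform-in-$L$ consistency of the numerator and the denominator separately, then combine them with an elementary quotient bound. First note that, by the positivity part of Assumption~\ref{assum:setting}, $\Prob(X^a \ge L_{\max}) > 0$ for $a = 0,1$, so with probability tending to one the largest observed follow-up time in each arm exceeds $L_{\max}$; hence for all large $n$ the convention ``$\mathbb{M}_n(L) = -\infty$'' is never triggered on $\mathcal{L} = [L_{\min}, L_{\max}]$, and it suffices to argue on that event.

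\textbf{Numerator.}\quad I would start from the uniform consistency of the Kaplan--Meier estimator under independent censoring and positivity, $\sup_{t \in [0, L_{\max}]}|\hat S^a(t) - S^a(t)| \pto 0$ \citep[Chapter~4]{andersen1993statistical}. Since $\hat\theta^a(L) = \int_0^L \hat S^a(t)\,\dy{t}$ and $\theta^a(L) = \int_0^L S^a(t)\,\dy{t}$,
\[
\sup_{0 \le s \le L \le L_{\max}} \bigl|(\hat\theta^a(L) - \hat\theta^a(s)) - (\theta^a(L) - \theta^a(s))\bigr| \;\le\; L_{\max}\sup_{t \le L_{\max}}|\hat S^a(t) - S^a(t)| \;\pto\; 0,
\]
which gives $\sup_{L\in\mathcal{L}}|\hat\kappa(L) - \kappa(L)| \pto 0$ and, since $\kappa(\cdot)$ is bounded on $\mathcal{L}$, $\sup_{L\in\mathcal{L}}|\hat\kappa^2(L) - \kappa^2(L)| \pto 0$.

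\textbf{Denominator.}\quad This is the crux. For fixed $L$, pointwise consistency $\hat\sigma^2(\hat\kappa(L)) \pto \sigma^{\kappa,2}_L$ is the classical consistency of the Greenwood-type RMST variance estimator: writing the $a$-th block of \eqref{eq:sdkappa} as a Stieltjes integral against $Y^a(\cdot)/n$ and the rescaled counting process, Glivenko--Cantelli gives $\sup_{t\le L_{\max}}|Y^a(t)/n - \beta_a\Prob(X^a\ge t)|\pto 0$ (with $\beta_1:=\beta$, $\beta_0:=1-\beta$, so $n_a/n\to\beta_a$), the cumulative-hazard part converges likewise, the integrand $[\hat\theta^a(L)-\hat\theta^a(t)]^2$ is uniformly close to $[\theta^a(L)-\theta^a(t)]^2$ by the numerator step, and the gap between the $Y^a(Y^a-d_a)$ and $(Y^a)^2$ denominators is $O_p(n^{-1})$ (under Assumption~\ref{assum:smooth} ties occur with probability zero, and $\inf_{t\le L_{\max}}Y^a(t)/n$ is bounded below w.h.p.\ by positivity). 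To lift this to uniform convergence on $\mathcal{L}$, I would exploit \emph{monotonicity}: both $L\mapsto\hat\sigma^2(\hat\kappa(L))$ and $L\mapsto\sigma^{\kappa,2}_L$ are nondecreasing, because enlarging $L$ enlarges the domain of the sum/integral by nonnegative terms and increases each existing term $[\hat\theta^a(L)-\hat\theta^a(t)]^2 = \bigl(\int_t^L\hat S^a\bigr)^2$. Since $\sigma^{\kappa,2}_L$ is continuous on $\mathcal{L}$ (Assumption~\ref{assum:smooth}), pointwise convergence of nondecreasing functions to a continuous limit on a compact interval is automatically uniform (a standard sandwiching argument on a fine partition), so $\sup_{L\in\mathcal{L}}|\hat\sigma^2(\hat\kappa(L)) - \sigma^{\kappa,2}_L|\pto 0$.

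\textbf{Combining.}\quad The standing assumptions make $\mathbb{M}$, hence $\sigma^{\kappa,2}_L$, well-defined, continuous and positive on the compact set $\mathcal{L}$, so $\delta := \inf_{L\in\mathcal{L}}\sigma^{\kappa,2}_L > 0$; by the denominator step, $\inf_{L\in\mathcal{L}}\hat\sigma^2(\hat\kappa(L)) \ge \delta/2$ with probability tending to one. On that event the elementary bound
\[
\sup_{L\in\mathcal{L}}\left|\frac{\hat\kappa^2(L)}{\hat\sigma^2(\hat\kappa(L))} - \frac{\kappa^2(L)}{\sigma^{\kappa,2}_L}\right| \;\le\; \frac{2}{\delta}\sup_{L\in\mathcal{L}}|\hat\kappa^2(L)-\kappa^2(L)| \;+\; \frac{2\sup_{L\in\mathcal{L}}\kappa^2(L)}{\delta^2}\sup_{L\in\mathcal{L}}|\hat\sigma^2(\hat\kappa(L))-\sigma^{\kappa,2}_L|
\]
tends to zero by the two previous steps, which is the claim. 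The main obstacle is precisely the middle step: pointwise consistency of the discrete Greenwood-type estimator in \eqref{eq:sdkappa} is classical, but turning it into a statement uniform in $L$ needs an extra argument, and the monotonicity-plus-continuity route avoids a more laborious stochastic-equicontinuity analysis of $L\mapsto\hat\sigma^2(\hat\kappa(L))$.
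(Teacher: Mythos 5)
Your proposal is correct, but it reaches uniformity by a different route than the paper. You split $\mathbb{M}_n$ into numerator and denominator: uniform consistency of $\hat\kappa^2(\cdot)$ from uniform Kaplan--Meier consistency, pointwise consistency of the Greenwood-type variance estimator, and then the key observation that both $L\mapsto\hat\sigma^2(\hat\kappa(L))$ and $L\mapsto\sigma^{\kappa,2}_L$ are nondecreasing (which is indeed true, since enlarging $L$ adds nonnegative terms and increases each $\bigl(\int_t^L\hat S^a\bigr)^2$), so that a P\'olya/Dini-type partition argument upgrades pointwise to uniform convergence of the denominator; a quotient bound with $\inf_L\sigma^{\kappa,2}_L$ bounded away from zero finishes the proof. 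The paper instead treats the ratio $\mathbb{M}_n$ as a whole: after pointwise consistency via the continuous mapping theorem, it bounds the modulus of continuity $\sup_{t-s<\delta}|\mathbb{M}_n(t)-\mathbb{M}_n(s)|$ with elementary inequalities on increments of $\hat\kappa$ and $\hat\sigma^2(\hat\kappa)$ and invokes the stochastic-equicontinuity criterion (Theorem 18.14 of van der Vaart) to conclude uniform consistency. Your monotonicity argument is more elementary and avoids the empirical-process machinery, and it makes the role of the lower bound on the variance explicit; the paper's equicontinuity computation is heavier but keeps everything at the level of the criterion function itself. Note that both arguments rest on the same nondegeneracy condition, $\sigma^{\kappa,2}_{L_{\min}}>0$ (your $\delta>0$; the paper's assertion $\hat\sigma^2(\hat\kappa_{L_{\min}})\pto\sigma^2(\hat\kappa_{L_{\min}})>0$), which is implicit rather than listed among the formal assumptions, so your reliance on it is not a gap relative to the paper.
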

\begin{proof}
  $\ff{\hat\ka^2(L)}{\hat\sigma^2(\hat\ka(L))}$:  By the uniform consistency of KM estimator $\hat{S}^a(t)$ \citep{gill1983large}, $\hat{\theta}^a(L)$ is a uniformly consistent estimator of $\theta^a(L)$ \citep{zhao2016restricted}. Thus, $\hat\ka(L)$ is a uniformly consistent estimator of $\ka(L)$. We know that $\hat\sigma(\hat\ka(L))$ is a consistent estimator of the asymptotic standard deviation of $\hat\theta^1(L) - \hat\theta^0(L)$ \citep[Chap.~IV]{andersen1993statistical}. Therefore, $\ff{\hat\ka^2(L)}{\hat\sigma^2(\hat\ka(L))}$ is a consistent estimator of $\ff{\ka^2(L)}{\sigma^2(\hat\ka(L))}$ by the Continuous Mapping Theorem. 

  Note that $\mathbb{M}(L)$ and $\mathbb{M}_n(L)$ are continuous in $L$, since both their numerator and denominator are continuous in $L$. Since $\mathcal{L}$ is compact, we know that $\mathbb{M}$ and $\mathbb{M}_n$ are continuous and bounded functions on $\mathcal{L}$, i.e. $\ell^{\infty}(\mathcal{L})$. Next, we show that $\mathbb{M}_n$ is uniformly consistent for $\mathbb{M}$ on $\mathcal{L}$. We do so with an equicontinuity argument.

  Given a positive number $\delta > 0$, and $s,t \in \mathcal{L} \equiv [L_{\min}, L_{\max}]$ with $s<t$, we define
  \[\Delta \hka_{s,t} = \hka_t - \hka_s, 
  \quad \Delta \hat\sigma^2_{s,t}(\hat\ka) = \hat\sigma^2(\hat\ka_t) - \hat\sigma^2(\hat\ka_s) \ge 0.\]
  We have the following inequalities:
  \begin{equation*}
    \begin{aligned}
    & -s \le \hka_s \le s,  \quad  s-t \le\Delta \hka_{s,t} \le t-s,\\
    & 0 \le \Delta \hat\sigma^2_{s,t}(\hat\ka) \le n(t^2-s^2) \sum_{a =0}^1 \sum_{j:T_{a(j)} \le s} \ff{d_{a(j)}}{\left[Y^a(T_{a(j)})\right]\left[Y^a(T_{a(j)} )- d_{a(j)}\right]}\\
    &\quad + n(t-s)^2 \sum_{a =0}^1 \sum_{j: s< T_{a(j)} \le t} \ff{d_{a(j)}}{\left[Y^a(T_{a(j)})\right]\left[Y^a(T_{a(j)} )- d_{a(j)}\right]}.
    \end{aligned}
  \end{equation*}
  Then, we have modulus of continuity 
  \begin{equation*}
  \begin{aligned}
    &\sup_{t -s <\delta}| \mathbb{M}_n(t) - \mathbb{M}_n(s)|\\ 
    &= \sup_{t -s <\delta} \left|\ff{\hat\ka^2_t}{\hat\sigma^2(\hat\ka_t)} - \ff{\hat\ka^2_s}{\hat\sigma^2(\hat\ka_s)}\right| \\
    &= \sup_{t -s <\delta} \left|\ff{2\Delta \hka_{s,t}\hka_s \hat\sigma^2(\hat\ka_s) + \Delta \hka_{s,t}^2 \hat\sigma^2(\hat\ka_s) -  \hka_s^2 \Delta \hat\sigma^2_{s,t}(\hat\ka)}
    {\Delta\hat\sigma^2_{s,t}(\hat\ka)\hat\sigma^2(\hat\ka_s) + \hat\sigma^4(\hat\ka_s)} \right| \\
    & \le \sup_{t -s <\delta}  \left|\ff{2(t-s)s  + (t-s)^2  }
    {\Delta\hat\sigma^2_{s,t}(\hat\ka) + \hat\sigma^2(\hat\ka_s)} \right|
    +  \left|\ff{s^2}
    {\hat\sigma^2(\hat\ka_s) + \ff{\hat\sigma^4(\hat\ka_s)}{\Delta \hat\sigma^2_{s,t}(\hat\ka)}} \right| \\
    &\le \ff{2\delta L_{\max}  + \delta^2  }
    {\hat\sigma^2(\hat\ka_{L_{\min}})} + 
    \ff{L_{\max}^2}
    {\hat\sigma^2(\hat\ka_{L_{\min}}) + \ff{\hat\sigma^4(\hat\ka_{L_{\min}})}{(\delta^2 + 2\delta L_{\max})n\sum_{a =0}^1 \sum_{j:T_{a(j)} \le L_{\max}} \ff{d_{a(j)}}{\left[Y^a(T_{a(j)})\right]\left[Y^a(T_{a(j)} )- d_{a(j)}\right]}}} 
  \end{aligned}
  \end{equation*}
  Since $\hat\sigma^2(\hat\ka_{L_{\min}}) \pto \sigma^2(\hat\ka_{L_{\min}}) > 0$, and $n\sum_{a =0}^1 \sum_{j:T_{a(j)} \le L_{\max}} \ff{d_{a(j)}}{\left[Y^a(T_{a(j)})\right]\left[Y^a(T_{a(j)} )- d_{a(j)}\right]} $ is a consistent estimator for the asymptotic variance of (weighted) sum of the Nelson-Aalen estimators of the cumulative hazard functions for each arm and therefore is much greater than zero \citep[Chap. 3]{aalen2008survival}, we have that 
  \[ \lim_{\delta \to 0}\limsup_{n\to \infty} \Pr\left( \sup_{t -s <\delta}| \mathbb{M}_n(t) - \mathbb{M}_n(s)| >\epsilon \right) = 0.\]
  Therefore, invoking Theorem 18.14 in \citet{van2000asymptotic}, we have the uniform consistency of $\mathbb{M}_n$.
\end{proof}

\subsection{Lemma \ref{lem:var}}
We define below the estimands and estimators related to the variance of RMST estimator and its derivative. For the ease of notation, we describe the generic case and therefore omit the superscript $a$. Recall that
$\alpha_u \equiv \ff{f_u}{S_u}$,
$y_u \equiv \Pr(X \ge u)= S_u \G_u = \Pr(T \ge u)\Pr(C \ge u)$,
$\sigma^2_s = \int_0^s \ff{\alpha_u}{y_u} \dy{u}$.
Define $N_s = \sum_{i = 1}^n \I{X_i \le s, \Delta_i = 1},  Y_s = \sum_{i = 1}^n\I{X_i \ge s}$, $J_s \equiv \I{Y_s > 0}$. It is common to focus on time $s$ when $J_s = 1$, therefore, without loss of generality, we omit it in the formulae in this paper. Recall that the asymptotic variance and its derivative are \citep[Chap.~IV.3]{andersen1993statistical}
\begin{equation*}
  \begin{aligned}
    V_L &\equiv \sigma^2(\hat\theta(L)) = \int_0^L (\theta_L - \theta_v)^2 \ff{\alpha_v}{y_v} \dy{v},\\
    \dot{V}_L &= 2S_L\int_0^L (\theta_L - \theta_v)\ff{\alpha_v}{y_v}\dy{v} = 2S_L\theta_L\int_0^L \ff{\alpha_v}{y_v}\dy{v} - 2S_L\int_0^L  \theta_v\ff{\alpha_v}{y_v}\dy{v}.
  \end{aligned}
\end{equation*}
By Theorem IV.3.2 in \citet{andersen1993statistical}, both $\hat{\sigma}^2_t \equiv n\int_0^t \ff{\dy{N_s}}{Y_s^2}$ and the alternative formula $\hat{\sigma}^{2, \text{Greenwood}}_t \equiv n\int_0^t \ff{\dy{N_s}}{Y_s(Y_s - \Delta N_s)}$ associated with the Greenwood's formula are uniformly consistent for $\sigma^2_t$ and thus also for each other. We use the Greenwood's formula version plug-in in $\mathbb{M}_n$ for its better finite sample properties. However, the asymptotic properties of our estimators are the same plugging in either estimator for $\sigma^2_t$. Therefore, in the following, to simplify notation, we only show the result with $\hat{\sigma}^2_t$. Define the estimators as
\begin{equation*}
  \begin{aligned}
    \hat{V}_L &\equiv \hat\sigma^2(\hat\theta(L)) 
    = \int_0^L (\hat\theta_L - \hat\theta_v)^2 \dy{\hat{\sigma}^2_v} 
    = \int_0^L \ff{n(\hat\theta_L - \hat\theta_v)^2 }{Y_v^2}\dy{N_v},\\
    \hat{\dot{V}}_L &= 2\hat{S}_L\int_0^L \ff{n(\hat\theta_L - \hat\theta_v)}{Y_v^2}\dy{N_v}.
  \end{aligned}
\end{equation*}
In the next result, we prove the asymptotic normality of RMST variance estimator and its derivative estimator. The consistency of the variance estimator is well known from the literature, e.g. \citep{andersen1993statistical}, while its asymptotic distribution is not yet established to the best of our knowledge. We further prove the consistency and asymptotic normality of the derivative estimator, which is needed for the subsequent proofs.
\begin{lemma}[Asymptotic Properties of RMST variance estimator and its derivative]
  \label{lem:var}
  Under Assumptions \ref{assum:setting}, \ref{assum:smooth}, as $n \to \infty$, we have $\hat{V}_L \pto {V}_L, \hat{\dot{V}}_L \pto {\dot{V}}_L$, and 
  \begin{equation*}
  \begin{aligned}
    \sqrt{n}(\hat{V}_L - V_L) &\dto \mathcal{N}(0, \sigma^{V,2}_L), \\
    \sqrt{n}(\hat{\dot{V}}_L - \dot{V}_L) &\dto \mathcal{N}(0, \sigma^{\dot{V},2}_L).
  \end{aligned}
  \end{equation*}
\end{lemma}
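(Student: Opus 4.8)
The plan is to express both $\hat V_L$ and $\hat{\dot V}_L$ as smooth functionals of processes whose joint weak convergence is already classical in the counting-process framework of \citet{andersen1993statistical}, and then invoke the functional delta method. The basic building blocks are the Nelson--Aalen estimator $\hat A_t = \int_0^t \dy{N_s}/Y_s$, the Kaplan--Meier estimator $\hat S_t$, the crude variance estimator $\hat\sigma^2_t = n\int_0^t \dy{N_s}/Y_s^2$, and the restricted-mean process $\hat\theta_t = \int_0^t \hat S_u \dy u$. First I would record that, under Assumptions~\ref{assum:setting} and \ref{assum:smooth}, the vector process $\sqrt{n}\bigl(\hat A_\cdot - A_\cdot,\ \hat S_\cdot - S_\cdot,\ \hat\sigma^2_\cdot - \sigma^2_\cdot\bigr)$ converges weakly in $\ell^\infty([0,L_{\max}])^3$ (indeed jointly in the arm-indexed product) to a mean-zero Gaussian process; this follows from the martingale central limit theorem (Rebolledo) applied to $\sqrt n(\hat A - A)$, together with the continuous-mapping / functional-delta representations of $\hat S$ and $\hat\sigma^2$ as Hadamard-differentiable maps of $\hat A$ and $\hat N$. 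Positivity (Assumption~\ref{assum:setting}.4) guarantees $y_u$ is bounded away from zero on $[0,L_{\max}]$, so all integrands and the map $u\mapsto \alpha_u/y_u$ are bounded and the limit variances are finite.

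The second step is to write $V_L$ and $\dot V_L$ as explicit functionals and check Hadamard differentiability. Using $\hat V_L = \int_0^L (\hat\theta_L - \hat\theta_v)^2 \dy{\hat\sigma^2_v}$ and $\hat{\dot V}_L = 2\hat S_L \int_0^L (\hat\theta_L-\hat\theta_v)\dy{\hat\sigma^2_v}$, one sees each is a composition of (i) integration $v\mapsto\hat S_v$ to get $\hat\theta_v$, which is linear hence trivially Hadamard-differentiable; (ii) the bilinear ``integrate against a signed measure'' map $(g,\mu)\mapsto \int_0^L g_v\,\dy{\mu_v}$, which is Hadamard-differentiable on the domain of functions of bounded variation; and (iii) pointwise evaluation and products, handled by the ordinary chain rule. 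Composing, $\sqrt n(\hat V_L - V_L)$ and $\sqrt n(\hat{\dot V}_L - \dot V_L)$ each converge to a linear functional of the limiting Gaussian process, hence are asymptotically normal; the variances $\sigma^{V,2}_L$ and $\sigma^{\dot V,2}_L$ are the variances of those linear functionals and can in principle be written out via the optional-variation process of the underlying martingales (e.g. as $\int_0^L (\,\cdots)^2 \alpha_v/y_v\,\dy v$-type expressions). Consistency $\hat V_L\pto V_L$, $\hat{\dot V}_L\pto\dot V_L$ is then immediate (it is also the classical statement cited from \citet{andersen1993statistical}), but it drops out for free from the same differentiability argument.

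The main obstacle I anticipate is \emph{bookkeeping the joint asymptotic linearization cleanly}, not any single deep step. Specifically: $\hat V_L$ mixes a term $\hat\theta_L$ that is a global functional of $\hat S$ on $[0,L]$ with an integral against the random measure $\dy{\hat\sigma^2_v}$, and the fluctuation of $\hat\theta_L$ is correlated with the fluctuation of $\hat\sigma^2_v$ because both are driven by the same counting-process martingale. So one must be careful to carry \emph{all} the influence-function contributions — the $(\hat\theta_L-\hat\theta_v)$ part, the $\hat\theta_v$ part inside the integrand, and the $\dy{\hat\sigma^2_v}$ part — rather than treating the measure as fixed. A convenient way to organize this is to fix the representation $\hat\theta_t - \theta_t = -\int_0^t S_u\,[\hat A_u - A_u]\dy u + o_p(n^{-1/2})$ (itself a delta-method expansion of Kaplan--Meier), substitute it together with a linearization of $\hat\sigma^2_v - \sigma^2_v$, expand the square, and collect terms; the cross terms are where errors are easy to make. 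A secondary technical point is controlling the remainder uniformly: one needs the integrands to be manageable (bounded variation, uniformly bounded) so that $o_p(n^{-1/2})$ errors inside the integrals stay $o_p(n^{-1/2})$ after integration over the fixed compact $[0,L]$, which Assumption~\ref{assum:smooth} plus positivity supply. Once the linearization is in hand, asymptotic normality is just the CLT for the resulting stochastic integral against the basic martingale, and the stated display follows.
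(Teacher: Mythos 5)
Your route is sound, but it is organized differently from the paper's proof. The paper works with the explicit algebraic decomposition $\hat V_L=\hat\theta_L^2\int_0^L \frac{n}{Y_v^2}\dy{N_v}+\int_0^L \frac{n\hat\theta_v^2}{Y_v^2}\dy{N_v}-2\hat\theta_L\int_0^L \frac{n\hat\theta_v}{Y_v^2}\dy{N_v}$ (and the analogous two-term expression for $\hat{\dot V}_L$), splits each integral via $\dy{N_v}=Y_v\alpha_v\dy{v}+\dy{M_v}$, and then treats the martingale part by a direct verification of Rebolledo's conditions — including a bespoke dominated-convergence argument with an explicit dominating function built from the empirical distribution of the follow-up times — while the remaining ``plug-in'' part is handled by the functional delta method applied to the ratio map; the ordinary delta method then reassembles the pieces with $\hat S_L,\hat\theta_L,\hat\theta_L^2$. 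You instead package everything as joint weak convergence of $\sqrt n(\hat A-A,\hat S-S,\hat\sigma^2-\sigma^2)$ in $\ell^\infty([0,L_{\max}])$ followed by one application of the functional delta method through the chain rule, with the key ingredient being Hadamard differentiability of the bilinear map $(g,\mu)\mapsto\int_0^L g\,\dy{\mu}$ on a bounded-variation domain (Lemma 3.9.17 of \citet{vandervaart1996weak}). Both proofs ultimately rest on the same two pillars (martingale CLT and functional delta method), but your version trades the paper's hands-on Lindeberg/domination bookkeeping for an off-the-shelf differentiability lemma, which is cleaner at the top level; the price is that you must be careful with the domain conditions — the total variation of $\hat\sigma^2$ on $[0,L]$ is only $O_p(1)$, not almost surely bounded, so the differentiability argument has to be applied on events of probability tending to one (or after a truncation), and the joint convergence of the arm-indexed processes must be stated explicitly so the cross-covariances between the $\hat\theta$ and $\hat\sigma^2$ fluctuations that you rightly flag are carried through. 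These are routine repairs, not gaps, so your proposal would yield the lemma, with the limiting variances arising as the variances of the resulting linear functionals of the limit Gaussian process, just as in the paper.
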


\begin{proof}
  Recall that the estimators are defined as
  \begin{equation*}
  \begin{aligned}
    \hat{S_t} &= \prod_{s\le t} \left(1 - \ff{\Delta N_s}{Y_s} \right)\\
    &\text{(By IV.3.1 in \citet{andersen1993statistical}, this KM is uniformly consistent for $S_t$.)}\\
    \hat{\sigma}^2_L &= n\int_0^L \ff{1}{Y_s^2}\dy{N_s}\\
    &\text{(By 4.3.18 in \citet{andersen1993statistical}, $\hat{\sigma}^2_L$ is uniformly consistent for $\sigma^2_L$.)}\\
    %\hat{\sigma}^{2,GW}_L &= \int_0^L \ff{1}{Y_s(Y_s - \Delta N_s)}\dy{N_s} \text{ (Greenwoods's formula)}\\
    \hat{V}_L &\equiv \hat\sigma^2(\hat\theta(L)) 
    = \int_0^L (\hat\theta_L - \hat\theta_v)^2 \dy{\hat{\sigma}^2_v} 
    = \int_0^L \ff{n(\hat\theta_L - \hat\theta_v)^2 }{Y_v^2}\dy{N_v},\\
    \hat{\dot{V}}_L &=  2\hat{S}_L\int_0^L (\hat\theta_L - \hat\theta_v) \dy{\hat{\sigma}^2_v} 
    = 2\hat{S}_L\int_0^L \ff{n(\hat\theta_L - \hat\theta_v)}{Y_v^2}\dy{N_v}.
  \end{aligned}
  \end{equation*}
  The consistency of $\hat{V}_L$ is known in the literature, e.g. \citep{andersen1993statistical}. The consistency of $\hat{\dot{V}}_L$ can be proved as follows:\\
  Since $\hat{\sigma}^2_L$ is uniformly consistent for $\sigma^2_L$, and $(\hat\theta_L - \hat\theta_v)$ is also uniformly consistent for $(\theta_L - \theta_v)$, we have the uniformly consistency of $\hat{\dot{V}}_L$ by the Continuous Mapping Theorem and the continuity of the integration operator. 

  We next decompose the above into simpler terms. 
  We have
  \begin{equation*}
  \begin{aligned}
    \hat{V}_L &
    = \int_0^L \ff{n(\hat\theta_L - \hat\theta_v)^2 }{Y_v^2}\dy{N_v} = \hat\theta_L^2\int_0^L \ff{n}{Y_v^2}\dy{N_v} + \int_0^L \ff{n \hat\theta_v^2 }{Y_v^2}\dy{N_v} - 2\hat\theta_L\int_0^L \ff{n\hat\theta_v }{Y_v^2}\dy{N_v},\\
    \hat{\dot{V}}_L &= 2\hat{S}_L\int_0^L \ff{n(\hat\theta_L - \hat\theta_v)}{Y_v^2}\dy{N_v}= 2\hat{S}_L\hat\theta_L\int_0^L \ff{n}{Y_v^2}\dy{N_v} - 2\hat{S}_L\int_0^L \ff{n\hat\theta_v}{Y_v^2}\dy{N_v}
  \end{aligned}
  \end{equation*}

  Since $\hat{S}_L, \hat{\theta}_L, \hat{\theta}_L^2$ are asymptotically normal, by delta method, to prove the asymptotic Normality of $\hat{V}_L$ and $\hat{\dot{V}}_L$, we only need to focus on
  \[\int_0^L \ff{n}{Y_v^2}\dy{N_v} \to \int_0^L \ff{\alpha_v}{y_v}\dy{v}, \int_0^L \ff{n\hat\theta_v}{Y_v^2}\dy{N_v} \to  \int_0^L  \theta_v\ff{\alpha_v}{y_v}\dy{v},\int_0^L \ff{n \hat\theta_v^2 }{Y_v^2}\dy{N_v} \to \int_0^L  \theta_v^2\ff{\alpha_v}{y_v}\dy{v}.\]
  By $\dy{N}_v = Y_v\alpha_v\dy{v} + \dy{M}_v,$ we have
  \begin{equation*}
  \begin{aligned}
  \sqrt{n}\left\{\int_0^L\ff{n\dy{N}_v}{Y_v^2} - \int_0^L  \ff{\alpha_v}{y_v}\dy{v}\right\} 
  &= \sqrt{n}\left\{\int_0^L\ff{n\dy{N}_v}{Y_v^2} - \int_0^L  \ff{n\alpha_v}{Y_v}\dy{v}\right\} 
  + \sqrt{n}\left\{\int_0^L  \ff{n\alpha_v}{Y_v}\dy{v} - \int_0^L  \ff{\alpha_v}{y_v}\dy{v}\right\}\\
  &= \sqrt{n}\int_0^L\ff{n\dy{M}_v}{Y_v^2} 
  + \int_0^L \alpha_v \sqrt{n}\left(\ff{1}{Y_v/n} - \ff{1}{y_v}\right)\dy{v}\\
  \sqrt{n}\left\{\int_0^L\ff{n \hat{\theta}_v\dy{N}_v}{Y_v^2} 
  - \int_0^L  \ff{{\theta}_v\alpha_v}{y_v}\dy{v}\right\} 
  &= \sqrt{n}\int_0^L\ff{n\hat{\theta}_v\dy{M}_v}{Y_v^2} 
  + \int_0^L \alpha_v \sqrt{n}\left(\ff{\hat{\theta}_v}{Y_v/n} - \ff{{\theta}_v}{y_v}\right)\dy{v}\\
  \sqrt{n}\left\{\int_0^L\ff{n \hat{\theta}^2_v\dy{N}_v}{Y_v^2} 
  - \int_0^L  \ff{{\theta}^2_v\alpha_v}{y_v}\dy{v}\right\} 
  &= \sqrt{n}\int_0^L\ff{n\hat{\theta}^2_v\dy{M}_v}{Y_v^2} 
  + \int_0^L \alpha_v \sqrt{n}\left(\ff{\hat{\theta}^2_v}{Y_v/n} - \ff{{\theta}^2_v}{y_v}\right)\dy{v}
  \end{aligned}
  \end{equation*}
  The first term on the right-hand side in each of the above three equations is Gaussian by Rebolledo's Martingale Central Limit Theorem (MCLT) (Theorem II.5.1 in \citet{andersen1993statistical}) while the second term is also Gaussian by Functional Delta Method (FDM) and Continuous Mapping Theorem. Therefore, we have the asymptotic normality of the above three terms on the left-hand side. 
  In the following, we only show the detailed derivation for $\sqrt{n}\int_0^L\ff{n\hat{\theta}_v\dy{M}_v}{Y_v^2}$ and 
  $\int_0^L \alpha_v \sqrt{n}\left(\ff{\hat{\theta}_v}{Y_v/n} - \ff{{\theta}_v}{y_v}\right)\dy{v}$ and the other two are analogous and thus omitted.

  (i) MCLT:\\
  Define $M^n_t \equiv \sqrt{n}\int_0^t\ff{n\hat{\theta}_v\dy{M}_v}{Y_v^2}$. Then,
  \[\langle M^n \rangle_t = \int_0^t \ff{n^3\hat{\theta}_v^2}{Y_v^4} \dy{\langle M\rangle}_v =  \int_0^t \ff{n^3\hat{\theta}_v^2}{Y_v^4} Y_v \alpha_v \dy{v} = \int_0^t \ff{\alpha_v\hat{\theta}_v^2}{(Y_v/n)^3} \dy{v}, \]
  \[\langle M^{n,\epsilon}\rangle_t =  \int_0^t \ff{n^3\hat{\theta}_v^2}{Y_v^4} \I{\left|\sqrt{n} \ff{n\hat{\theta}_v}{Y_v^2} \right|> \epsilon } \dy{\langle M\rangle}_v = \int_0^t \ff{\alpha_v\hat{\theta}_v^2}{(Y_v/n)^3} \I{\left|\ff{1}{\sqrt{n}} \ff{\hat{\theta}_v}{(Y_v/n)^2} \right|> \epsilon } \dy{v}.\]
  The integrand in the first equation converges to a fixed quantity while the second equation converges to zero. Therefore, we only need to find a dominating integrable function for both to invoke Dominated Convergence Theorem or Proposition II.5.3 in \citet{andersen1993statistical}. In particular, we need to find positive function $k_v^{\delta}$ such that $\int_0^{L_{\max}} k_v^{\delta} \dy{v} < \infty$ and
  \[\liminf_{n \to \infty} \Pr\left(\left| \ff{\alpha_v\hat{\theta}_v^2}{(Y_v/n)^3}\right| \le k_v^{\delta} \text{ for all $v \in [0, L]$}  \right) \ge 1 - \delta.\]
  Define ${H}^n$ and $H$ as the empirical and true CDF of follow-up time $X_i$. Thus, we have 
  \begin{equation*}
  \begin{aligned}
    &\Pr\left(\left| \ff{\alpha_v\hat{\theta}_v^2}{(Y_v/n)^3}\right| \le k(v,\delta)  \text{ for all $v \in [0, L]$}\right)\\ 
    & \text{(since when $v = 0$, it is always true, we omit it below)}\\
    &\ge \Pr\left(\ff{1}{Y_v/n}\le \left(\ff{k(v,\delta)}{\alpha_v v^2}\right)^{1/3}  \text{ for all $v \in (0, L]$}\right)\\
    &= \Pr\left( {H}^n_v \le 1 - \left(\ff{\alpha_v v^2}{k(v,\delta)}\right)^{1/3}  \text{ for all $v \in (0, L]$}\right)
  \end{aligned}
  \end{equation*}
  Therefore, choosing the integrable function $k(v,\delta) = \ff{8\alpha_v v^2}{(1 - H_L)^3}$, we have 
  $1 - \left(\ff{\alpha_v v^2}{k(v,\delta)}\right)^{1/3} =  \ff{1 + H_L}{2}.$ Then, we have 
  \begin{equation*}
  \begin{aligned}
    &\liminf_{n \to \infty}\Pr\left(\left| \ff{\alpha_v\hat{\theta}_v^2}{(Y_v/n)^3}\right| \le k(v,\delta) \text{ for all $v \in [0, L]$}  \right)\\ 
    &\ge \liminf_{n\to \infty}\Pr\left( {H}^n_v \le \ff{1 + H_L}{2} \text{ for all $v \in (0, L]$}\right)\\
    & = \liminf_{n\to \infty}\Pr\left( {H}^n_L \le \ff{1 + H_L}{2}\right)\\
    & = 1 > 1- \delta.
  \end{aligned}
  \end{equation*}
Therefore, 
\[\langle M^n \rangle_t \pto V_t = \int_0^t \ff{\alpha_v\theta_v^2}{y_v^3}\dy{v},\]
\[\langle M^{n,\epsilon} \rangle_t \pto 0.\]
Hence, by MCLT, we have $M^n$ converges weakly to a mean-zero Gaussian martingale with covariance function $V_t$.

(ii) FDM:\\
Within the second term $\int_0^L \alpha_v \sqrt{n}\left(\ff{\hat{\theta}_v}{Y_v/n} - \ff{{\theta}_v}{y_v}\right)\dy{v}$, we have $ \sqrt{n}(\hat{\theta} - {\theta})$ and $\sqrt{n}(\ff{Y}{n} - y)$ both converge weakly to Gaussian processes. Therefore, since ratio is a tangentially compactly differentiable map, by Functional Delta Method, we have the weak convergence of $\alpha_v \sqrt{n}\left(\ff{\hat{\theta}_v}{Y_v/n} - \ff{{\theta}_v}{y_v}\right)$ to a mean-zero Gaussian process. Thus, since the integral of a Gaussian process is also a Gaussian process, and by the continuous mapping theorem, $\int_0^L \alpha_v \sqrt{n}\left(\ff{\hat{\theta}_v}{Y_v/n} - \ff{{\theta}_v}{y_v}\right)\dy{v}$ is asymptotically normal. Hence, we have finished the proof.
\end{proof}

\subsection{Proposition \ref{thm:can}}
Consistency:
\begin{proof}
  (1) $\hL^*$: Under Assumption \ref{assum:unique}, and the uniform consistency by Lemma \ref{lem:1}, by Argmax Continuous Mapping Theorem \citep[Lemma 3.2.1]{vandervaart1996weak}, we have $\hL^* \pto L^*$.
  
  (2) $\hka^*$: We have 
  \begin{equation*}
    \begin{aligned}
      |\hka^* - \ka^*| &\le |\hka(\hL^*) - \ka(\hL^*)| + |\ka(\hL^*) - \ka(L^*)| \\
      &\le \sup_{L \in \mathcal{L}}|\hka(L) - \ka(L)| + |\ka(\hL^*) - \ka(L^*)|.
    \end{aligned}
  \end{equation*}
  Then, by the uniform consistency of $\hat\ka(L)$ on $\mathcal{L}$ by Lemma~\ref{lem:1}, as well as the consistency of $\hL^*$ and the continuity of $\ka(\cdot)$, we have $\hka^* \pto \ka^*$.
\end{proof}
Asymptotic Normality:
\begin{proof}
  (I) $\hL^*$: We will invoke Theorem 3.2.16 in \citet{vandervaart1996weak}. 
  
  (i) We first verify that Assumption~\ref{assum:smooth} ensures that $\mathbb{M}$ is twice continuously differentiable at $L^*$. Recall that $\mathbb{M}_L = \ff{\ka_L^2}{\sigma^{\ka,2}_L}$. Therefore, it suffices to show $\theta_L$ and $\sigma^2(\hat\theta^a(L))$ are twice continuously differentiable at $L^*$. Since $\theta^a(L) = \int_0^L S^a_u \dy{u}$, therefore, $\partial_L^2\theta^a(L) = -f^a(L)$, which is continuous at $L^*$. By \citet[Chap. IV.3]{andersen1993statistical}, denote \[V^a_L \equiv \sigma^2(\hat\theta^a(L)) = \int_0^L (\theta^a_L - \theta^a_v)^2 \ff{\alpha_v^a}{y_v^a} \dy{v},\]
  where $\alpha^a$ and $y^a$ are defined in Lemma \ref{lem:var}. % or in Theorem IV.3.2 in \citet{andersen1993statistical}. 
  Then, we have
  \begin{equation*}
    \begin{aligned}
      \partial_L V^a_L &= 2\int_0^L (\theta^a_L - \theta_v^a)S^a_L\ff{\alpha_v^a}{y_v^a}\dy{v},\\
      \partial_L^2 V^a_L &= 2\int_0^L \left[ (S_L^a)^2 - f_L^a\cdot(\theta_L^a - \theta_v^a) \right]\ff{\alpha_v^a}{y_v^a}\dy{v}.
    \end{aligned}
  \end{equation*}
  Thus, $\partial_L^2 V^a_L$ exists and continuous at $L^*$ by Assumption~\ref{assum:smooth}.

  (ii) Next, we prove that $\sqrt{n}(\mathbb{M}_n - \mathbb{M})$ converges weakly to a mean-zero Gaussian process $G^M$ with $\Sigma^M(\cdot,\cdot)$ the covariance function. Note that
  \begin{equation*}
    \begin{aligned}
      \sqrt{n}(\mathbb{M}_n(L) - \mathbb{M}(L)) &= \sqrt{n}\left(\ff{\hat\ka^2(L)}{\hat\sigma^2(\hat\ka(L))} - \ff{\ka^2(L)}{\sigma^2(\hat\ka(L))}\right),
    \end{aligned}
  \end{equation*}
  and by \citet[Chap. IV.3]{andersen1993statistical}, we have weak convergence on $\mathcal{L}$
\begin{equation}
  \sqrt{n}(\hka_{\cdot} - \ka_{\cdot}) \dto  \GP^\ka \sim - \ff{1}{\sqrt{\beta}} \int_0^{\cdot}\int_v^{\cdot} S^1(u)\dy{u}\dy{U^1_v} + \ff{1}{\sqrt{1-\beta}}\int_0^{\cdot}\int_v^{\cdot} S^0(u)\dy{u}\dy{U^0_v},
  \label{eq:kappaasymp}
\end{equation}
where $U^1 \indep U^0$, and $\GP^\ka$ is a mean-zero Gaussian process with covariance function $\Sigma^\ka(\cdot,\cdot)$, and by Lemma~\ref{lem:var}, we have \text{(we can only have this at $\ka \ne 0$, it is the boundary)}
\begin{equation}
  \sqrt{n}(\hat{\sigma}^{\ka, 2}_L - {\sigma}^{\ka, 2}_L) \dto Z_L^{\sigma} \sim \mathcal{N}(0, \zeta_L).
  \label{eq:var_kappa} 
\end{equation}
  Note that $\sigma_L^{\ka, 2} = \GP^\ka(L,L)$.
  \footnote{%
  \parbox{0.95\linewidth}{
  Some auxiliary results not used in this proof: In a neighborhood of $L^*$, which by Assumption \ref{assum:unique} ensures $\ka(t) \ne 0$,  by the Continuous Mapping Theorem and the Delta method, we have the asymptotic distribution of $\sqrt{n}(\mathbb{M}_n - \mathbb{M})$ given by the following mean-zero Gaussian process $\GP^M$
  \begin{equation}
    \sqrt{n}(\mathbb{M}_n(L) - \mathbb{M}(L)) \dto \ff{2\ka_L}{\sigma^{\ka,2}_L}{\GP^\ka_L} - \ff{\ka^2_L}{\sigma^{\ka, 4}_L}Z^{\sigma}_L,
    \label{eq:asympMn}
  \end{equation}
  whose covariance function is
  \begin{equation*}
    \begin{aligned}
      \Sigma^M(s, t) &= \frac{4\kappa_s \kappa_t}{\sigma^{\kappa,2}_s \sigma^{\kappa,2}_t} \Sigma^{\ka}(s,t) + \frac{\kappa_s^2 \kappa_t^2}{\sigma^{\kappa,4}_s \sigma^{\kappa,4}_t} \text{Cov}(Z^{\sigma}_s, Z^{\sigma}_t) - \frac{2\kappa_s \kappa_t^2}{\sigma^{\kappa,2}_s \sigma^{\kappa,4}_t} \rho({s,t}) - \frac{2\kappa_t \kappa_s^2}{\sigma^{\kappa,2}_t \sigma^{\kappa,4}_s} \rho({t,s}).
    \end{aligned}
  \end{equation*}
  where $\rho(s,t) \equiv \Cov(\GP^\ka_s, Z^\sigma_t)$.
  }%
  }
  
  (iii) Next, we compute the derivatives of $\mathbb{M} - \mathbb{M}_n$ and related quantities with respect the time index. We list the results below:
  \begin{equation*}
    \begin{aligned}
      \dot{\ka}_t &= S^1_t - S^0_t, \quad
      \hat{\dot{\ka}}_t = \hat{S}^1_t - \hat{S}^0_t\\
      \dot{\sigma}^{\ka, 2}_t &= 
      \ff{2}{\beta} \int_0^t (\theta^1_t - \theta_v^1)S^1_t\ff{\alpha_v^1}{y_v^1}\dy{v} 
      + \ff{2}{1-\beta} \int_0^t (\theta^0_t - \theta_v^0)S^0_t\ff{\alpha_v^0}{y_v^0}\dy{v}\\
      \dot{\hat{\sigma}}^{\ka, 2}_t &=
      \sum_{a =0}^1 \sum_{j:T_{a(j)} \le t} \ff{2d_{a(j)}\hat{S}^a_t\left[\hat{\theta}^a(t) - \hat{\theta}^a\left(T_{a\left(j\right)}\right)\right]}{\left[Y^a(T_{a(j)})\right]\left[Y^a(T_{a(j)} )- d_{a(j)}\right]} \ a.s.\\
      \dot{\mathbb{M}}_t &
      = \ff{2\ka_t{\sigma}^{\ka, 2}_t\dot{\ka}_t - \ka^2_t\dot{\sigma}^{\ka, 2}_t}{{\sigma}^{\ka, 4}_t} , \quad
      \dot{\mathbb{M}}_{n,t} = 
      \ff{2\hat{\ka}_t\hat{\sigma}^{\ka, 2}_t\hat{\dot{\ka}}_t - \hat{\ka}^2_t\hat{\dot{\sigma}}^{\ka, 2}_t}{\hat{\sigma}^{\ka, 4}_t}
    \end{aligned}
  \end{equation*}
  Then, by Lemma~\ref{lem:var} and the delta method, in the neighborhood of $L^*$, we have
  \[\sqrt{n}( \dot{\mathbb{M}}_{n,t} - \dot{\mathbb{M}}_t) \dto \mathcal{N}(0, \sigma^{\dot{M},2}_t).\]
  Therefore, we invoke Theorem 3.2.16 in \citet{vandervaart1996weak} to conclude that 
  \[\sqrt{n}(\hL^* - L^*) \dto \mathcal{N}\left(0, \ff{\sigma^{\dot{M},2}_{L^*}}{[\dy^2_L \mathbb{M}(L^*)]^2}\right).\]
  
 (II) $\hka^*:$
  Clearly, $\hka$ and $\ka$ are differentiable functions of $L$. Then, we have
  \begin{equation*}
    \begin{aligned}
      \hka(\hL^*) &= \hka(L^*) + \hka'(L^*)(\hL^* - L^*) + o_p(|\hL^* - L^*|)\\
      &= \hka(L^*) + \hka'(L^*)(\hL^* - L^*) + o_p(n^{-1/2})\\
      \sqrt{n}(\hka(\hL^*) - \ka(L^*)) &= \sqrt{n}(\hka(L^*) - \ka(L^*)) + \sqrt{n}\hka'(L^*)(\hL^* - L^*) + o_p(1)
    \end{aligned}
  \end{equation*}
  Since $\hka$ converges to a Gaussian process, the first term is a normal random variable. Since $\hL^*$ is also asymptotically normally distributed, then, the RHS is just the sum of two normal random variables. Therefore, $\hka^*$ is asymptotically normal
  \[\sqrt{n}(\hka^* - \ka^*) \to \mathcal{N}(0, \sigma^{\ka^*,2}).\]
\end{proof}

\subsection{Proposition \ref{prop:consistency}}
\begin{proof}
  Recall that $\mathcal{L} = [L_{\min}, L_{\max}]$. Then, under $\mathsf{H}_0$, we have 
\begin{equation*}
  \begin{aligned}
    \sqrt{n}|\hka(\hL^*) - \ka(L^*)| & =   \sqrt{n}\left|\int_0^{\hL^*}\hat{S}^{1}_t - \hat{S}^{0}_t \dy{t}  \right|\\
    &=  \sqrt{n}\left|\int_0^{\hL^*} S^1_t + \hat{S}^{1}_t - S^1_t \dy{t} - \int_0^{\hL^*} {S}^{0}_t + \hat{S}^{0}_t - {S}^{0}_t \dy{t}  \right|\\
    &=  \sqrt{n}\left|\int_0^{\hL^*} S^1_t -S^0_t \dy{t} +\int_0^{\hL^*} \hat{S}^{1}_t - S^1_t \dy{t} - \int_0^{\hL^*} \hat{S}^{0}_t - {S}^{0}_t \dy{t}  \right|\\
    & =  \sqrt{n}\left|\int_0^{\hL^*} \hat{S}^{1}_t - S^1_t \dy{t} - \int_0^{\hL^*} \hat{S}^{0}_t - {S}^{0}_t \dy{t}  \right|\\
    & \le \sqrt{n}\left|\int_0^{\hL^*} \hat{S}^{1}_t - S^1_t \dy{t} \right| +   \sqrt{n}\left|\int_0^{\hL^*} \hat{S}^{0}_t - {S}^{0}_t \dy{t}  \right|\\
    & \le \sqrt{n}\int_0^{\hL^*} \left| \hat{S}^{1}_t - S^1_t \right| \dy{t} +  \sqrt{n}\int_0^{\hL^*}  \left|\hat{S}^{0}_t - {S}^{0}_t  \right|\dy{t}\\
    & \le L_{\max}\sup_{t \in [0, L_{\max}]} |\sqrt{n}(\hat{S}^{1}_t - S^1_t)| + L_{\max}\sup_{t \in [0, L_{\max}]} |\sqrt{n}(\hat{S}^{0}_t - S^0_t)|.
  \end{aligned}
\end{equation*}
Since $\sqrt{n}(\hat{S}^{a}_t - S^a_t)$ is a Gaussian process on a finite interval, $L_{\max}\sup_{t \in [0, L_{\max}]} |\sqrt{n}(\hat{S}^{a}_t - S^a_t)|$ is $O_p(1)$. Therefore, $|\hka(\hL^*) - \ka(L^*)| $ is $O_p(n^{-1/2})$, i.e. $\sqrt{n}$-consistent.
\end{proof}

\subsection{Proposition \ref{prop:median}}

\begin{proof}
We use a symmetry argument that induced by $S^1 = S^0$ to show that the asymptotic median of the sample distribution of $\hka^*$ is zero, which is just $\ka^*$, our estimand. Define process $H_{nt} \equiv \ff{\hka_t}{\hat{\sigma}_t}$. Under $\mathsf{H}_0$, 
\[\sqrt{n}H_{nt} = \ff{\sqrt{n}\hka_t}{\hat{\sigma}_t^{\ka}} \dto \ff{\mathcal{N}(0, \sigma^{\ka,2}_t)}{\sigma^{\ka}_t} = \mathcal{N}(0,1).\]
Note that $\mathbb{M}_{nt} = H_{nt}^2$, and therefore
\[\hat{L}^* = \argmax_{t\in\mathcal{L}} |H_{nt}|.\]
Define $\hat{H}^* \equiv H_{n\hat{L}^*}$. Since $H_{nt}$ is a mean-zero Gaussian process with constant variance, by its symmetry property, $\hat{H}^*$ is symmetrically distributed around 0, both marginally and conditionally given $\hat{L}^*$. Then, conditional on $\hat{L}^* = t$ for arbitrary $t \in \mathcal{L}$,
\begin{equation*}
    \begin{aligned}
    \sqrt{n}\hka^* &= \sqrt{n}H_{nt} \hat{\sigma}_{\hat{L}^*}\\
    &=  \sqrt{n}H_{nt} (\hat{\sigma}_{\hat{L}^*} - {\sigma}_t + {\sigma}_t)\\
    &=  \sqrt{n}H_{nt} (\hat{\sigma}_{\hat{L}^*} - {\sigma}_t) +  \sqrt{n}H_{nt}{\sigma}_t\\
    &=\sqrt{n}H_{nt} (\hat{\sigma}_{t} - {\sigma}_t) +  \sqrt{n}H_{nt}{\sigma}_t\\
    &= \sqrt{n}H_{nt}{\sigma}_t + o_p(1)
    \end{aligned}
\end{equation*} 
Therefore, by the conditional symmetry of $\sqrt{n}H_{nt}$ given $\hat{L}^*$, $\sqrt{n}\hka^*$ is also conditional symmetric around 0, and therefore marginally symmetric. This implies the asymptotic median of $\sqrt{n}\hka^*$ is 0, equal to $\ka^*$.
\end{proof}

\subsection{Proposition \ref{prop:hulc}}

\begin{proof}
  By Proposition~\ref{thm:can}, $\hka^*$ is asymptotically normal and therefore automatically asymptotically median-unbiased under $\mathsf{H}_1$. By Proposition \ref{prop:median}, $\hka^*$ is asymptotically median-unbiased under $\mathsf{H}_0$. Therefore, $\hka^*$ is always asymptotically median-unbiased. Then by Lemma 1 in \citet{kuchibhotla2024hulc}, given a generic number of fold $B$, we have the following property for the confidence interval $CI(B)$ as a function of $B$ constructed in Algorithm \ref{alg:CIh}:
  \[\lim_{n\to \infty} \Pr\left(\ka^* \in CI(B) \right) \ge 1 - 2^{1-B}.\]
  (i) $\CIh_{n,\alpha}$: By setting $B = \ceil{1 - \ln(\alpha) / \ln(2)}$, we have
  \[\lim_{n\to \infty} \Pr\left(\ka^* \in \CIh_{n,\alpha} \right) \ge 1 - 2^{1-B} \ge 1 - 2^{1 - (1 - \ln(\alpha) / \ln(2))} = 1 - \alpha.\]
  (ii) $\tCIh_{n,\alpha}$: By setting $B = \floor{1 - \ln(\alpha) / \ln(2)}$, we have
  \[\lim_{n\to \infty} \Pr\left(\ka^* \in \tCIh_{n,\alpha} \right) \ge 1 - 2^{1-B} \ge 1 - 2^{1 - (1 - \ln(\alpha) / \ln(2) - 1) } = 1 - 2\alpha.\]
  Hence, the proof is complete.
\end{proof}

\subsection{Theorem \ref{thm:asymp_pen}}
(i) $\mathsf{H}_1$: Under the alternative hypothesis $\mathsf{H}_1$, the proof is analogous to that of Proposition~\ref{thm:can}, thus omitted.

(ii) $\mathsf{H}_0$: Under the null hypothesis $\mathsf{H}_0$, the consistency proof is the same as in $\mathsf{H}_1$, thus omitted. For asymptotic distribution, we first define random variable 
\[Z_n \equiv n\dot{\mathbb{M}}_n^\dagger({L}^\dagger).\]
Note that $L^\dagger = \tilde{L}$, and 
\[\dot{\mathbb{M}}_n^\dagger({L}^\dagger) = \dot{\mathbb{M}}_n(\tilde{L}).\]
Recall that defined in Proposition~\ref{thm:can},
\[\dot{\mathbb{M}}_t 
= \ff{2\ka_t{\sigma}^{\ka, 2}_t\dot{\ka}_t - \ka^2_t\dot{\sigma}^{\ka, 2}_t}{{\sigma}^{\ka, 4}_t} , \quad
\dot{\mathbb{M}}_{n,t} = 
\ff{2\hat{\ka}_t\hat{\sigma}^{\ka, 2}_t\hat{\dot{\ka}}_t - \hat{\ka}^2_t\hat{\dot{\sigma}}^{\ka, 2}_t}{\hat{\sigma}^{\ka, 4}_t},\]
then, we have
\[\dot{\mathbb{M}}(\tilde{L}) = 0, \quad \dot{\mathbb{M}}_n^\dagger({L}^\dagger) = \dot{\mathbb{M}}_{n,\tilde{L}} = 
\ff{2\hat{\ka}_{\tilde{L}}\hat{\sigma}^{\ka, 2}_{\tilde{L}}\hat{\dot{\ka}}_{\tilde{L}} - \hat{\ka}^2_{\tilde{L}}\hat{\dot{\sigma}}^{\ka, 2}_{\tilde{L}}}{\hat{\sigma}^{\ka, 4}_{\tilde{L}}}.\]
Therefore,
\begin{equation*}
    \begin{aligned}
    Z_n &= n\ff{2\hat{\ka}_{\tilde{L}}\hat{\dot{\ka}}_{\tilde{L}}}{\hat{\sigma}^{\ka, 2}_{\tilde{L}}} - n\ff{\hat{\ka}^2_{\tilde{L}}\hat{\dot{\sigma}}^{\ka, 2}_{\tilde{L}}}{\hat{\sigma}^{\ka, 4}_{\tilde{L}}}.
    \end{aligned}
\end{equation*}
Since $\sqrt{n}\hat{\ka}_{\tilde{L}} \dto \mathcal{N}(0,\sigma^{\ka,2}_{\tilde{L}})$is $O_p(1)$, and $\sqrt{n} \hat{\dot{\ka}}_{\tilde{L}}$ also converges weakly to a Normal distribution and thus is $O_p(1)$. By Continuous Mapping Theorem, $Z_n = O_p(1)$. Note that the second derivative of ${\mathbb{M}}^\dagger(t)$ at ${L}^\dagger$ is $-2c$. Therefore, we invoke Theorem 3.2.16 in \citet{vandervaart1996weak} to conclude that 
\[n(\hL^\dagger - L^\dagger) = \ff{Z_n}{2c} + o_p(1).\]
Therefore, under $\mathsf{H}_0$, $|\hL^\dagger - L^\dagger| = O_p(n^{-1})$.
Then, we have
  \begin{equation*}
    \begin{aligned}
    \sqrt{n}(\hka^\dagger - \ka^\dagger) &= \sqrt{n}(\hka(\hL^\dagger) - \ka(L^\dagger))\\
    &= \sqrt{n}(\hka(L^\dagger) - \ka(L^\dagger)) + \sqrt{n}\hat{\dot{\ka}}(L^\dagger)(\hL^\dagger - L^\dagger) + o_p(\sqrt{n}|\hL^\dagger - L^\dagger|)\\
    &= \sqrt{n}(\hka(L^\dagger) - \ka(L^\dagger)) + O_p(n^{-1}) + o_p(n^{-1/2})\\
    & \dto \mathcal{N}(0, \sigma^{\ka,2}_{\tilde{L}})
    \end{aligned}
  \end{equation*}

\subsection{Proposition \ref{prop:AL}}
Given i.i.d. data $\{Z_i\}_{i=1}^n$, an estimator $\hat\tau$ of $\tau$ is said to be asymptotically linear if there exists a mean-zero square-integrable function $\varphi^{\tau}(Z)$ such that
\[\sqrt{n}(\hat\tau - \tau) = \ff{1}{\sqrt{n}}\sum_{i = 1}^n \varphi^{\tau}(Z_i) + o_p(1).\]
\begin{proposition}[Asymptotic Linearity]
  Under Assumptions \ref{assum:setting}, \ref{assum:smooth}, \ref{assum:unique_pen}, and \ref{assum:nonsin_pen}, $\hka^\dagger$ is an asymptotically linear estimator of $\ka^\dagger$ under both $\mathsf{H}_0, \mathsf{H}_1$, and $\hL^\dagger$ is an asymptotically linear estimator of $L^\dagger$ under $\mathsf{H}_1$.
  \label{prop:AL}
\end{proposition}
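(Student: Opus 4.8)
The plan is to read off influence functions for the fixed-time building blocks, push them through the chain rule to obtain one for $\dot{\mathbb{M}}_n$, and then combine with the argmax expansions already established in the proofs of Proposition~\ref{thm:can} and Theorem~\ref{thm:asymp_pen}. Throughout, ``asymptotically linear'' refers to the definition stated just above the proposition; square-integrability of every influence function that appears follows from the positivity part of Assumption~\ref{assum:setting}, which keeps all denominators bounded away from zero on $[0,L_{\max}]$, so all the $\varphi$'s below are bounded.

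\textbf{Step 1 (fixed-$t$ blocks) and Step 2 ($\dot{\mathbb{M}}_n$).} For a fixed $t$, the Kaplan--Meier estimator $\hat S^a_t$ is asymptotically linear via its classical i.i.d./martingale representation \citep[Chap.~IV]{andersen1993statistical}; integrating, $\hat\theta^a_t=\int_0^t\hat S^a_u\dy{u}$ and $\hat{\dot\theta}^a_t=\hat S^a_t$, so $\hka(t)$ and $\hat{\dot\ka}_t=\hat S^1_t-\hat S^0_t$ are asymptotically linear. By the decomposition used in the proof of Lemma~\ref{lem:var}, $\hat V^a_L$ and $\hat{\dot V}^a_L$ are compactly (Hadamard) differentiable functionals of the per-subject counting-process empirical measures, hence admit i.i.d.\ expansions: the Rebolledo-MCLT term there is already a sum of i.i.d.\ per-subject martingale integrals, and the functional-delta-method term is $n^{-1/2}\sum_{i=1}^n\varphi_i+o_p(1)$ by Hadamard differentiability composed with the i.i.d.\ expansion of the underlying empirical processes. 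This gives asymptotic linearity of $\hat\sigma^{\ka,2}_t$ and $\hat{\dot\sigma}^{\ka,2}_t$ for each fixed $t$. In a neighbourhood of $L^\dagger$, where $\ka$ and $\sigma^{\ka,2}$ are bounded away from zero, $\dot{\mathbb{M}}_{n,t}$ is the smooth map $(\hka_t,\hat{\dot\ka}_t,\hat\sigma^{\ka,2}_t,\hat{\dot\sigma}^{\ka,2}_t)\mapsto(2\hka_t\hat\sigma^{\ka,2}_t\hat{\dot\ka}_t-\hka_t^2\hat{\dot\sigma}^{\ka,2}_t)/\hat\sigma^{\ka,4}_t$ of these quantities, so a one-term Taylor (delta-method) expansion produces an influence function $\varphi^{\dot M}$ for $\dot{\mathbb{M}}_n(L^\dagger)$ as an explicit linear combination of the component influence functions. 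Since the penalty $-c(L-\tilde L)^2$ is deterministic and $\dot{\mathbb{M}}^\dagger(L^\dagger)=0$, we get $\sqrt n\,\dot{\mathbb{M}}^\dagger_n(L^\dagger)=\sqrt n(\dot{\mathbb{M}}_n(L^\dagger)-\dot{\mathbb{M}}(L^\dagger))=n^{-1/2}\sum_{i=1}^n\varphi^{\dot M}(Z_i)+o_p(1)$.

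\textbf{Step 3 (assemble).} Under $\mathsf H_1$, the argmax expansion of Theorem~3.2.16 in \citet{vandervaart1996weak} (exactly as invoked in the proofs of Proposition~\ref{thm:can} and Theorem~\ref{thm:asymp_pen}) gives $\sqrt n(\hL^\dagger-L^\dagger)=-[\dy^2_L\mathbb{M}^\dagger(L^\dagger)]^{-1}\sqrt n\,\dot{\mathbb{M}}^\dagger_n(L^\dagger)+o_p(1)$, so with Step~2, $\hL^\dagger$ is asymptotically linear with $\varphi^{L^\dagger}=-[\dy^2_L\mathbb{M}^\dagger(L^\dagger)]^{-1}\varphi^{\dot M}$. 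Combining this with the Taylor expansion $\sqrt n(\hka^\dagger-\ka^\dagger)=\sqrt n(\hka(L^\dagger)-\ka(L^\dagger))+\dot\ka(L^\dagger)\sqrt n(\hL^\dagger-L^\dagger)+o_p(1)$ and Step~1 for the first term gives $\hka^\dagger$ asymptotically linear with $\varphi^{\ka^\dagger}=\varphi^{\ka(L^\dagger)}+\dot\ka(L^\dagger)\varphi^{L^\dagger}$. Under $\mathsf H_0$, Theorem~\ref{thm:asymp_pen}(2) already reduces $\sqrt n(\hka^\dagger-\ka^\dagger)$ to $\sqrt n(\hka(\tilde L)-\ka(\tilde L))+o_p(1)$, which is asymptotically linear by Step~1 with $\varphi^{\ka^\dagger}=\varphi^{\ka(\tilde L)}$; no claim is made (or holds) for $\hL^\dagger$ under $\mathsf H_0$, consistent with its $O_p(n^{-1})$ rate and non-Gaussian limit there.

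\textbf{Main obstacle.} The only genuinely new ingredient is turning the weak-convergence statement of Lemma~\ref{lem:var} for $\hat{\dot V}_L$ into an honest influence-function representation --- i.e.\ verifying that both the martingale-integral term and the functional-delta-method term are $n^{-1/2}\sum_{i=1}^n\varphi_i+o_p(1)$ with mean-zero, square-integrable summands. Counting-process martingale integrals supply this per subject automatically, and the delta-method term does too once the underlying empirical processes are expanded in i.i.d.\ form, so the difficulty is careful bookkeeping rather than a new idea; the remaining steps (the delta method in Step~2 and the already-established argmax expansion in Step~3) are routine.
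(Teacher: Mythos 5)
Your proposal is correct and follows essentially the same route as the paper: establish asymptotic linearity of the fixed-$t$ building blocks $\hka_t$, $\hat{\dot{\ka}}_t$, $\hat\sigma^{\ka,2}_t$, $\hat{\dot{\sigma}}^{\ka,2}_t$ (the paper does this via Hadamard differentiability of the integration operator and explicit influence-function formulas, exactly the bookkeeping you flag as the main obstacle), then linearize the first-order condition for $\hL^\dagger$ under $\mathsf{H}_1$ (the paper writes this as $g_n(\hL^\dagger)=0$ with $g_n$ the numerator of $\dot{\mathbb{M}}^\dagger_n$ scaled by $\hat\sigma^{\ka,4}$, which is equivalent to your argmax expansion), combine via $\varphi^{\ka^\dagger}=\varphi^{\ka(L^\dagger)}+\{S^1(L^\dagger)-S^0(L^\dagger)\}\varphi^{L^\dagger}$, and handle $\mathsf{H}_0$ by the reduction to $\hka(\tilde L)$ from Theorem~\ref{thm:asymp_pen}. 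No substantive gap.
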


\begin{proof}
  (I) We first prove the asymptotic linearity of $\hL^\dagger$ and $\hka^\dagger$ under $\mathsf{H}_1$.

  For a generic parameter $\phi$ which admits an influence function, denote its influence function by $\IF{\phi;Z}$ (or $\IF{\phi}$ for brevity), where $Z$ is the observable random variables. Only in this proof, for ease of notation, we define $\dv_t \equiv \sigma^2(\hka_t)$, the asymptotic variance of $\hka_t$ at $t$, and $\ddv_t$ its derivative. Denote $\hdv_t, \hddv_t$ as the sample-analog estimates. See Lemma~\ref{lem:var} for the detailed expression of the components of $\dv_t$ and $\ddv_t$.
  Note that the survival curve of arm $A = a$, $S^a_t$, has the influence function $\IF{S^a_t}$. See \citet{westling2023inference} for details. Then, the influence function of $\ka_t$ exists with $\IF{\ka_t} = \int_0^t\IF{S^1_u} - \IF{S^0_u} \dy{u}.$ 

  By the first-order condition, we have 
  \[\partial_L \mathbb{M}_n^\dagger(L) = 0 \text{ at } \hL^\dagger,\]
  which implies
  \[g_n(\hL^\dagger) = 0,\]
  where 
  \[g_n(t) \equiv 2\hka_t\hnddka_t\hdv_t - \hka_t^2\hddv_t - c\hdv_t^2(2t - 2\tilde{L}).\]
  Define its probability limit as $g(t)$ where $g(\dL) = 0$.
  By linearization, we have 
  \[\sqrt{n}(\hdL - \dL) = -\ff{\sqrt{n}g_n(\dL)}{\dot{g}_n(\dL)} + o_p(1) = -\ff{\sqrt{n}g_n(\dL)}{\dot{g}(\dL)} + o_p(1)\]
  Since in the nonparametric statistical model, \[\sqrt{n}(\hka_t - \ka_t) = \ff{1}{\sqrt{n}}\sum_{i = 1}^n \IF{\ka_t;Z_i} + o_p(1), \sqrt{n}(\hnddka_t - \dka_t) = \ff{1}{\sqrt{n}}\sum_{i = 1}^n \left(\IF{S^1_t;Z_i} - \IF{S^0_t;Z_i}\right) + o_p(1),\] it only remains to show the asymptotic linearity of $\hdv_t$ and $\hddv_t$. 
  
  In the following, we prove for $\hdv_t$ and the proof for $\hddv_t$ is analogous and thus omitted. Recall the notations defined for Lemma~\ref{lem:var}. Note that $\hdv_t = \ff{1}{\beta}\hat{V}^1_t + \ff{1}{1-\beta}\hat{V}^0_t$, where the superscript represents the treatment arm. Therefore, we can focus on the asymptotic linearity of $V^a_t$ for $a = 0, 1$. WLOG, we suppress the superscript $a$ in the following. Define
  $V_t \equiv \int_0^t (\theta_t - \theta_v)^2\dy{\sigma_v^2}.$ %= \int_0^t (\theta_t - \theta_v)^2\ff{\alpha_v}{y_v}\dy{v}\]
  It is obvious that $(\theta_t - \theta_v)^2$ is asymptotically linear by the the asymptotic linearity of $\hat{S}_t$. As for $\sigma_t^2 \equiv \int_0^t \ff{\dy{\Lambda_u}}{y_u} $, it is well known that $n\int_0^t\ff{\dy{N_s}}{Y_s^2}$ is a uniformly consistent estimator of it \citep{andersen1993statistical}. 
  
  Then, by the Hadamard differentiability of integration operator and semiparametric theory \citep{bickel1993efficient, andersen1993statistical, ichimura2022influence}, we have
  \begin{equation*}
    \begin{aligned}
    \IF{\sigma_t^2} &= \int_0^t \IF{{1}/{y_u}}\dy{\Lambda_u} - \int_0^t \IF{\Lambda_u} \dy{\ff{1}{y_u}} + \ff{\IF{\Lambda_t}}{y_t},
    \end{aligned}
  \end{equation*}
  where the existence of $\IF{\Lambda_t}$ is implied by $\Lambda_t = -\ln S_t$, and the existence of $\IF{1/y_u}$ is implied by $\IF{y_u;Z} = \I{X\ge u} - y_u$.
  Therefore, $n\int_0^t\ff{\dy{N_s}}{Y_s^2}$ is also asymptotically linear. Then, we have influence function for $V_t$ as
  \begin{equation*}
    \begin{aligned}
    \IF{V_t} 
    = \int_0^t \IF{(\theta_t - \theta_v)^2}\dy{\sigma_v^2} - \int_0^t \IF{\sigma_v^2}\dy{(\theta_t - \theta_v)^2}.
    \end{aligned}
  \end{equation*}
  Hence, $\hdL$ is asymptotically linear, and by $\IF{\ka^\dagger} = \left\{S^1(L^\dagger) - S^0(L^\dagger)\right\} \cdot\IF{L^\dagger} + \IF{\ka_L}\mid_{L = L^\dagger}$, $\hdka$ is also asymptotically linear. Thus, we have $\varphi^{L^\dagger}(Z), \varphi^{\dka}(Z)$ as $\IF{L^\dagger;Z}, \IF{\ka^\dagger;Z}$.

  (II) We next prove the asymptotic linearity of $\hka^\dagger$ under $\mathsf{H}_0$. 

  By Theorem \ref{thm:asymp_pen}, $\sqrt{n}(\hka^\dagger - \ka^\dagger) = \sqrt{n}(\hka(\tilde{L}) - \ka(\tilde{L})) + o_p(1)$. As mentioned above, $\hka(t)$ is asymptotically linear for all $t$. Therefore, so is $\hka^\dagger$.
  
\end{proof}

\subsection{Proposition \ref{prop:bootstrap}}

\begin{proof}
  We show the proof for $\CIb_{\ka^\dagger, n, \alpha}$ under $\mathsf{H}_1$. The proofs for $\CIb_{\ka^\dagger, n, \alpha}$ under $\mathsf{H}_0$ and for $\CIb_{L^\dagger, n, \alpha}$ under $\mathsf{H}_1$ are analogous and omitted.\\
(I) First, the bootstrap consistency of $\CIb_{\ka^\dagger, n, \alpha}$ is implied by Proposition~\ref{prop:AL} and the Mammen's condition \citep{mammen2012does}. To see this, define $g_n(Z_i) \equiv \IF{\dka;Z_i}, \bar{g}_n = n^{-1}\sum_{i = 1}^n g_n(Z_i)$, and $\sigma^{\dka,2} = \E[\IF{\dka;Z_i}^2]$, $\sigma_n \equiv n^{-1/2}\sigma^{\dka}, T_n \equiv \bar{g}_n/\sigma_n$. On the bootstrap sample $O^*_b = \{ Z_i^*\}_{i=1}^n$, define $\bar{g}^*_n = n^{-1}\sum_{i = 1}^n g_n(Z_i^*)$, $T^*_n \equiv (\bar{g}^*_n - \bar{g}_n)/\sigma_n$. Let $G_n(t;F_0) \equiv \Pr(T_n \le t), G_n(t;F_n) \equiv \Pr(T^*_n \le t|O_0)$.

Then, by Proposition~\ref{prop:AL}, we have
\[T_n = \sqrt{n}(\hdka - \dka)/\sigma^{\dka} + o_p(1) \dto \mathcal{N}(0, 1)\]
implying $ G_{\infty}(t;F_0) = \Phi(t)$, where $\Phi(\cdot)$ is the CDF of the standard normal distribution. By the Mammen's condition
\[ \sup_{t \in \mathbb{R}} |G_n(t;F_n) - G_{\infty}(t;F_0)| \pto 0.\]

(II) Next, we prove the coverage properties. Define $Q_n(\alpha;\hat{\theta}, F_n)$ as the $\alpha$-th quantile of the bootstrap distribution of a generic statistic $\hat{\theta}$ conditional on observed data. 
For an arbitrary $\epsilon > 0$, we have
\begin{equation*}
  \begin{aligned}
    &\{G_n(T_n;F_n) < \alpha/2\}\\
    &= \{G_\infty(T_n;F_0) < \alpha/2-  (G_n(T_n;F_n) - G_\infty(T_n;F_0))\}\\
    &\subseteq  \left(\{G_\infty(T_n;F_0) < \alpha/2 + \epsilon\} \cap \{|G_n(T_n;F_n) - G_\infty(T_n;F_0)| \le \epsilon\}\right)\cup \{|G_n(T_n;F_n) - G_\infty(T_n;F_0)| > \epsilon\}\\
    &\subseteq  \{G_\infty(T_n;F_0) < \alpha/2 + \epsilon\}\cup \{|G_n(T_n;F_n) - G_\infty(T_n;F_0)| > \epsilon\}\\
    &\{G_n(T_n;F_n) \le 1- \alpha/2\}\\
    &= \{G_\infty(T_n;F_0) \le 1 - \alpha/2-  (G_n(T_n;F_n) - G_\infty(T_n;F_0))\}\\
    &\supseteq  \{G_\infty(T_n;F_0) \le 1 - \alpha/2 - \epsilon\} \cap \{|G_n(T_n;F_n) - G_\infty(T_n;F_0)| \le \epsilon\}.
  \end{aligned}
\end{equation*}
Then, we have
\begin{equation*}
  \begin{aligned}
  &\lim_{n\to\infty}\Pr\left(T_n \in  \left[Q_n(\alpha/2;T_n, F_n), Q_n(1-\alpha/2;T_n, F_n)\right]\right) \\
  &= \lim_{n\to\infty}\Pr\left( \alpha/2 \le G_n(T_n;F_n) \le 1- \alpha/2 \right)\\
  &= \lim_{n\to\infty}\Pr\left( G_n(T_n;F_n) \le 1- \alpha/2 \right) - \Pr\left(  G_n(T_n;F_n) < \alpha/2 \right)\\
  &\ge \lim_{n\to\infty} \Pr\left(\{G_\infty(T_n;F_0) \le 1 - \alpha/2 - \epsilon\} \cap \{|G_n(T_n;F_n) - G_\infty(T_n;F_0)| \le \epsilon\}\right)\\
  & \quad - \Pr\left(  \{G_\infty(T_n;F_0) < \alpha/2 + \epsilon\}\cup \{|G_n(T_n;F_n) - G_\infty(T_n;F_0)| > \epsilon\}\right)\\
  &\ge \lim_{n\to\infty} \Pr\left(\{G_\infty(T_n;F_0) \ge 1 - \alpha/2 - \epsilon\}\right) - \Pr\left(\{|G_n(T_n;F_n) - G_\infty(T_n;F_0)| > \epsilon\}\right)  \\
  & \quad - \Pr\left(  \{G_\infty(T_n;F_0) < \alpha/2 + \epsilon\}\right)- \Pr\left(  \{|G_n(T_n;F_n) - G_\infty(T_n;F_0)| > \epsilon\}\right)\\
  &= \lim_{n\to\infty} \Pr\left(\{G_\infty(T_n;F_0) \ge 1 - \alpha/2 - \epsilon\}\right)  - \Pr\left(  \{G_\infty(T_n;F_0) < \alpha/2 + \epsilon\}\right)\\
  &= \Pr\left(\{G_\infty(T_\infty;F_0) \ge 1 - \alpha/2 - \epsilon\}\right)  - \Pr\left(  \{G_\infty(T_\infty;F_0) < \alpha/2 + \epsilon\}\right)\\
  &= 1 - \alpha - 2\epsilon.
  \end{aligned}
\end{equation*}
Since $\epsilon$ is arbitrary, we have 
\[\lim_{n\to\infty}\Pr\left(T_n \in  \left[Q_n(\alpha/2;T_n, F_n), Q_n(1-\alpha/2;T_n, F_n)\right]\right) \ge 1-\alpha.\]
Define $q_{\alpha}(\cdot)$ as the $\alpha$-th quantile of a vector. Therefore, we have
\begin{equation*}
  \begin{aligned}
    \lim_{n\to\infty}\Pr\left(\ka^\dagger \in \CIb_{\ka^\dagger, n, \alpha}\right) 
    &=\lim_{n\to\infty} \Pr\left(\ka^\dagger \in \left[ q_{\alpha/2}(\{\hka^\dagger_b\}_{b = 1}^B), q_{1-\alpha/2}(\{\hka^\dagger_b\}_{b = 1}^B) \right]\right) \\
    & \text{(Choosing large B)}\\
    &\approx \lim_{n\to\infty}\Pr\left(\ka^\dagger \in \left[ Q_n(\alpha/2;\hka^\dagger, F_n), Q_n(1-\alpha/2;\hka^\dagger, F_n) \right]\right)\\
    &= \lim_{n\to\infty}\Pr\left(\ka^\dagger - \hdka \in \left[ Q_n(\alpha/2;\hka^\dagger, F_n)- \hdka, Q_n(1-\alpha/2;\hka^\dagger, F_n)- \hdka \right]\right)\\
    &= \lim_{n\to\infty}\Pr\left(\ka^\dagger - \hdka \in \left[ Q_n(\alpha/2;\hka^\dagger - \dka, F_n), Q_n(1-\alpha/2;\hka^\dagger - \dka, F_n) \right]\right)\\
    &= \lim_{n\to\infty}\Pr\left(\hdka - \dka \in \left[-Q_n(1-\alpha/2;\hka^\dagger - \dka, F_n), -Q_n(\alpha/2;\hka^\dagger - \dka, F_n)  \right]\right)\\
    & \quad \text{(By the asymptotic symmetry of the bootstrap distribution of $\hka^\dagger - \dka$.)}\\
    &= \lim_{n\to\infty}\Pr\left(\hdka - \dka \in \left[Q_n(\alpha/2;\hka^\dagger - \dka, F_n), Q_n(1-\alpha/2;\hka^\dagger - \dka, F_n)  \right]\right)\\
    & \quad \text{(By the monotonicity of scale-location transform.)}\\
    &= \lim_{n\to\infty}\Pr\left(T_n \in \left[Q_n(\alpha/2;T_n, F_n), Q_n(1-\alpha/2;T_n, F_n)  \right]\right)\\
    &\ge 1 - \alpha.
  \end{aligned}
\end{equation*}
Hence, the proof is complete.
\end{proof}

\subsection{Theorem \ref{thm:dt}}
\begin{proof}
  (1) $\hL^\dagger$:\\
  Define arbitrary sequence of integers $a_n$ that goes to $\infty$ as $n \to \infty$. Given any positive constant $C$, We have
  \begin{equation*}
    \begin{aligned}
      \Pr(a_n|\hL^\dagger - L^\dagger| > C) 
      &= \Pr(|\hL^\dagger - L^\dagger| > C/a_n)\\
      &\le \sum_{j \ne j^\dagger} \Pr(\hat{j}^\dagger = j)\\
      &= \sum_{j \ne j^\dagger} \Pr(\mathbb{M}_{nj}^\dagger - \mathbb{M}_{ni}^\dagger > 0, \forall i \ne j)\\
      &= \sum_{j \ne j^\dagger} \Pr(\sqrt{n}(\mathbb{M}_{nj}^\dagger - \mathbb{M}_{ni}^\dagger - (\mathbb{M}_{j}^\dagger - \mathbb{M}_{i}^\dagger)) > -\sqrt{n}(\mathbb{M}_{j}^\dagger - \mathbb{M}_{i}^\dagger), \forall i \ne j)\\
      &\le \sum_{j \ne j^\dagger} \Pr(\sqrt{n}(\mathbb{M}_{nj}^\dagger - \mathbb{M}_{nj^\dagger}^\dagger - (\mathbb{M}_{j}^\dagger - \mathbb{M}_{j^\dagger}^\dagger)) > -\sqrt{n}(\mathbb{M}_{j}^\dagger - \mathbb{M}_{j^\dagger}^\dagger))
    \end{aligned}
  \end{equation*}
Since $\sqrt{n}(\mathbb{M}_{nj}^\dagger - \mathbb{M}_{nj^\dagger}^\dagger - (\mathbb{M}_{j}^\dagger - \mathbb{M}_{j^\dagger}^\dagger))$ is $O_p(1)$ or $o_p(1)$, and by Assumption~\ref{assum:unique_dt}, $-\sqrt{n}(\mathbb{M}_{j}^\dagger - \mathbb{M}_{j^\dagger}^\dagger) \to \infty$. Therefore, we have $\Pr(a_n|\hL^\dagger - L^\dagger| > C) \to 0$. This proves the consistency of $\hL^\dagger$ under arbitrary rate $a_n$. Specifically, we have $(\hL^\dagger - L^\dagger) = o_p(n^{-1/2})$ and use this below.

(2) $\hka^\dagger$:\\
\begin{equation*}
  \begin{aligned}
    \hka(\hL^\dagger) &= \hka(L^\dagger) + \hka'(L^\dagger)(\hL^\dagger - L^\dagger) + o_p(|\hL^\dagger - L^\dagger|)\\
    &= \hka(L^\dagger) + \hka'(L^\dagger)(\hL^\dagger - L^\dagger) + o_p(n^{-1/2})\\
    \sqrt{n}(\hka(\hL^\dagger) - \ka(L^\dagger)) &= \sqrt{n}(\hka(L^\dagger) - \ka(L^\dagger)) +\hka'(L^\dagger)o_p(1) + o_p(1)\\
    &= \sqrt{n}(\hka(L^\dagger) - \ka(L^\dagger))  + o_p(1)\\
  \end{aligned}
\end{equation*}

(3) $\hat\sigma_{\hat{L}^\dagger}^{\ka, 2}$:\\
\begin{equation*}
  \begin{aligned}
    \hat\sigma_{\hat{L}^\dagger}^{\ka, 2} &= \hat\sigma_{{L}^\dagger}^{\ka, 2} + \hat{\dot{\sigma}}_{{L}^\dagger}^{\ka, 2}(\hL^\dagger - L^\dagger) + o_p(|\hL^\dagger - L^\dagger|) \pto \sigma_{{L}^\dagger}^{\ka, 2} \text{ as } n \to \infty.
  \end{aligned}
\end{equation*}

\end{proof}

\subsection{Proposition \ref{prop:grid}}

\begin{proof}
  (1) By Assumption \ref{assum:unique_pen} and \ref{assum:unique_dt}, $L^\dagger$ and $L^\dagger_{ct}$ are well-defined. Then, $|L^\dagger - L^\dagger_{ct}| \le \bar{k}/(m_n-1) = O(n^{-\gamma}) \to 0$. Then, by the continuity of $\ka(\cdot)$, we have $\ka^\dagger \to \ka^\dagger_{ct}$.

  (2) In the following, we discuss two cases.

  (i) Under $\mathsf{H}_1$:
  Similar to the proof of Theorem \ref{thm:dt},
  \begin{equation*}
    \begin{aligned}
      \Pr(a_n|\hL^\dagger - L^\dagger| > C) 
      &\le \sum_{j \ne j^\dagger} \Pr(\sqrt{n}(\mathbb{M}_{nj}^\dagger - \mathbb{M}_{nj^\dagger}^\dagger - (\mathbb{M}_{j}^\dagger - \mathbb{M}_{j^\dagger}^\dagger)) > -\sqrt{n}(\mathbb{M}_{j}^\dagger - \mathbb{M}_{j^\dagger}^\dagger))\\
      &\le \sum_{j \ne j^\dagger}\ff{c_1}{\sqrt{n}(\mathbb{M}_{j^\dagger}^\dagger - \mathbb{M}_{j}^\dagger)}\exp\left(-\ff{n(\mathbb{M}_{j^\dagger}^\dagger - \mathbb{M}_{j}^\dagger)^2}{2\sigma^2_{j,j^\dagger}}\right) + \ff{c_2}{\sqrt{n}} ,
    \end{aligned}
  \end{equation*}
  where $c_1,c_2$ are positive constants.
  The first term in the last step is by Mill's inequality for Gaussian random variables, where by the Equation \eqref{eq:asympMn} in the proof of Proposition~\ref{thm:can}, we have $\sqrt{n}(\mathbb{M}_{nj}^\dagger - \mathbb{M}_{ni}^\dagger - (\mathbb{M}_{j}^\dagger - \mathbb{M}_{i}^\dagger))  = \sqrt{n}(\mathbb{M}_{nj} - \mathbb{M}_{ni} - (\mathbb{M}_{j} - \mathbb{M}_{i})) \dto \mathcal{N}(0, \sigma^{2}_{j,i})$. The second term in the last step is by Berry-Esseen theorem.

  Note that (a) when $n$ is large, $\mathbb{M}_{j^\dagger} - \mathbb{M}_{j} \ge \mathbb{M}_{j^\dagger} - \max(\mathbb{M}_{j^\dagger+1}, \mathbb{M}_{j^\dagger-1})$; (b) $\sigma^2_{j,j^\dagger}$ is upper bounded; (c) by Talyor expansion of $\mathbb{M}^\dagger$ and the fact $\dot{\mathbb{M}}_{j^\dagger}^\dagger = O(n^{-\gamma})$, we have $-\sqrt{n}(\mathbb{M}_{j^\dagger+1}^\dagger - \mathbb{M}_{j^\dagger}^\dagger) = \sqrt{n}|\mathbb{M}_{j^\dagger+1}^\dagger - \mathbb{M}_{j^\dagger}^\dagger| = \sqrt{n}(|\dot{\mathbb{M}}_{j^\dagger}^\dagger||L_{j^\dagger+1} - L^\dagger| + O(|L_{j^\dagger+1} - L^\dagger|^2)) = O(n^{1/2 - 2\gamma}) \to \infty$ when $\gamma < 1/4$.
  
  Therefore, when $\gamma < 1/4$, we have
  \begin{equation*}
    \begin{aligned}
      \Pr(a_n|\hL^\dagger - L^\dagger| > C) 
      &\le O(n^\gamma)\ff{1}{\sqrt{n}O(n^{-2\gamma})}\exp\left(-O(n \cdot n^{-4\gamma})\right) +  O(n^{\gamma - 1/2})\\
      &= O(n^{3\gamma-1/2})\exp\left(-O(n^{1-4\gamma})\right) + O(n^{\gamma - 1/2})\to 0.
    \end{aligned}
  \end{equation*}
  Then, with a similar argument as in the proof of Theorem~\ref{thm:dt}, we finished the proof.

  (ii) Under $\mathsf{H}_0$:\\
  Note that $\mathbb{M}_j \equiv 0, \mathbb{M}_j^\dagger = -c(L_j - \tilde{L})^2, L^\dagger = \tilde{L}, \mathbb{M}_{j^\dagger}^\dagger = 0$, and $n\mathbb{M}_{nj} = \chi^2_1 + o_p(1)$ for all $j$ where $\chi^2_1$ is the chi-squared distribution with $1$ degree of freedom.
  \begin{equation*}
    \begin{aligned}
      \Pr(a_n|\hL^\dagger - L^\dagger| > C) 
      &\le \sum_{j \ne j^\dagger} \Pr(n(\mathbb{M}_{nj}^\dagger - \mathbb{M}_{nj^\dagger}^\dagger - (\mathbb{M}_{j}^\dagger - \mathbb{M}_{j^\dagger}^\dagger)) > -n(\mathbb{M}_{j}^\dagger - \mathbb{M}_{j^\dagger}^\dagger))\\
      &= \sum_{j \ne j^\dagger} \Pr\left\{n(\mathbb{M}_{nj}-\mathbb{M}_{j}) - n(\mathbb{M}_{nj^\dagger} -  \mathbb{M}_{j^\dagger}) > -n(\mathbb{M}_{j}^\dagger - \mathbb{M}_{j^\dagger}^\dagger)\right\}\\
      &= \sum_{j \ne j^\dagger} \Pr\left\{n\mathbb{M}_{nj} - n\mathbb{M}_{nj^\dagger} > nc(L_j - \tilde{L})^2\right\}\\
      & \le \sum_{j \ne j^\dagger} \Pr\left\{n\mathbb{M}_{nj} - n\mathbb{M}_{nj^\dagger} > \ff{c}{k^2}n^{1-2\gamma}\right\}\\
      & \le \sum_{j \ne j^\dagger} \Pr\left\{n\mathbb{M}_{nj} > \ff{c}{k^2}n^{1-2\gamma}\right\}\\
      & (\text{by the tail bound of $\chi^2_k$ distribution \citep{laurent2000adaptive}})\\
      & \lesssim O(n^\gamma)\exp\left\{ -O(n^{1-2\gamma}) \right\} + O(n^{\gamma-1/2})
    \end{aligned}
  \end{equation*}
  Therefore, under $\mathsf{H}_0$, any $\gamma < 1/2$ ensures $\Pr(a_n|\hL^\dagger - L^\dagger| > C) \to 0$ for arbitrary $a_n$.

  (iii) Together with results under $\mathsf{H}_1$, we conclude that any $\gamma < 1/4$ ensures $\Pr(a_n|\hL^\dagger - L^\dagger| > C) \to 0$ for arbitrary $a_n$.

  (3) $|\hat{L}^\dagger - L^\dagger_{ct}| \le |\hat{L}^\dagger - L^\dagger| + |L^\dagger - L_{ct}^\dagger| = |\hat{L}^\dagger - L^\dagger| + O(n^{-\gamma})$. Therefore, when $\gamma < 1/4$, 
  \[|\hat{L}^\dagger - L^\dagger_{ct}| = o_p(n^{-1/2}) + O(n^{-\gamma}) = O_p(n^{-\gamma}) \pto 0.\]

\end{proof}

\section{Additional Simulation Details and Results}

For easy reference, we iterate the nine scenarios designed to capture common patterns observed in randomized trials: (1) \texttt{null}: No treatment effect (null hypothesis). (2) \texttt{ph}: Proportional hazards. (3) \texttt{early}: Early treatment effects, where the benefit diminishes over time. (4) \texttt{tran}: Transient treatment effects, where the benefit appears temporarily and then disappears. (5) \texttt{cs}: Crossing survival curves. (6) \texttt{msep}: Middle separation. (7) \texttt{delay\_1}: Moderately delayed treatment effects, with a gradual onset of benefit. (8) \texttt{delay\_2}: Heavily delayed treatment effects, where the benefit takes a longer time to emerge. (9) \texttt{delaycon}: Delayed treatment effects with converging survival curves, modeling a scenario where the benefit emerges late and then diminishes.

\subsection{Data Generating Details of Each Scenario}

We provide the detailed data generating process of the treatment arm for each scenario in the simulation study. In particular, we list the parameter values of the piecewise exponential distribution for each scenario in Table~\ref{tab:scenario_parameters} below. A piecewise exponential distribution is a generalization of the exponential distribution where the hazard rate (or rate parameter) is allowed to change across different time intervals. This flexibility enables it to model survival data or event times where the risk of an event is not constant over time. The parameter \texttt{Change Point} $(t_0= 0, t_1, \ldots, t_k, t_{k+1} = \infty)$ refers to the time points where the hazard rate change, dividing the time axis into $[0, t_1), [t_1, t_2), \ldots, [t_k, \infty)$. The parameter \texttt{Rate} $\lambda_0, \lambda_1, \ldots, \lambda_k$ refers to the constant hazard rate of the exponential distribution in each time interval: the hazard function $\lambda(u) = \lambda_i$ if $u \in [t_i, t_{i+1})$. The survival function is $S(t) = \exp\left(-\int_0^t \lambda(u) \dy{u}\right)$. Note that the control arm event time is generated by an exponential distribution with a constant hazard rate of 1. Figures \ref{fig:scenario_null} to \ref{fig:scenario_delaycon} illustrate the survival curves (top left), hazard functions (bottom left), criterion functions $\mathbb{M}$ and the penalized versions under different penalty strengths $c$ (top right), where $\tilde{L} = 2.2$ is the midpoint, and treatment effect $\ka$ (RMST\_diff) compared to the square root of $\mathbb{M}$ (sqrt(M)) (bottom right).

\begin{table}
  \centering
  \caption{Rate and Change-Point for Treatment Arm in Each Scenario}
  \begin{tabular}{lll}
  \toprule
  \textbf{Scenario} & \textbf{Rate}         & \textbf{Change Point}         \\ 
  \midrule
  \texttt{null}     & (1)                      & (0)                      \\ 
  \texttt{ph}       & (0.75)                   & (0)                      \\ 
  \texttt{early}    & (0.65, 1)                & (0, 0.5)                 \\ 
  \texttt{tran}     & (0.5, 1.5, 1)            & (0, 0.6, 1.2)            \\ 
  \texttt{cs}       & (0.5, 1.4)               & (0, 0.5)                 \\ 
  \texttt{msep}     & (0.5, 1.1)               & (0, 0.5)                 \\ 
  \texttt{delay\_1} & (1, 0.7)                 & (0, 0.2)                 \\ 
  \texttt{delay\_2} & (1, 0.7)                 & (0, 0.4)                 \\ 
  \texttt{delaycon} & (1, 0.7, 1.2)            & (0, 0.2, 1)              \\ 
  \bottomrule
  \end{tabular}
  \label{tab:scenario_parameters}
\end{table}

\begin{figure}
  \centering
  \includegraphics[width=0.8\textwidth]{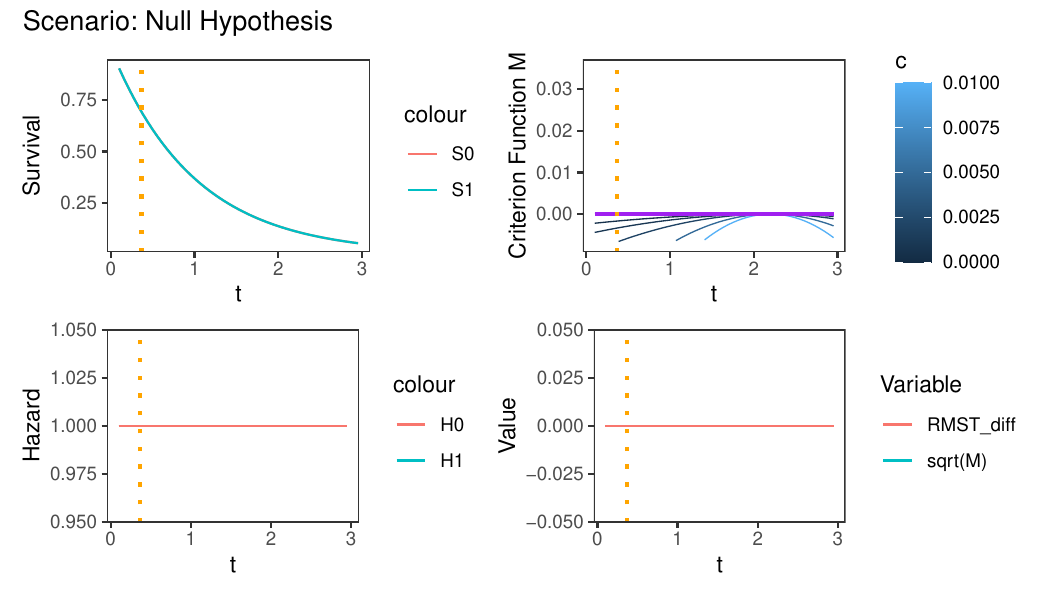}
  \caption{DGP of Scenario \texttt{null}. The orange dotted line is the optimal restriction time $L^*$, as defined in Equation \eqref{eq:estimand}. In this scenario, $L^*$ is not unique.}
  \label{fig:scenario_null}
\end{figure}

\begin{figure}
  \centering
  \includegraphics[width=0.8\textwidth]{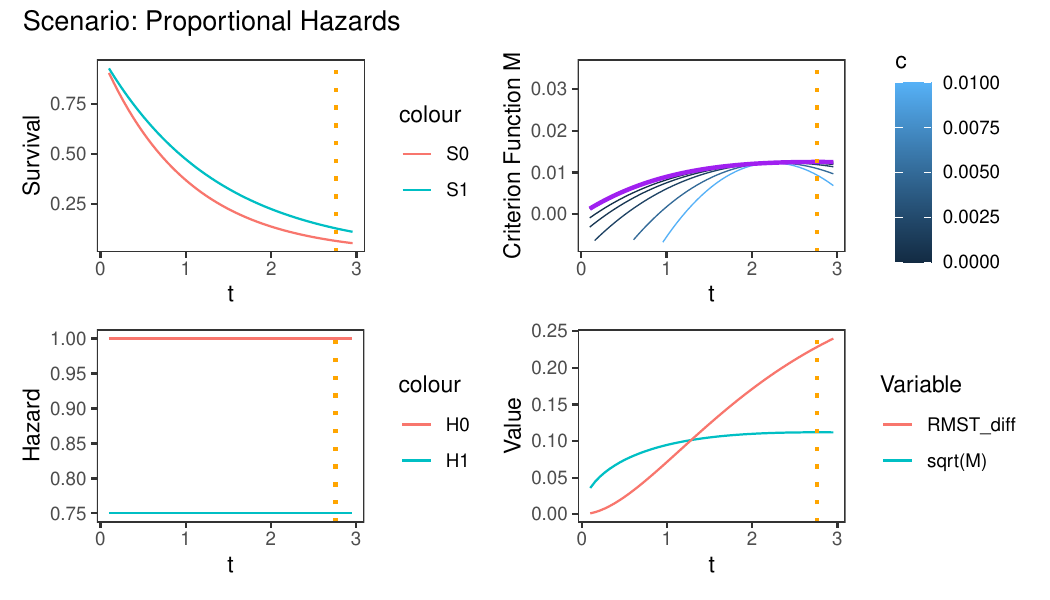}
  \caption{DGP of Scenario \texttt{ph}. The orange dotted line is the optimal restriction time $L^*$, as defined in Equation \eqref{eq:estimand}.}
  \label{fig:scenario_ph}
\end{figure}

\begin{figure}
  \centering
  \includegraphics[width=0.8\textwidth]{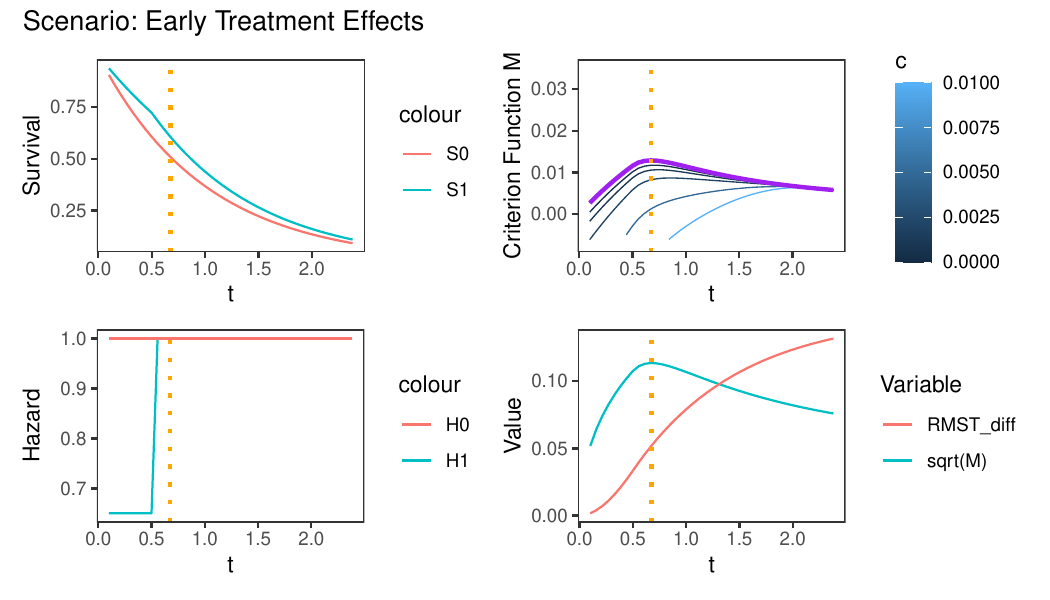}
  \caption{DGP of Scenario \texttt{early}. The orange dotted line is the optimal restriction time $L^*$, as defined in Equation \eqref{eq:estimand}.}
  \label{fig:scenario_early}
\end{figure}

\begin{figure}
  \centering
  \includegraphics[width=0.8\textwidth]{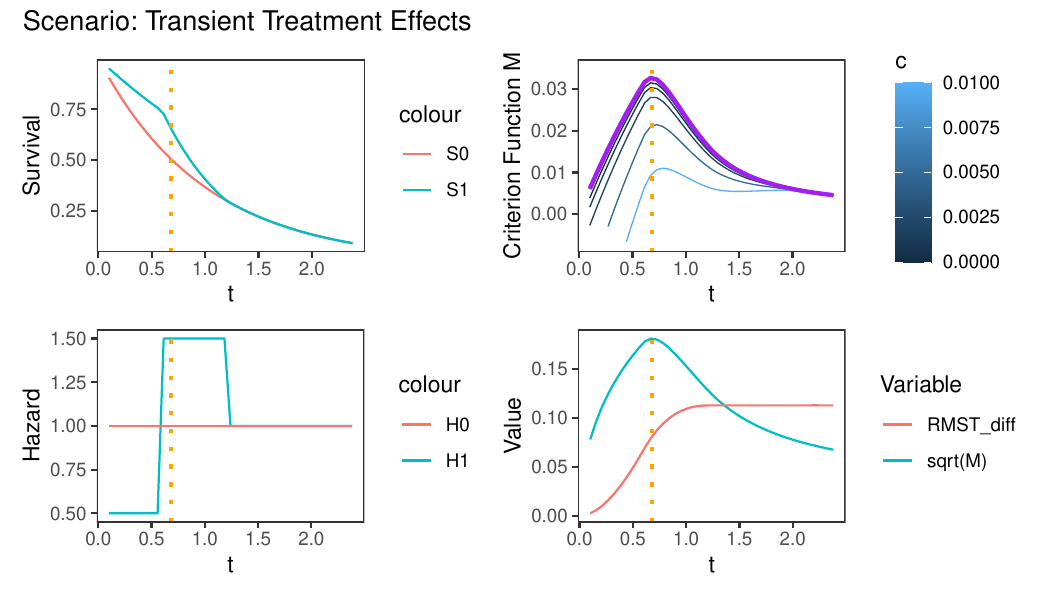}
  \caption{DGP of Scenario \texttt{tran}. The orange dotted line is the optimal restriction time $L^*$, as defined in Equation \eqref{eq:estimand}.}
  \label{fig:scenario_tran}
\end{figure}

\begin{figure}
  \centering
  \includegraphics[width=0.8\textwidth]{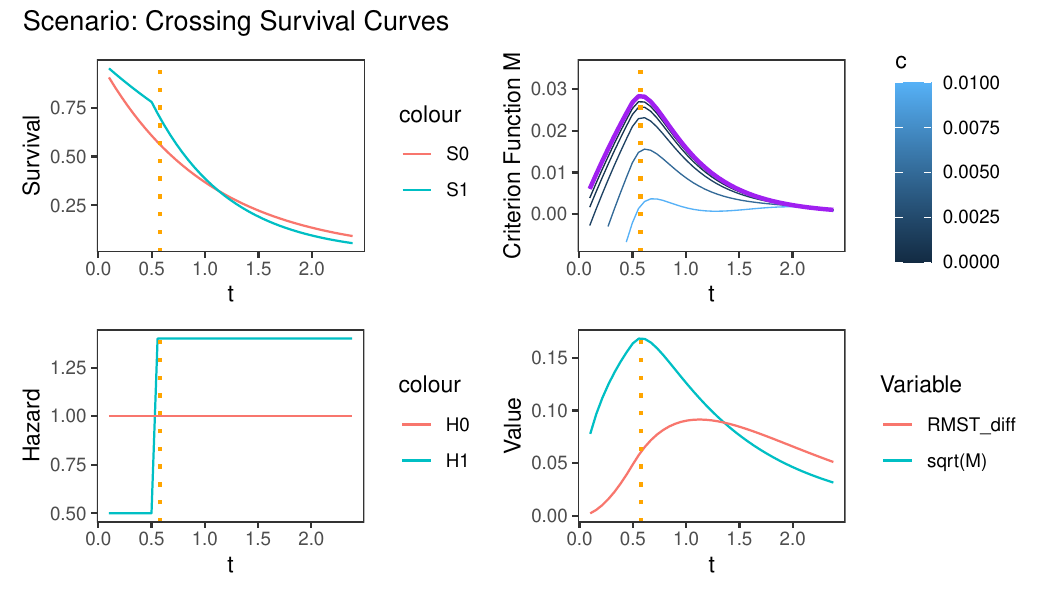}
  \caption{DGP of Scenario \texttt{cs}. The orange dotted line is the optimal restriction time $L^*$, as defined in Equation \eqref{eq:estimand}.}
  \label{fig:scenario_cs}
\end{figure}

\begin{figure}
  \centering
  \includegraphics[width=0.8\textwidth]{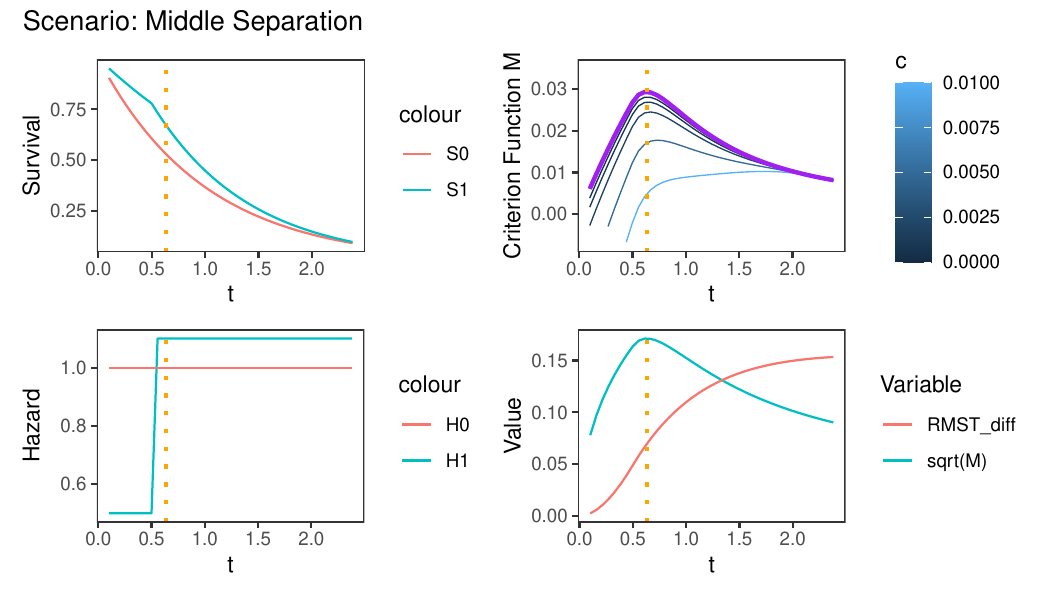}
  \caption{DGP of Scenario \texttt{msep}. The orange dotted line is the optimal restriction time $L^*$, as defined in Equation \eqref{eq:estimand}.}
  \label{fig:scenario_msep}
\end{figure}

\begin{figure}
  \centering
  \includegraphics[width=0.8\textwidth]{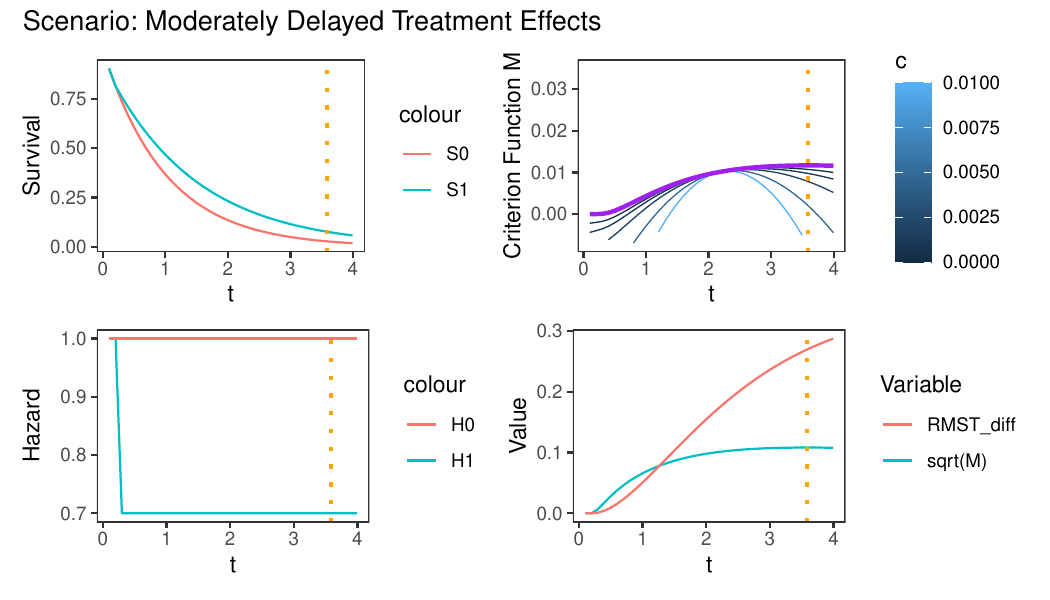}
  \caption{DGP of Scenario \texttt{delay\_1}. The orange dotted line is the optimal restriction time $L^*$, as defined in Equation \eqref{eq:estimand}.}
  \label{fig:scenario_delay_1}
\end{figure}

\begin{figure}
  \centering
  \includegraphics[width=0.8\textwidth]{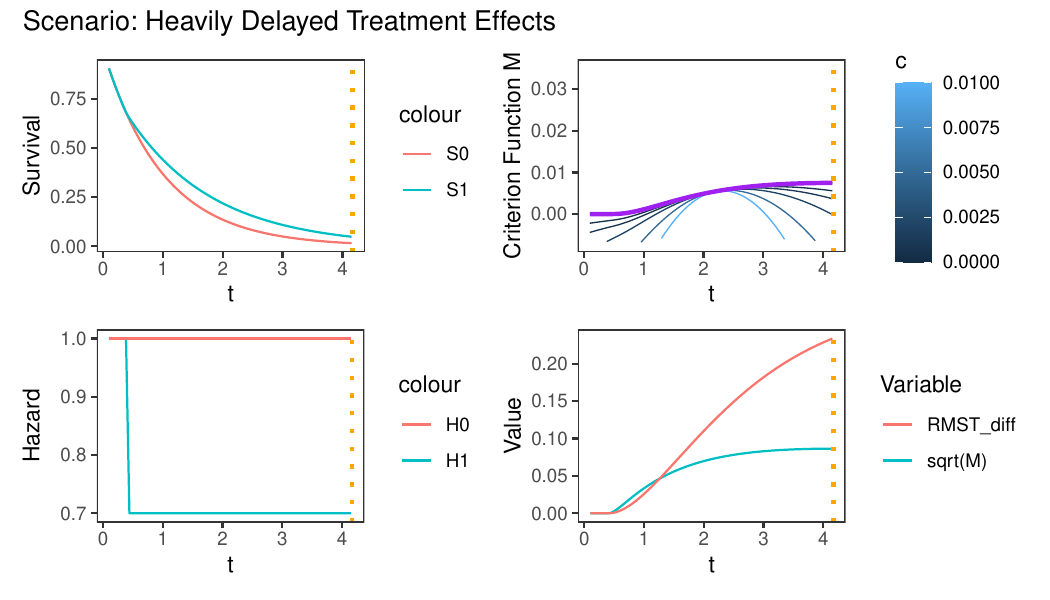}
  \caption{DGP of Scenario \texttt{delay\_2}. The orange dotted line is the optimal restriction time $L^*$, as defined in Equation \eqref{eq:estimand}.}
  \label{fig:scenario_delay_2}
\end{figure}

\begin{figure}
  \centering
  \includegraphics[width=0.8\textwidth]{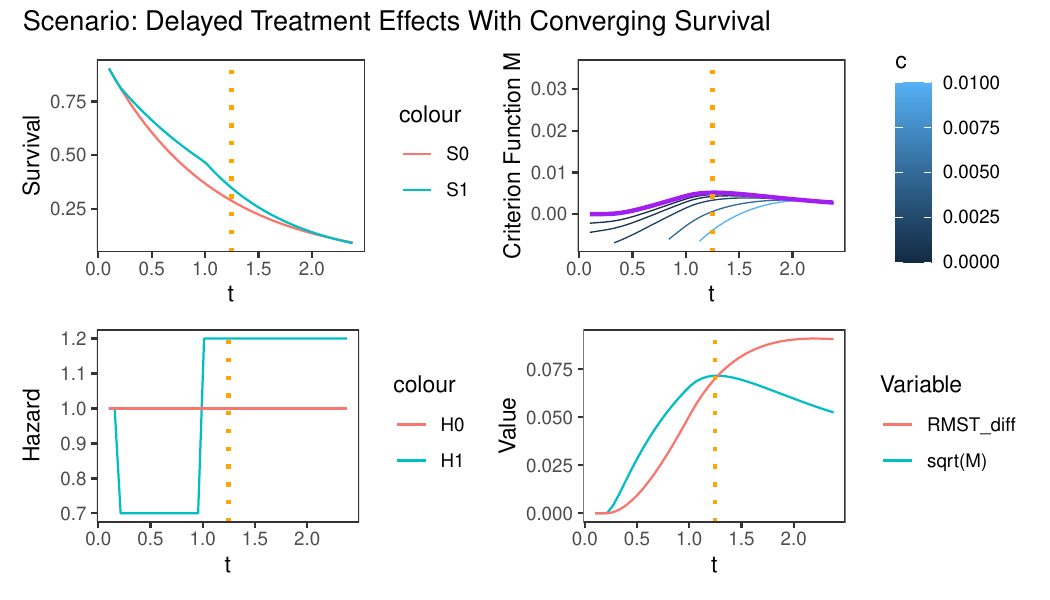}
  \caption{DGP of Scenario \texttt{delaycon}. The orange dotted line is the optimal restriction time $L^*$, as defined in Equation \eqref{eq:estimand}.}
  \label{fig:scenario_delaycon}
\end{figure}

\subsection{Additional Simulation Results: Estimation of $\Lopt$}

The simulation results for the performance of optimal restriction time estimators are presented in Tables \ref{tab:L_n300}, \ref{tab:L_n600}, and \ref{tab:L_n1000} for sample sizes $n = 300, 600, 1000$, respectively. Three metrics are evaluated across 2000 repetitions: (1) Bias, representing the empirical bias; (2) SD, the empirical standard deviation; and (3) RMSE, the root mean square error.

\begin{table}[ht]
  \centering
  \caption{Simulation results for optimal restriction time estimators for sample size $n = 300$.}
  \begin{tabular}{l|rrr|rrr|rrr}
  \toprule
  Scenario & \multicolumn{3}{c|}{AdaRMST.ct}   &  \multicolumn{3}{c|}{AdaRMST.dt}  & \multicolumn{3}{c}{hulc}  \\
  & Bias & SD & RMSE & Bias & SD & RMSE & Bias & SD & RMSE \\
  \midrule
  null & -0.230 & 0.689 & 0.726 & -0.080 & 0.387 & 0.395 & -0.637 & 1.318 & 1.464 \\
  ph & -0.204 & 0.827 & 0.851 & 0.005 & 0.483 & 0.483 & -0.568 & 1.323 & 1.440 \\
  early & 0.472 & 0.803 & 0.931 & -0.065 & 0.602 & 0.605 & 0.414 & 0.971 & 1.055 \\
  tran & 0.069 & 0.384 & 0.390 & 0.218 & 0.443 & 0.494 & 0.014 & 0.418 & 0.418 \\
  cs & 0.141 & 0.539 & 0.557 & 0.238 & 0.523 & 0.574 & 0.115 & 0.552 & 0.563 \\
  msep & 0.209 & 0.534 & 0.573 & 0.347 & 0.502 & 0.610 & 0.146 & 0.537 & 0.557 \\
  delay\_1 & -0.215 & 0.785 & 0.814 & -0.283 & 0.412 & 0.500 & -1.030 & 1.310 & 1.666 \\
  delay\_2 & -0.288 & 0.804 & 0.854 & -0.315 & 0.448 & 0.547 & -1.753 & 1.439 & 2.267 \\
  delaycon & 0.132 & 0.658 & 0.671 & 0.215 & 0.393 & 0.448 & 0.159 & 1.031 & 1.043 \\
  \bottomrule
  \end{tabular}
  \label{tab:L_n300}
\end{table}

\begin{table}[ht]
  \caption{Simulation results for optimal restriction time estimators for sample size $n = 600$.}
  \centering
  \begin{tabular}{l|rrr|rrr|rrr}
  \toprule
  Scenario & \multicolumn{3}{c|}{AdaRMST.ct}   &  \multicolumn{3}{c|}{AdaRMST.dt}  & \multicolumn{3}{c}{hulc}  \\
  & Bias & SD & RMSE & Bias & SD & RMSE & Bias & SD & RMSE \\
  \midrule
  null & -0.069 & 0.424 & 0.430 & -0.013 & 0.182 & 0.182 & -0.534 & 1.376 & 1.476 \\
  ph & -0.102 & 0.622 & 0.630 & 0.089 & 0.345 & 0.357 & -0.288 & 1.222 & 1.255 \\
  early & 0.401 & 0.662 & 0.773 & -0.080 & 0.539 & 0.545 & 0.302 & 0.765 & 0.822 \\
  tran & -0.005 & 0.149 & 0.149 & 0.086 & 0.254 & 0.268 & -0.031 & 0.154 & 0.157 \\
  cs & 0.018 & 0.248 & 0.249 & 0.092 & 0.328 & 0.341 & 0.023 & 0.199 & 0.201 \\
  msep & 0.096 & 0.318 & 0.332 & 0.237 & 0.385 & 0.452 & 0.058 & 0.263 & 0.269 \\
  delay\_1 & -0.055 & 0.521 & 0.523 & -0.214 & 0.323 & 0.388 & -0.568 & 1.066 & 1.207 \\
  delay\_2 & -0.105 & 0.546 & 0.556 & -0.238 & 0.305 & 0.387 & -1.218 & 1.257 & 1.750 \\
  delaycon & 0.130 & 0.472 & 0.489 & 0.230 & 0.289 & 0.369 & 0.087 & 0.854 & 0.858 \\
  \bottomrule
  \end{tabular}
  \label{tab:L_n600}
\end{table}

\begin{table}[ht]
  \caption{Simulation results for optimal restriction time estimators for sample size $n = 1000$.}
  \centering
  \begin{tabular}{l|rrr|rrr|rrr}
  \toprule
  Scenario & \multicolumn{3}{c|}{AdaRMST.ct}   &  \multicolumn{3}{c|}{AdaRMST.dt}  & \multicolumn{3}{c}{hulc}  \\
  & Bias & SD & RMSE & Bias & SD & RMSE & Bias & SD & RMSE \\
  \midrule
  null & -0.031 & 0.271 & 0.272 & -0.009 & 0.100 & 0.100 & -0.536 & 1.414 & 1.512 \\
  ph & -0.060 & 0.495 & 0.498 & 0.098 & 0.294 & 0.310 & -0.152 & 1.104 & 1.114 \\
  early & 0.309 & 0.580 & 0.657 & -0.088 & 0.471 & 0.480 & 0.153 & 0.531 & 0.552 \\
  tran & -0.011 & 0.085 & 0.086 & 0.035 & 0.139 & 0.143 & -0.033 & 0.052 & 0.061 \\
  cs & -0.008 & 0.125 & 0.125 & 0.033 & 0.194 & 0.197 & 0.018 & 0.127 & 0.128 \\
  msep & 0.053 & 0.185 & 0.193 & 0.164 & 0.289 & 0.332 & 0.036 & 0.118 & 0.124 \\
  delay\_1 & -0.026 & 0.382 & 0.383 & -0.206 & 0.263 & 0.334 & -0.326 & 0.890 & 0.948 \\
  delay\_2 & -0.036 & 0.386 & 0.387 & -0.215 & 0.241 & 0.323 & -0.819 & 1.010 & 1.300 \\
  delaycon & 0.095 & 0.382 & 0.393 & 0.215 & 0.253 & 0.332 & 0.071 & 0.668 & 0.672 \\
  \bottomrule
  \end{tabular}
  \label{tab:L_n1000}
\end{table}

\subsection{Additional Simulation Results: Impact of Penalization on \texttt{AdaRMST.ct}}

Figures \ref{fig:pen_300} to \ref{fig:pen_1000} illustrate the impact of penalization on power of \texttt{AdaRMST.ct} under sample sizes $n = 300, 600, 1000$ across different scenarios. The penalty parameter $c$ varies within $\{0, 0.0005, 0.001, 0.002, 0.005, 0.01\}$, and the initial restriction time is set to the default midpoint, $\tilde{L} = 2.2$.

\begin{figure}
  \centering
  \includegraphics[width=1\linewidth]{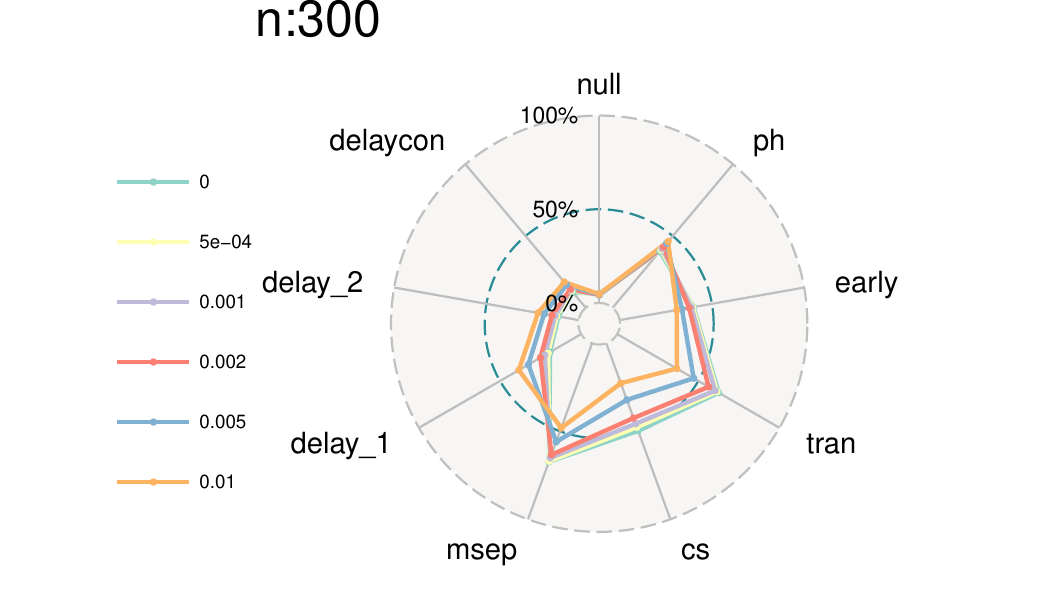}
  \caption{Impact of penalization on power of \texttt{AdaRMST.ct} for sample size $n = 300$.}
  \label{fig:pen_300}
\end{figure}

\begin{figure}
  \centering
  \includegraphics[width=1\linewidth]{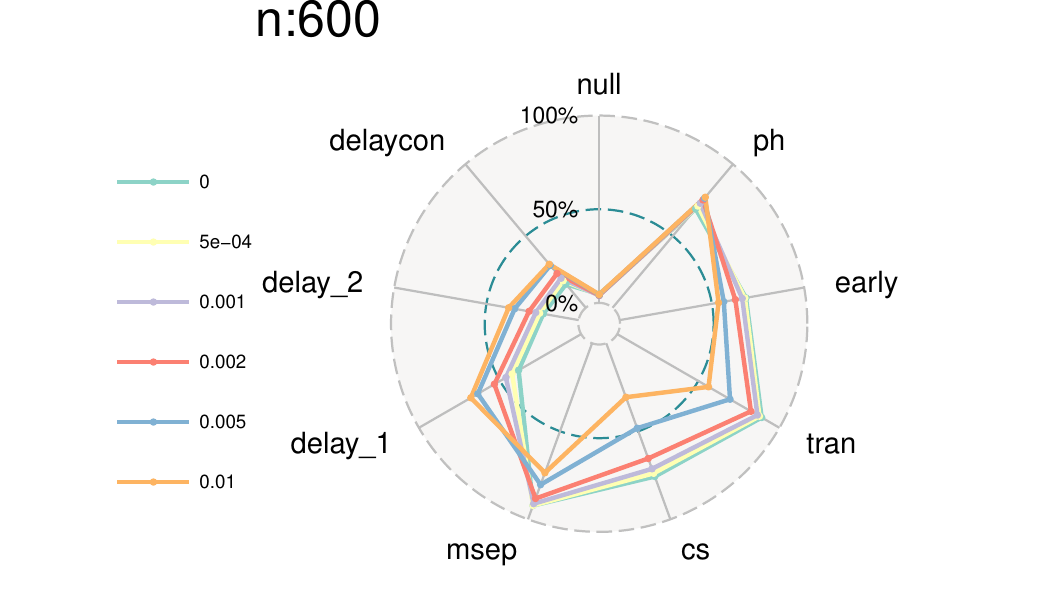}
  \caption{Impact of penalization on power of \texttt{AdaRMST.ct} for sample size $n = 600$.}
  \label{fig:pen_600}
\end{figure}

\begin{figure}
  \centering
  \includegraphics[width=1\linewidth]{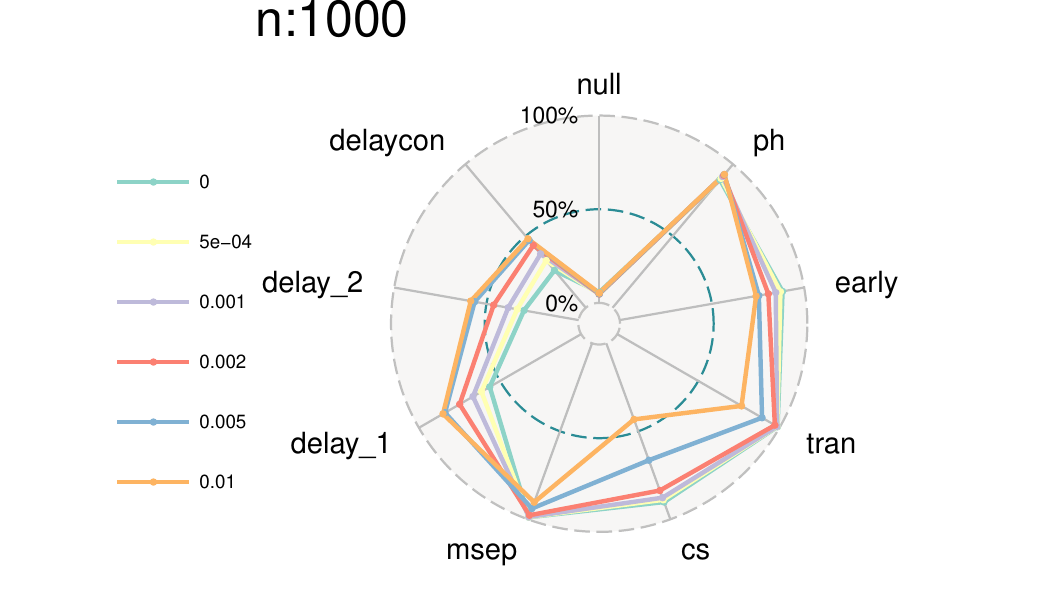}
  \caption{Impact of penalization on power of \texttt{AdaRMST.ct} for sample size $n = 1000$.}
  \label{fig:pen_1000}
\end{figure}

\end{document}